\newcommand{\citeyear}{\cite}
\setlist[itemize,1]{label=$\bullet$}
\setlist[itemize,2]{label=$-$}
\newcommand{\commentout}[1]{}
\newcommand{\T}{\mathcal{T}}
\renewcommand{\S}{\mathcal{S}}
\newcommand{\ACT}{\mathit{CT}}
\newcommand{\shortv}[1]{#1}
\newcommand{\fullv}{\commentout}
\newcommand{\arxiv}[1]{#1}
\newcommand{\disc}{\commentout}
\newcommand{\dist}{\mathit{dist}}
\newtheorem{theorem}{Theorem}[section]
\newtheorem{proposition}[theorem]{Proposition}
\newtheorem{definition}[theorem]{Definition}
\newtheorem{corollary}[theorem]{Corollary}
\newtheorem{lemma}[theorem]{Lemma}
\newcommand{\wbox}{\mbox{$\sqcap$\llap{$\sqcup$}}}
\def\msgbound{\mbox{$2^{O(N\log(N))}$}}
\def\mifunction{f}
\title{Implementing Mediators with Asynchronous Cheap Talk}
\author{Ittai Abraham}{VMware, Israel}{ittaia@cs.huji.ac.il}{}{}
\author{Danny Dolev}{The Hebrew University of Jerusalem\\ Jerusalem, Israel}{dolev@cs.huji.ac.il}{}{}
\author{Ivan Geffner}{Cornell University, Ithaca, NY}{ieg8@cornell.edu}{}{}
\author{Joseph Y. Halpern}{Cornell University, Ithaca, NY}{halpern@cs.cornell.edu}{}{}
\authorrunning{I. Abraham, D. Dolev, I. Geffner and J.\,Y. Halpern}
\subjclass{C.2.4 (Distributed systems), C.4 (fault tolerance)}
\keywords{asynchronous games, implementing mediators, cheap talk}
  \author{
Ittai Abraham\\
VMWARE\\
\texttt{ittaia@cs.huji.ac.il}
\and
Danny Dolev%
\thanks{Danny Dolev is Incumbent of the Berthold Badler Chair in
  Computer Science.} 
\\
School of Computer Science and Engineering\\
The Hebrew University of Jerusalem\\
Jerusalem, Israel\\
\texttt{dolev@cs.huji.ac.il}
\and
Ivan Geffner%
\thanks{Supported in part by NSF grants IIS-1703846.}\\
%ivan106
Cornell University\\
\texttt{ieg8@cornell.edu}
\and
Joseph Y. Halpern%
\thanks{Supported in part by NSF grants IIS-1703846 and 
IIS-0911036, and ARO grant W911NF-17-1-0592, and a grant 
from Open Philanthropy.}
%ivan106:
\\
Cornell University\\
%ivan106
\texttt{
halpern@cs.cornell.edu
%ivan106:
}
  }
\begin{document}

\maketitle

\begin{abstract}
A mediator can help non-cooperative agents obtain an equilibrium
that may otherwise not be possible.  We study the
ability of players to obtain the same equilibrium without a
mediator,
using only \emph{cheap talk}, that is, nonbinding pre-play communication.
Previous work has considered this problem in a synchronous
setting.
%here we consider the effect of asynchrony on the problem.
%We provide upper bounds on this problem.
Here we consider the effect of asynchrony on the problem, and
provide upper bounds for implementing mediators.
Considering asynchronous
environments
introduces new subtleties, including
exactly what solution concept is most appropriate and
determining what move is played if the
cheap talk goes on forever.
Different results
%dd1
%obtain
are obtained
depending on
whether the move after such ``infinite play'' is under the control of
the players or part of the description of the game.
\end{abstract}

%joe42: when I try to cut the next line, I get an error.  I'm not sure
%what's going on.
\thispagestyle{empty}
%joe42: we can't do this for DISC, unfortunately; it will be viewed as
%cheating 
%\newpage
%\setcounter{page}{1}

%%%%% STARTS HERE

%ivan106:
\disc{
%joe44
\smallskip
\noindent This paper is eligible for the best student paper award.
}

\section{Introduction}

%**************************************************

Having a trusted mediator often makes solving a problem much easier.
For example, a problem such as Byzantine agreement becomes trivial
with a mediator: agents can just send their initial input to the
mediator, and the mediator sends the majority value back to all the
agents, which they then output.  Not surprisingly, the question of
whether a problem in a multiagent system that can be solved with a
trusted mediator can be solved by just the agents in the system,
without the mediator, has attracted a great deal of attention in
both computer science (particularly in the cryptography community)
and game theory.  In cryptography, the focus
has been
%joe5: added refs
%on \emph{secure multiparty computation}.  Here it is assumed that
on \emph{secure multiparty computation} \cite{GMW87,yao:sc}.  Here it
is assumed that 
each agent $i$ has some private information $x_i$.  Fix functions
%jo44
%$f_1, \ldots, f_n$.  The goal is have agent $i$ learn $f_i(x_1,
$f_1, \ldots, f_n$.  The goal is to have agent $i$ learn $f_i(x_1,
\ldots, x_n)$ without learning anything about $x_j$ for $j \ne i$
beyond what is revealed by the value of $f_i(x_1, \ldots, x_n)$.
With a trusted mediator, this is trivial: each agent $i$ just gives
the mediator its private value $x_i$; the mediator then sends each
agent $i$ the value $f_i(x_1, \ldots, x_n)$.  Work on multiparty
computation provides conditions under which this can be done
%joe31
%in a synchronous system~\cite{GMW87,SRA81,yao:sc} and in an
in a synchronous system~\cite{bgw,GMW87,SRA81,yao:sc} and in an
asynchronous system~\cite{BCG93,BKR94}.
In game theory, the focus has been on whether an equilibrium in a
game with a mediator can be implemented using what is called
\emph{cheap talk}---that is, just by players communicating among
themselves.

In the computer science literature, the interest has been in
performing
multiparty computation in the presence of possibly malicious
adversaries, who do everything they can to subvert the computation.
On the other hand, in the game theory literature, the assumption is
%joe16
%that players have preference and seek to maximize their utility;
that players have preferences and seek to maximize their utility;
thus, they will subvert the computation iff it is in their best
interests to do so.
%joe42
%In other work \cite{ADGH06,ADH07} (denoted ADGH and ADH, respectively,
In \cite{ADGH06,ADH07} (denoted ADGH and ADH, respectively,
in the rest of the paper),
%joe42
%we have argued that it is
it was argued that it is
important to consider deviations by both rational players, who have
%joe44
%preferences and try to maximize them, and players we can view as
preferences and try to maximize them, and players that we can view as
malicious, although it is perhaps better to think of them as
%joe44
%rational players whose utilities are not known by the other players
%or mechanism designer. We considered equilibria that are
rational players whose utilities are not known by the mechanism
designer (or other players). 
ADGH and ADH considered equilibria that are
\emph{$(k,t)$-robust}; roughly speaking, this means that the
equilibrium tolerates deviations by up to $k$ rational players,
whose utilities are presumed known, and up to $t$ players with
%joe44
%unknown utilities. We proved tight bounds on the ability to
unknown utilities. Tight bounds were proved on the ability to
implement a $(k,t)$-robust equilibrium in the game with a
mediator using cheap talk in synchronous systems.
These bounds
%joe31
%depend on (a) the relationship between $k$, $t$ and $n$, the total
depend on, among other things, (a) the relationship between $k$, $t$ and $n$, the total
number of players in the system; (b) whether players know the exact
%joe44
%utilities of other players;
utilities of the rational players;
%joe31
\fullv{(c) whether there are broadcast channels
or just point-to-point channels; (d) whether cryptography is
%joe5
%available; and (e) whether the game has a $(k+t)$-\emph{punishment
%strategy}; that is, a strategy that, if used by all but at most $k+t$
available; and (e)}
\shortv{and (c)}
whether the game has a \emph{punishment
strategy}, where an $m$-punishment strategy is a strategy profile 
that, if used by all but at most $m$
players, guarantees that every player gets a worse outcome than they do
with the equilibrium strategy.
The following is a high-level overview of
%joe44
%results we proved in the synchronous setting that will be of most
results proved in the synchronous setting that will be of most
relevance here.
%joe10*
%ivan13: doesn't this imply that we can assume a 1-round mediator?
%joe11: I'm not sure that's true in the asynchronous setting.  Indeed,
%I'm not even sure what it means to talk about a 1-round mediator in
%the asynchronous setting.  I think that the real question is whether
%we can assume strong termination.  I suspect that we can, and it
%should follow from robustness.  What this would mean is that we can
%assume that after getting n-t messages that could all be from honest
%players (which describes their strategy), the mediator can send a
%message to all players and then terminate.  But this requires a
%careful proof.
%joe44
%For these results we assume that the communication with the mediator
For these results, we assume that the communication with the mediator
%joe16
%is bounded, and lasts for at most $N$ rounds, and that the mediator
is bounded, it lasts for at most $N$ rounds, and that the mediator
can be represented by an arithmetic circuit of depth $c$.
%joe5: we may need beginsmall again to save space
%ivan3 changed beginsmall for begin
%joe7: beginsmall saves space; we may still need it ...
%ivan9: I took it off because it didn't compile!
\begin{itemize}
%joe16
%\item[R1.] If $n > 3k + 3t$, then mediators can be implemented using cheap
\item[R1.] If $n > 3k + 3t$, then a mediator can be implemented using cheap
  talk; no punishment strategy is required, no knowledge of other
agents' utilities is required, and the cheap-talk protocol has
%joe10
%bounded running time that does not depend on the utilities.
%joe16*: changed this globally
%bounded running time $O(Nc)$, independent of the utilities.
bounded running time $O(nNc)$, independent of the utilities.

%joe16
%\item[R2.] If $n > 2k+3t$, then mediators can be implemented using cheap talk
\item[R2.] If $n > 2k+3t$, then a mediator can be implemented using cheap talk
  %joe5
  %  if there is a punishment strategy (and utilities are known) in
%joe16
%  if there is a $(k+t)$-punishment strategy (and utilities are known) in
%joe44
%  if there is a $(k+t)$-punishment strategy (and utilities are
  %  known);
    if there is a $(k+t)$-punishment strategy and the utilities of the
    rational players are known;
  %joe10
    %finite expected running time that does not depend on the utilities.
%joe16
  %  with expected running time $O(Nc)$.
the cheap-talk protocol has expected running time $O(nNc)$.
%joe16
%
%Note that in R2, unlike R1, the cheap-talk game may be unbounded,
%joe28: to save space
%(Note that in R2, unlike R1, the cheap-talk game may be unbounded,
(In R2, unlike R1, the cheap-talk game may be unbounded,
although it has finite expected running time.)

%\item If $n \le 2k+3t$, then we cannot, in general, implement a
%mediator, even if there is a punishment strategy and utilities are
%known.

%\item If $n > 2k+2t$ and we can simulate broadcast then, for all
%$\epsilon$, we can $\epsilon$-implement a mediator (intuitively,
%there is an implementation where players get utility within $\epsilon$
%of what they could get by deviating)
%using cheap talk, with
%bounded expected running time
%that does not depend on the utilities
%in the game or on $\epsilon$.

%\item If $n \le 2k+2t$, we cannot, in general, $\epsilon$-implement a
%mediator using cheap talk even if we have broadcast channels.
%Moreover, even if we assume cryptography,
%polynomially-bounded players,
%a public-key infrastructure (PKI),
%and broadcast channels,
%we cannot, in general,
%$\epsilon$-implement a mediator
%using cheap talk with expected running time that
%does not depend on the utilities in the game
%or $\epsilon$;
%if there is a punishment strategy, then we still
%cannot, in general, $\epsilon$-implement a mediator using cheap talk with
%expected running time independent of the utilities in the game.
%joe16: cut for abstract
\fullv{
\item[R3.] If $n > k+3t$ then, assuming cryptography and polynomially-bounded
players, we can $\epsilon$-implement a mediator using cheap talk
(intuitively,
%joe14*: this is not quite right.
there is an implementation where players get utility within $\epsilon$
of what they could get by deviating).
}
%ivan3 changed endsmall for end
\end{itemize}

%joe42
%In ADH we present lower bounds that match the upper bounds above.
%joe44
\arxiv{
In ADH, lower bounds are presented that match the upper bounds above.
Thus, for 
%joe44
%example, we show that $n > 3k+3t$ is necessary in R1; if $n \le 3k+3t$,
example, it is shown that $n > 3k+3t$ is necessary in R1; if $n \le 3k+3t$,
then we cannot implement a mediator in general if we do not have a
punishment strategy or if the utilities are unknown.
The proofs of
%itt4:
the upper bounds}
\disc{The proofs of R1 and R2}
%joe44
%made heavy use of the fact that we were working in a synchronous setting. 
make heavy use of the fact that the setting is synchronous.
Here
we consider the impact of asynchrony on these results.
Once we introduce asynchrony, we must revisit the question of what
it even means to implement an equilibrium using cheap talk.  Notions
like
(Bayesian)
Nash equilibrium implicitly assume that all uncertainty
is described probabilistically.
Having a probability is necessary to talk
about an
agent's expected utility, given that a certain strategy profile is
played. If we were willing to put a distribution on how long
messages take to arrive and on when agents are scheduled to move,
then we could apply notions like Nash equilibrium without
difficulty. However, it is notoriously difficult to quantify this
uncertainty.
The typical approach used to analyze algorithms in the presence of
uncertainty that is not quantified probabilistically is to assume
%joe9: nature -> the environment globally
%that all the non-probabilistic uncertainty is resolved by nature,
that all the non-probabilistic uncertainty is resolved by the environment 
according to some strategy.  Thus, the environment uses some strategy to
decide
when each agent will be allowed to play and how long each message takes
to be delivered.
The algorithm is then proved correct no matter what
%joe9
%strategy nature is following in some class of strategies.  For
%example, we might restrict the environment's strategy to being \emph{fair},
strategy the environment is following in some class of strategies.  For
example, we might restrict the environment's strategy to being \emph{fair},
so that every agent eventually gets a chance to move.  (See
\cite{HT} for a discussion of this approach and further
references.)

We follow this approach in the context of games.
Note that once we fix the environment's strategy, we have an ordinary
game, where uncertainty is quantified by probability.  In this setting,
we can consider what is called \emph{ex post equilibrium}.  A strategy
is an ex post equilibrium if it is an equilibrium
no matter what strategy the environment uses.
Ex post equilibrium is a strong notion, but,
as we show by example, it can often be attained with the help of a
%joe44
%mediator, and we believe it is
mediator.  We believe that it is
the closest analogue to Nash equilibrium in an
asynchronous setting.%
%joe30: cut for abstract
\fullv{
\footnote{Because of the difficulty of attaining ex post equilibrium,
other solution concepts
have been considered in non-probabilistic settings,
such as \emph{minimax-regret equilibrium} \cite{BH04} and
\emph{maximin equilibria} \cite{AB06}, where each player uses a strategy that
%joe28
%is a best-response in a minimax-regret (resp., maximin) sense to the
is a best response in a minimax-regret (resp., maximin) sense to the
choices of the other players.}
}

Another issue that plays a major role in an asynchronous setting is
what happens if
%joe10
%the joint strategies of players cause some player's cheap-talk phase
%to last indefinitely without explicit terminating and making a move
the strategies of players result in some
%joe31
%player talking
players being \emph{livelocked}, talking
indefinitely without making a move
in the underlying game,
%dd1
%This may happen their if a player is
%joe31
%This may happen if a player is
%engaged in an infinite \emph{livelock} message excange with other
%players or if players are \emph{deadlocked},
or in some players being \emph{deadlocked},
waiting indefinitely without moving in the underlying game.
%joe43
%We focus here on two approaches for dealing with this problem.
%
We consider two approaches for dealing with this problem.
One way to decide what action to assign to a player that fails to
make a decision in the cheap-talk phase is the \emph{default-move
approach}. In this approach, as part of the description of the game,
there is a default move for
each player
%dd1 wouldn't a comma here help reading the sentence
which is imposed if that
player fails to explicitly make a move in the cheap-talk
phase.
%joe43: line shaving
%Aumann and Hart \citeyear{AH03} considered a different approach.
%In their approach, which we henceforth call the AH approach, a player's
Aumann and Hart \citeyear{AH03} considered a different approach,
which we henceforth call the AH approach, where a player's
strategy in the underlying game is a
function of the (possibly infinite) history of the player in the
cheap-talk phase.  We can think of this almost as a player writing a
%joe43
%will, describing what he would like to do if the game ends before he
%has had a chance to move, for every possible history.
will, describing what he would like to do (as a function of the
history) if the game ends before he has had a chance to move.

%joe44
\arxiv{
We believe that both the AH approach and the default-move approach are
%joe30
%both reasonable in different contexts.
reasonable in different contexts.
%joe16
%The AH approach makes sense if the agent can leave a ``will''; i.e.,
The AH approach makes sense if the agent can leave 
%joe30
%instructions as to what to do in the event of an infinite history
%in the cheap-talk phase, that will be carried out by an ``executor''.
instructions that will be carried out by an ``executor'' if the
cheap-talk game deadlocks.
But if we consider a game-theoretic
variant of Byzantine agreement, it seems more reasonable to say that if
%joe16
%a ``malicious'' agent can prevent an agent from making a move in finite
a malicious agent can prevent an agent from making a move in finite
time, the agent should not get a chance to make a move after the
cheap-talk phase has ended.
}

Our results show that, in the worst case, the cost of asynchrony is
an extra $k+t$ in the bounds on $n$,
but we can sometimes save $k$ or even $k+t$ if there is a
punishment strategy or if we are willing to tolerate an $\epsilon$
``error''.  For example, with both the AH approach and the default-move
approach, if the utilities are not known, we can implement a mediator
%joe16
%using asynchronous cheap talk if $n> 4k+4t$.  Thus, we compared to R1,
using asynchronous cheap talk if $n> 4k+4t$.  Thus, compared to R1,
we need an extra $k+t$.
%joe31:
However, if we are willing to accept a small probability of error, so
that rather than implementing the mediator we get only an
$\epsilon$-implementation, and are also willing to accept
$\epsilon$-(k,t)-robustness (which, roughly speaking, means that
players get within $\epsilon$ of the best they could get), then we can
do this if $n > 3k+3t$, again, using both the AH approach and the
default-move approach. 

%joe31**:  It seems we need to know the utility in both cases. Cut this
\commentout{
In going from R1 to R2, we must assume both that utilities are known and
that there is a punishment strategy.  In the synchronous setting, there
is no advantage in making just one of these assumptions; we continue
to require $n > 3k + 3t$.
In the asynchronous case, we get
different results when we uncouple the assumptions.  For example, with
the AH approach, if there is a punishment strategy, then we can
implement a mediator using asynchronous cheap talk if $n > 3k+4t$,
even without knowing the exact utilities.
Intuitively, what happens is that a good player needs to wait to get
some information; if the message does not arrive, then the waiting
player instructs his executor to carry out a punishment in his will.
Having a punishment does not seem to help in the default-move approach
(we still require that $n > 4k+4t$) unless the default move is a
punishment; if it is, then we can again implement a mediator if $n > 3k + 4t$.  If there is no punishment strategy but the utilities are
known, then with both the AH and default-move approach, we can
$\epsilon$-implement a mediator if $n > 3k+3t$
Finally, if we have both a punishment
strategy and utilities are known (as in R2), then we can
$\epsilon$-implement a mediator with the AH approach
(and the default-move approach if the default induces punishment)
if $n > 2k +4t$.}
Just as in the synchronous case, we can do better if we assume that
there is a punishment strategy and utilities are known (as in R2).
Specifically, with the AH approach, we can implement a mediator 
if $n > 3k +4t$ 
(compared to $n > 2k + 3t$ in the synchronous case, and can
$\epsilon$-implement a mediator if $n > 2k+3t$.  We use the punishment 
to deal with deadlock.   If a good player is waiting for a message
that never
arrives, then the waiting player instructs his executor to
carry out a punishment in his will.  Having a punishment does not seem
to help in the default-move approach 
unless the default move is a punishment; if it is, then we can
get the same results as with the AH approach.  

If there is a punishment strategy, these results significantly improve
those of Even, Goldreich, and Lempel \citeyear{EGL85}.  They provide a
%joe7*: running time doesn't make sense in an asynchronous setting;
%the number of messages does. But we need to check if the statement
%about message is true.   Maybe Ittai knows
%protocol with similar properties, but the running is $O(1/\epsilon)$;
%with a punishment strategy, the expected running time of our protocol is
%constant, independent of $\epsilon$.
%ivan10: I'm confused... because we are not bounding the number of
%messages in our construction. 
%In fact, building the circuit from the interpolating polynomial might
%be exponential!
%joe8: As I understand it, suppose we fix the number of agents.  How
%does the complexity of the mediator change as a function of \epsilon?
%In particular, how does the size of the circuit change as a function
%of \epsilon?  If the size of the circuit is independent of \epsilon,
%then the statement holds.  Even if the circuit is ``big'', its size
%is a (big) constant.  On the other hand, if the size of the circuit
%depends on \epsilon, we should remove this.
%joe8: That's right, the circuit does not depend.
protocol with similar properties, but the expected number of messages
sent is $O(1/\epsilon)$;
with a punishment strategy, a bounded number of messages is sent, with
the bound being independent of $\epsilon$.

%joe43: it's not clear that we need this
\commentout{
To prove the upper bounds, we use the same general strategy as in ADGH,
which in turn goes back to the multiparty computation protocol of
%joe5
Goldreich, Micali, and Wigderson
\citeyear{GMW87}.  We first simulate the computation of a circuit that
computes the values sent by the mediator, then share these values using
a secret-sharing protocol.  However, since we are dealing with rational
players, we must ensure that the secret-sharing protocol provides
appropriate incentives to the players to share their shares of the
secret.
}

%joe7*: cut, since this isn't how we use punishment.  
\commentout{
For the circuit simulation, we could make use of the known protocols for
asynchronous multiparty computation \cite{BCG93,BKR94}.  But for these
protocols to provide perfect security,
we require that $n > 4t$ \cite{BKR94}.
While we do make use of these protocols for some of our
results (the one where the right-hand side has a $4t$ term), we can
sometimes do better if there is a punishment strategy.
we can
essentially force enough synchrony so as to be able to use synchronous
multiparty computation, for which we require only that $n> 3t$.
}

%ivan3: I would add here above a couple of lines saying that we have
%found results about 
%of secure computation with rational players that may be of
%independent interest.
%joe5: that would be good, but you have to say something about what
%the results are.

%joe16: cut this, because we don't do it in the current paper
\fullv{
To do rational secret sharing in the asynchronous setting, we must
overcome a significant difficulty: it is possible that a player has
already received enough shares to compute the secret before it has
sent its share.
If $n$ is sufficiently large compared to $k$ and $t$, we can overcome
this difficulty using techniques already used in ADGH, which involve
using Reed-Solomon unique decoding \cite{J76} and check vectors, as in
Rabin and Ben-Or's \citeyear{RB89} \emph{information-checking protocol}.
However, with the AH approach, we can do better if there is a punishment
%joe6*: do we actually use this approach?  If so, we should stress it
%and point back to the intro, since we make a fuss about it here.  if
%not, we should cut this.
%ivan9: for Theorems 1,2,3,4 we don't need it, only the default
%punishment move. For Theorem 5 it is important, though!   
strategy, using a novel two-phase rational secret-sharing protocol.
Every round of this protocol has two phases. In the first phase a
value is shared amongst the players, but no one knows if this value
is the real secret or just a random number. In the second phase a
coin is shared
that indicates if the value seen in the first phase of that round is
the real secret. Players have no motivation to deviate in the
first phase since they will probably learn nothing and they may get
caught and punished. Players have no motivation to deviate in the
second phase either.  If a player deviates by sending an incorrect
value, then the player is likely to get caught and punished if the value
in the first phase is random.  On the other hand, if a player deviates
by sending nothing, the other players
will get enough values to learn the secret anyway; if a player deviates
by lying about the value, a problem will be detected, and the other players
will know
that the value sent in the first phase must have been the true value,

The synchronous model used in ADGH and ADH
%dd2 why we need the word really here?
%really
involves two
%joe5: colon
%assumptions; (1) all messages sent in round $k$ are received before the
assumptions: (1) all messages sent in round $k$ are received before the
beginning of round $k+1$, and (2) messages are sent and received
simultaneously.  We can decouple these assumptions to get a deeper
understanding of the power of asynchrony. Suppose that we consider a
model where every agent is scheduled to send
%dd2
%a message
messages
in round $k$, and
%dd2
%the message is received
the messages received
before the beginning of round $k+1$, but the
order in which the agents are scheduled and the messages are received is
determined by the environment.  Thus, for example, it is possible that an agent
will receive every other agents' round $k$ message before sending his
round $k$
%dd2
%message.
messages.
We call this the \emph{partially synchronous}
setting;%
\footnote{This is the traditional synchronous model in a distributed system.}
%dd2 added the above footnote
it is a generalization of what AH call \emph{polite cheap
talk}, where in each round, only one agent talks.

Somewhat surprisingly,
%joe5
as we show in the full paper,
%dd2 added
using the techniques presented
%joe5
%in the current paper
here
%dd2 we can refer explicitly to the two phase technique
(particularly the ``two-phase'' technique described above) we
%dd2 added
can
get the same bounds in terms of $k$
and $t$ in the partial synchronous setting as in the fully synchronous
setting of ADGH.  That means that the bounds of ADGH still hold
%which were obtained in the simultaneous synchronous cheap talk
%model. This strengthens the upper bounds by significantly reducing
%the requirement to use simultaneous synchronous communication. We
%believe this result may have practical importance since obtaining
%absolute simultaneous synchrony seems challenging in today's
%Internet. On the other hand, our polite cheap talk results seem
%particularly well suited for
in large-scale networks, where it is
reasonable to assume that message delays have some uncertainty but
there is a known upper bound on the delay.
}
%joe16: \end{fullv}

%ivan3 changed sectionref by ref + added 'Section'
%joe16
\fullv{
The rest of this paper is organized as follows.  In
Section~\ref{sec:definitions}, we review all the relevant
definitions.
In Section~\ref{sec:upperbounds}, we
%joe2
%state the upper bound.
%and point out how it improves the statement of ADGH.
%itt4:
%  Finally, in \sectionref{sec:lowerbounds}, we prove the lower bounds.
we give our upper bound results.
}
%joe16: \end{fullv}
%joe2*: Should we give one lower bound, to show that AH vs. default
%matters?  We could also have a section talking about alternate models.
%itt6: if the lower bound talk about Byzantine agreement and is similar to
% the games we have in the stoc paper then I feel that even one bound would
% significantly weaken the other paper. We can just say that we have lower bounds, point to the TR and say that asych lower bounds will be part of another paper

\section{Definitions}\label{sec:definitions}

%joe16: switched notation because of lipics stle
%\paragraph{Asynchronous games, mediator games, and cheap talk:}
{\bf Asynchronous games, mediator games, and cheap talk:}
We are interested in implementing mediators.  Formally, this means
we need to consider three games: an \emph{underlying game} $\Gamma$,
an extension
%dd1 moved d to the left
$\Gamma_{d}$ of $\Gamma$ with a mediator, and
%itt4:
%an extension $\Gamma_{\ACT}$ of $\Gamma$ with asynchronous
%joe17
%an extension $\Gamma_{\ACT}$ of $\Gamma$ with asynchronous
an extension $\Gamma_{\ACT}$ of $\Gamma$ with (asynchronous)
cheap-talk.
We assume that $\Gamma$ is a \emph{normal-form Bayesian
game}: each player has a type $t$ taken from some type space $\T_i$,
such that there is a commonly known distribution on $\T \subseteq
\T_1 \times \cdots \times \T_n$, the set of types; each player $i$
chooses an action $a \in A_i$, the set of actions of agent $i$;
player $i$'s utility $u_i$ is determined
%joe16
%by the tuple of types and actions taken.
by the type profile of the players and the actions they take.
A strategy for player $i$ in the Bayesian game is
just a function $T_i$ to $A_i$, which tells player $i$ what to do,
given his type. If $A = A_1 \times \cdots \times A_n$, then a 
%joe10
%strategy can be viewed as a function $\vec{\sigma}: \T \rightarrow
strategy profile $\vec{\sigma} = (\sigma_1, \ldots, \sigma_n)$ can be
viewed as a function $\vec{\sigma}: \T \rightarrow 
\Delta(A)$ (where, as usual, $\Delta(X)$ denotes the set of
probability distributions on $X$).

The basic notions of a game with a
mediator, a cheap-talk game, and implementation are standard in the
game-theory literature.   However, since we consider them in an
%joe5
%asynchronous setting, we must modify the definition somewhat.
asynchronous setting, we must modify the definitions somewhat.

%joe31
%As a first step, we define \emph{asynchronous games}.
We first define \emph{asynchronous games}.
In an asynchronous game, we assume that players alternate making moves
with the environment---first the environment moves, then a player
moves, then the environment 
%joe6: since we want to focus on the asynchronous case
%moves, and so on.  Nature's move consists of  a nonempty set $N'$ of
%players to move next and a set of messages in transit to each player
%$i \in N'$ that will be 
moves, and so on.  The environment's move consists of choosing a player $i$  
to move next and a set of messages in transit to $i$ 
that will be  delivered just before $i$ moves (so that $i$'s move can depend on
%joe16
%the messages $i$ receives).  The environment is subject to two constraints: she
the messages $i$ receives).  The environment is subject to two constraints: 
%joe6 
%must eventually deliver every message and, for all time $m$ and players
all messages sent must eventually be delivered and, for all times $m$
and players 
$i$, if $i$ is still playing the game at time $m$, then there must be
%joe16
%some time $m' \ge m$ that $i$ is chosen by the environment.  We can
some time $m' \ge m$ that $i$ is chosen to move.  We can
describe an asynchronous game by a game tree.  Associated with each
non-leaf node or 
history is either a player---the player whose move it is at that
node---or the environment (which can make a randomized move).  The nodes
where a player $i$ moves are further partitioned into
\emph{information sets}; intuitively, these are nodes that player
$i$ cannot tell apart.  We assume that the environment has complete information,
so that the environment's information sets just consist of the
singletons.
%joe16
%where the environment moves.
A \emph{strategy} for player $i$ is
a (possibly randomized) function from $i$'s information sets to
actions; we can similarly define a strategy for the environment.
%joe14
%joe16: let's just assume the input is empty
%We can view a strategy for the environment as consisting of two
%components: an input choice for each player and a \emph{scheduler}
%that determines when players are scheduled and when messages are
%delivered.
We can essentially view the environment strategy as defining a
scheduler
%joe31
(and thus we sometimes refer to an environment strategy as a scheduler).
%joe16
\fullv{
The description 
of an asynchronous game includes a set of \emph{allowable strategies}
for the environment.  All allowable strategies have the properties
that all messages are eventually delivered and that no player is blocked
forever from moving, however, we could put additional restrictions on
allowable strategies.  For example, if the only allowable strategy is
the one where all players move at every step and all messages sent at
%joe6
%step $m$ are delivered at step $m+1$, then we have a synchronous game.
time $m$ are delivered by time $m+1$, then we have a synchronous game.
Similarly,
if time can be divided into rounds so that all players are scheduled in
each round, and all messages sent in round $m$ are delivered before the
beginning of round $m+1$, then we have a partially synchronous game.
}
%joe16: \end{fullv}

%joe9
%For our results, we start with a Bayesian game $\Gamma$ in normal
For our results, we start with an $n$-player Bayesian game $\Gamma$ in normal
form (called the \emph{underlying game}), with $\{1,\ldots,n\}$
being the set of players, and then consider two types
of games that \emph{extend} $\Gamma$.  A game $\Gamma'$ extends
$\Gamma$ if the players have initial types from the same type space
as $\Gamma$, with the same distribution over types; moreover, in
each path of the game tree for $\Gamma'$, the players send and
receive messages, and perform at most one action from $\Gamma$. In a
history where each player makes a move from $\Gamma$, each player
gets the same utility as in $\Gamma$ (where the utility is a
function of the moves made and the types).  That leaves open the
question of what happens in a complete history of $\Gamma'$ where
some players do not make a move in $\Gamma$.  As we suggested in the
introduction, we consider two approaches to dealing with this.  In
the first approach, we assume that the description of $\Gamma'$
includes a function $M_i$ for each player $i$ that maps player $i$'s
%joe44
%type to a move in $\Gamma$.  In an infinite history where $i$ has
type to a move in $\Gamma$.  In an infinite history $h$ where $i$ has
type $t$ and does not make a move in $\Gamma$, $i$ is viewed as
having made move $M_i(t)$.  We can then define each player's utility
in $h$ as above.  This is the \emph{default-move approach}.  In the
%joe5
%AH approach \cite{AH03}, we extend the notion of strategy so that
AH approach, we extend the notion of strategy so that
$i$'s strategy in $\Gamma'$ also describes what move $i$ makes in
the underlying game $\Gamma$ in
%itt4:
%an
any
%joe44
%infinite history where $i$ has not made a move in $\Gamma$.
infinite history $h$ where $i$ has not made a move in $\Gamma$.
%joe16
%Notice that in the AH approach, $i$'s move in $h$ is under $i$'s
In the AH approach, $i$'s move in $h$ is under $i$'s
control; in the default-move approach, it is not.

%joe2: said it earlier
\commentout{
We believe that both the AH approach and the default-move approach are
both reasonable in different contexts.  As we said in the introduction,
the AH approach makes sense if the agent can leave a ``will''; i.e.,
instructions as to what to do in the event of an infinite history, that
will be carried out somehow.  But if we consider a game-theoretic
variant of Byzantine agreement, it seems more reasonable to say that if
a ``malicious'' agent can prevent an agent from making a move in finite
time, the agent should not get a chance to make move after the
cheap-talk phase has ended.
}

%joe7
%Given an underlying game $\Gamma$, we will be interested in two
%joe27*: Note for future reference that the underlying game is a
%one-shot game.  There is no scheduler.
Given an underlying Bayesian game $\Gamma$ (which we assume is
synchronous---the players move simultaneously), we will be interested
in two 
types of extensions.  A \emph{mediator game} extending $\Gamma$ is
%joe5
%joe6: undid former change
%one  where players can send messages to and receive messages from a
%a synchronous game  where players can send messages to and receive
an asynchronous game  where players can send messages to and receive
messages from a
mediator (who can be viewed as a trusted third party) as well as
%joe5
%making a move in $\Gamma$, but they do not send messages to each other
making a move in $\Gamma$; ``good''
%joe6
or ``honest''
players do not send messages to each
other, but ``bad'' players (i.e., one of the $k$ rational deviating players or
one of the $t$ ``malicious'' players with unknown utilities) may send
%joe6
%messages to each other as well as to the mediator).
messages to each other as well as to the mediator.
%joe15*:
We assume that the space of possible messages that can be sent in a
mediator game is fixed and finite.

In an asynchronous cheap-talk game extending $\Gamma$, there is no
%joe44
%mediator, but players can send messages to each other via asynchronous
mediator.  Players send messages to each other via asynchronous
channels, as 
well as making a move in $\Gamma$. We assume that each pair of agents
communicates 
%joe44
%by using \emph{asynchronous  secure private channels} (the fact
over an \emph{asynchronous  secure private channel} (the fact
that
the channels
are secure means that an adversary cannot eavesdrop on
conversations between the players).
%joe5: unnecessary for this paper.
% and/or by using an
%%itt4:
%\emph{asynchronous broadcast channel},
%%
%which allows a player to broadcast a
%message to all the other players. When a player receives a broadcast
%message, he knows that all other players will eventually receive the
%same message (or have already).
%dd1 added
%joe3
%Note
We assume
that each player can identify the sender of each message.
%dd1 should we stress that one can't impersonate another?
%joe7:
%joe44
\arxiv{
Finally, we assume that in both the mediator game and the cheap-talk
game, when a player is first scheduled, it gets a signal that the game
%joe44
%has started (either an external signal from the environment, or due to getting
has started (either an external signal from the environment, or 
a game-related message from another player or the mediator).}

%joe16: put ``solution concept'' before ``implementation'' before
%``termination'', rather than the other way around
%joe2
%{\bf Solution concept}
%joe6
%{\bf Solution concept:}
%joe30: don't need this here, since I'm going to pull back Appendix A
\commentout{
{\bf Solution concepts:}
%joe7
%In a standard game, a joint strategy is a Nash equilibrium if no
%player can gain any advantage by using a different strategy, given
A strategy profile is an $\epsilon$-Nash equilibrium if no
player can gain more than $\epsilon$ by using a different strategy, given
that all the other players do not change their strategies. In
%joe7
%ADGH, we defined equilibria that are \emph{$(k,t)$-robust};
ADGH, we defined equilibria that are $\epsilon$-\emph{$(k,t)$-robust};
roughly speaking, this means that the equilibrium tolerates
deviations by up to $k$ rational players, whose utilities are
presumed known, and up to $t$ players with unknown utilities.
%joe16
%joe29
%We also considered a \emph{strong} $\epsilon$-$(k,t)$-robustness.
We also considered \emph{strong} $\epsilon$-$(k,t)$-robustness.
For a strategy to be robust says that it not the case that all the
players in the coalition gain by deviating; strong robusteness
requires that not even one of the players in the coalition can gain by
deviating.  
Here
we further extend this notion so that it applies to asynchronous
games,
%joe2
%joe7
%by requiring that a equilibrium be $(k,t)$-robust (as defined in
by requiring that an equilibrium be $\epsilon$-$(k,t)$-robust (as defined in
ADGH), no matter what strategy is used by the environment.
See
%joe29
%Section~\ref{sec:def-more}
%in the appendix for the formal definitions and intuition;
%%joe28
%the definition is briefly restated in Section~\ref{sec:adversary}. 
Section~\ref{sec:adversary} for a formal definition and
Appendix~\ref{sec:def-more} for more details and intuition. 
%joe7: I have now added the formal definitions
}

{\bf Implementation:}
%joe16
In the synchronous setting, a strategy profile $\vec{\sigma}'$ in a
cheap-talk $\Gamma_{\ACT}$ extending an underlying game $\Gamma$
\emph{implements} a strategy $\vec{\sigma}$ in  a mediator game
$\Gamma_d$ extending $\Gamma$ if $\vec{\sigma}$ and
$\vec{\sigma}'$ correspond to the same strategy in $\Gamma$; that is,
they induce the same function from $\T$ to $\Delta(A)$.
%joe44
%The notion of implementation becomes a little more complicated in an
The notion of implementation is more complicated in an
asynchronous setting, because the probability on action profiles
also depends on the environment strategy.  Because $\Gamma_{\ACT}$
and $\Gamma_d$ are quite different games, the environment's strategies
in $\Gamma_{\ACT}$ are quite different from those in $\Gamma_d$.  So
we now say that $\vec{\sigma}'$ implements $\vec{\sigma}$ if the set
of distributions on actions profiles in $\Gamma$ induced by $\vec{\sigma}$
and all possible choices of environment strategy is the same as that
induced by $\vec{\sigma}'$ and all possible choices of environment strategy.
%joe16
%Suppose that $\Gamma'$ and $\Gamma''$ are two extensions of an
%underlying game $\Gamma$ with a type space $\T$.
More precisely, let $\S_{\Gamma',e}$
and $\S_{\Gamma'',e}$ denote the
%joe16
%allowable strategies for the environment or the \emph{environment}
%joe20
%the set of environment strategies
%joe31*: do we have a problem with implementation if we allow
%randomized strategies?  I don't think anything changes in our proof
%the set of (deterministic) environment strategies
the set of environment strategies 
in $\Gamma'$ and $\Gamma''$, respectively.
%joe16
%Given $s \in \S_{\Gamma',e}$, a strategy profile for the players in
%$\Gamma'$ induces a function $\vec{\sigma}_s$ from
%$\T$ to $\Delta(A)$, just like a strategy profile in $\Gamma$. We say
%joe44
%Given $\sigma_e \in \S_{\Gamma',e}$ and
A strategy $\sigma_e \in \S_{\Gamma',e}$ and 
a strategy profile $\vec{\sigma}$ for the players in
$\Gamma'$ together induce a function $(\vec{\sigma},\sigma_e)$ from
$\T$ to $\Delta(A)$.
%joe44
%We say that a strategy profile $\vec{\sigma}'$ in $\Gamma'$
A strategy profile $\vec{\sigma}'$ in $\Gamma'$
\emph{implements} a 
%joe6
%joints strategy $\vec{\sigma}''$ in $\Gamma''$ if the set of
%functions from $\T$ to $\Delta(A)$ are the same; that is,
strategy profile $\vec{\sigma}''$ in $\Gamma''$ 
%joe16
%if $\{\vec{\sigma}'_s: s \in \S_{\Gamma',e} \} = \{\vec{\sigma}''_s:
%joe18*: I think this definition might be a little too strong.  It
%might be enough to have subset here.  
if $\{(\vec{\sigma}',\sigma_e'): \sigma_e' \in \S_{\Gamma',e} \} =
\{(\vec{\sigma}'',\sigma_e''): \sigma_e'' \in \S_{\Gamma'',e} \}$.
%joe20: this needs explanation
Since the outcome that arises if the players use a particular
strategy may depend on what the environment does, 
this says that the set of outcomes that can result if the players
%joe25
%use $\vec{sigma}'$ is the same as the set of outcomes that can result
use $\vec{\sigma}'$ is the same as the set of outcomes that can result
if the players use $\vec{\sigma}''$.
%joe31**: Ivan, note the next sentence!  I don't believe this for
%implementation, but I think it's OK for robustness.  I cut it from here
%(The restriction to
%deterministic environment strategies is without loss of generality; if
%we can implement every deterministic environment strategy, we can clearly
%implement every randomized environment strategy.)

%ivan18
%joe16*: indeed, it seems that's what we need.  Can you add it?
%I would suggest eps-implements means that for each sigma'_s, with s
%in S_{\Gamma',e}, there exists a  
%\sigma''_s with s in S_{\Gamma'',e} such that their distance is < eps
%(in norm 1, for example), and %reciprocally, that for each sigma''_s,
%with s in S_{\Gamma'',e}, there exists a  
%\sigma'_s with s in S_{\Gamma',e} such that their distance is < eps.
%joe17*: we still need to define \eps-implement
%ivan21:
%joe20: again, you need to slow doesn here
%We also say that $\vec{\sigma}'$ $\epsilon$-\emph{implements}
For some of our results, we cannot get an exact implementation; there
may be some error.
%joe20: you should define this first:
Given two discrete distributions $\pi$ and $\pi'$ on some space $S$,
%joe44
%say that
the \emph{distance} between $\pi$ and $\pi'$, denoted
%joe44
%$\dist(\pi,\pi')$ is at most $\epsilon$ if $\sum_{s \in S}|\pi(s) -
$\dist(\pi,\pi')$, is at most $\epsilon$ if $\sum_{s \in S}|\pi(s) -
\pi'(s)| \le 
\epsilon$.
%joe29: not quite right 
%Note that, given a strategy $\vec{\sigma}$ and a type profile $\vec{t}
%\in \T$, $\vec{\sigma}(\vec{t})$ is a distribution over actions.
%joe44
%As we observed earlier, in the mediator game and the cheap-talk game,
\arxiv{As we observed earlier, in the mediator game and the cheap-talk game,}
Recall that 
a strategy profile $\vec{\sigma}$ for the players and a strategy
$\sigma_e$ for the environment together induce a mapping from type
profiles to $\Delta(A)$.
%joe29: added
We lift the notion of distance to such function by defining
%joe31
%$\dist((\vec{\sigma},\sigma_e)),(\vec{\sigma}',\sigma_e')) = \max_{t
%  \in \T} \dist((\vec{\sigma},\sigma_e)(t),(\vec{\sigma}',\sigma_e')(t))$.
$\dist((\vec{\sigma},\sigma_e)),(\vec{\sigma}',\sigma_e')) = \max_{\vec{x}
  \in \T} \dist((\vec{\sigma},\sigma_e)(\vec{x}),(\vec{\sigma}',\sigma_e')(\vec{x}))$.
Say that $\vec{\sigma}'$ $\epsilon$-\emph{implements}
$\vec{\sigma}''$ if 
\begin{itemize}
%joe20: made lots of minor changes
%\item For every $\sigma_e' \in \S_{\Gamma',e}$ there exists
%  $\sigma_e'' \in \S_{\Gamma'',e}$ such that for all $t \in \T$ we
%  have that $d_1((\vec{\sigma}', \sigma'_e)(t), (\vec{\sigma}'',
%  \sigma''_e)(t)) < \epsilon$. 
%\item For every $\sigma_e'' \in \S_{\Gamma'',e}$ there exists
%$\sigma_e' \in \S_{\Gamma',e}$ such that for all $t \in \T$ we have
%that $d_1((\vec{\sigma}'', \sigma''_e)(t), (\vec{\sigma}',
%\sigma'_e)(t)) < \epsilon$. 
%joe4
  %\item For all $\sigma_e' \in \S_{\Gamma',e}$ there exists
  \item for all $\sigma_e' \in \S_{\Gamma',e}$ there exists
  $\sigma_e'' \in \S_{\Gamma'',e}$ such that
%joe31
%  for all $\vec{x} \in \T$,
%joe29*: changed < to \le and corrected
%  we have that $\dist((\vec{\sigma}')\vec{x}, \sigma'_e),
  %  (\vec{\sigma}''(\vec{t}), \sigma''_e)) < \epsilon$.
     $\dist((\vec{\sigma}', \sigma'_e),
%joe44
%    (\vec{\sigma''}, \sigma''_e)) \le \epsilon$. 
%\item For all $\sigma_e'' \in \S_{\Gamma'',e}$ there exists $\sigma_e'
        (\vec{\sigma''}, \sigma''_e)) \le \epsilon$; and 
\item for all $\sigma_e'' \in \S_{\Gamma'',e}$ there exists $\sigma_e'
%joe29: again, not quite
%  \in \S_{\Gamma',e}$ such that for all $\vec{t} \in \T$ we have that
%  $\dist((\vec{\sigma}''(\vec{t}), \sigma''_e),
  %  (\vec{\sigma}'(\vec{t}), \sigma'_e))   < \epsilon$. 
  \in \S_{\Gamma',e}$ such that
%joe31
%  for all $\vec{x} \in \T$, we have 
  $\dist((\vec{\sigma}'', \sigma''_e),
  (\vec{\sigma}', \sigma'_e)) 
  \le \epsilon$. 
\end{itemize}
%joe29:
Note that $\vec{\sigma}'$ implements $\vec{\sigma}''$ iff 
$\vec{\sigma}'$ 0-implements $\vec{\sigma}''$.

%joe17: defined above
%Where $d_1$ is a distance over $\Delta(A)$ defined as
%$d_1(\pi_1,\pi_2) = \sum_{a \in A} |\Pr(\pi_1 = a) - \Pr(\pi_2 = a)|$
%for $\pi_1, \pi_2 \in \Delta(A)$. 
%ivan32: better leave 'implements' and add another term for subset.

%joe20*: (1) I prefer weakly implements; 
%(2) we can't just introduce this out of the blue; we *must*
%motivate and explain it.  We can't just introduce this out of the
%blue; and (3) I don't think we need this notion anyway.  I cut it for now. 
%Finally, we also say that $\vec{\sigma}'$ \emph{quasi-implements}
%$\vec{\sigma}''$ if $\{(\vec{\sigma}',\sigma_e'): \sigma_e' \in
%\S_{\Gamma',e} \} \subseteq 
%\{(\vec{\sigma}'',\sigma_e''): \sigma_e'' \in \S_{\Gamma'',e} \}$. An
%analogous definition holds for $\epsilon$-\emph{quasi-implements}. 
%joe32*: reinstated this discussion that was there a long time ago
%with some minor changes.
%\commentout{
The notion of implementation is quite strong.  For example, 
%ivan91
%is
if 
$\vec{\sigma}'$ involves fewer rounds of communication than 
$\vec{\sigma}''$, there may be far fewer distinct schedulers in the
game involving $\vec{\sigma}'$ than in the game involving
$\vec{\sigma}''$.  Thus, we may not be able to recover the effect of
all possible schedulers.   (Indeed, for some of our results the
implementation needs to be quite long precisely in order to capture
all possible schedulers.)  
This suggests the following notion:
a strategy profile $\vec{\sigma}'$ in $\Gamma'$ \emph{weakly implements} a
strategy profile $\vec{\sigma}''$ in $\Gamma''$ 
if $\{(\vec{\sigma}',\sigma_e'): \sigma_e' \in \S_{\Gamma',e} \} \subseteq
\{(\vec{\sigma}'',\sigma_e''): \sigma_e'' \in \S_{\Gamma'',e} \}$.
Thus, if $\vec{\sigma}'$ weakly implements $\vec{\sigma}''$, then
every outcome of $\vec{\sigma}'$ is one that could also have arisen
with $\vec{\sigma}''$,  but the converse may not be true.
Specifically, there may be some behaviors of the environment with
$\vec{\sigma}''$ that cannot be simulated by $\vec{\sigma}'$.
%joe44*: we should comment on why it's a feature in the full paper
As we shall see, this may actually be a feature: we can sometimes
simulate the effect of only ``good'' schedulers.  In any case,
note that in the synchronous setting, implementation and weak
implementation coincide.  
%joe44: do we use this?
\arxiv{
We can also define a notion of weak $\epsilon$-implementation in the
obvious way; we leave the details to the reader.}
%}

%joe7*: added here; you should modify the theorem statements appropriately.
%ivan10: We have to check this. For the time being I would just
%mention bounded, strongly terminating, and  
{\bf Termination:}  We will be interested in asynchronous games
where, almost surely, the honest players stop sending messages and
make a move in the underlying game.
%joe15*: rewrote
In the mediator games that we consider, this happens after only a
%joe31
%bounded number of message have been sent.  In the cheap-talk game,
bounded number of messages have been sent.
%joe44: true, but not relevant
%In the cheap-talk game,
%there may be no \emph{a priori} bound on the number of messages sent
%in histories of the game.  But even if there is a bound on the number
%of messages sent, there may not necessarily come a point in the
But even with this bound, there may not be a point in a
history when players know that they can stop sending messages;  
%In some games, this happens after
%only a bounded number of messages have been sent.  In other games,
%there is no bound, but at least players know when they are done.
%Finally, there may be games where, although the players do stop
%sending messages,  they don't know whether they may still need to send
%another message.  We could formalize this by modeling players'
%knowledge in the game, but simply the exposition, suppose that we
%restrict attention to \emph{canonical strategy profiles} in the
%mediator game.
%joe31
%although a player $i$ may moved in the underlying game, $i$ may still
although a player $i$ may have moved in the underlying game, $i$ may still
need to keep checking for incoming message, and may need to respond to
them, to ensure that other players can make the appropriate move.

%joe44
%For some of our results, we must assume that in the mediator game,
For some of our results, we must assume that, in the mediator game,
there comes a point when all honest players know that they have
%joe16
%done; they will not get further messages from the
terminated the protocol; they will not get further messages from the
mediator and can stop sending messages to the mediator, and should
make a move in the underlying game if they have not done so yet.
For simplicity, for these results, we restrict the honest players and
the mediator 
to using strategy profiles that have the following \emph{canonical form}:
%joe20: can't we just cut this?
%If a player receives a signal from the environment saying that the
%game has started, it sends a message saying that to the mediator, who
%then sends a message to all players saying that the game has started.
%joe32
%Whenever a player using a canonical strategy profile
Using a canonical strategy, player $i$ sends a message to the mediator
in response to a message from the mediator that does not include
``STOP'' if it has not halted, and these are the only messages that
%joe34: we might as well require an initial message
%$i$ sends, other than possibly sending an initial message to the mediator.
$i$ sends, in addition to an initial message to the mediator.
%joe32
%does gets a message from the mediator, it must send a message back to the 
%joe20: I don't think it's a good idea to have the mediator send a
%special STOP message
%mediator unless it gets a special STOP message from the mediator.  If
%a player $i$ gets a STOP message, then $i$ does not send any further
%mediator unless the mediator's message includes ``STOP''.
If player $i$ gets a message from the mediator that includes ``STOP'',
then $i$
%joe32
%does not send any further  messages, and
%makes a move in the underlying game if it has not
makes a move in the underlying game and halts.
%joe32
%if it has not already done so.
%joe20*: need to make this clearer
%We assume that all players get the same number of
%messages in each history, so that if a message including ``STOP'' is the $r$th
%message that player $i$ gets, then it is the $r$th message that player
%$j$ gets.
We assume that, as long as the honest players and mediator follow
their part of the 
%joe44
%canonical strategy profile, then there is a constant $r$ such that, no matter
canonical strategy profile, there is a constant $r$ such that, no matter
what strategy the rational and malicious 
players and the environment use,  
%joe32: this isn't quite right; the STOP message may not be the rth
%message received by i
%each player gets exactly $r$ messages from the mediator in each
the mediator sends each player $i$ at most $r$ messages in each
%%joe31
%%history gets from the mediator and the final message includes
history, and the final message includes 
%joe20
%STOP message from the mediator.
``STOP''.
We conjecture that the assumption that
%joe20
%players are using a strategy in
players and mediator are using a strategy in
canonical form in the mediator game is without loss of generality;
%joe20: this is what we really want
%that is, if there is a $(k,t)$-robust strategy in a mediator game $\Gamma_d$,
%then there is a $(k,t)$-robust strategy in $\Gamma_d$ that is in
that is, a $(k,t)$-robust strategy profile in a mediator game $\Gamma_d$
%joe29*: we defined weak implementation, then cut it.  Do we believe
%this?  If not, cut the sentence
%can be weakly implemented by a $(k,t)$-robust strategy profile in
%ivan83: this might be true for finite mediator games.
can be implemented by a $(k,t)$-robust strategy profile in
$\Gamma_d$ that is in 
canonical form.
%joe44
%However, we have not proved this conjecture yet.
\arxiv{However, we have not proved this conjecture yet.}
\commentout{
{\bf Cryptography}
Some of the results presented in the paper require the use of
cryptography.  In an asynchronous environment we do not have an
explicit way to bound the ability of players to use background
methods in order to try to break the cryptography. In order to
address that we can assume the Dolev-Yao~\citeyear{DY83} model,
where
the players can take any action other than breaking the
cryptography. Alternatively, the same way that we introduced default
actions that take place in a finite time, when a player doesn't take
an action in a finite time, we can assume that while our
asynchronous protocols use cryptography, there exits an upper bound
on the total expected execution time in a way that the security
parameters are chosen is such a way that the probability of breaking
the cryptography is negligible.
}

%joe30: moved from appendix
%\section{Extending Nash equilibrium to $(k,t)$-robust equilibrium in
%  asynchronous games}\label{sec:def-more}
\section{Solution concepts}
%**************************************************
%ivan83: changing the order of strategies to match other notation
%joe29
%joe30: now in main text
%In this appendix, we give more details and intuition for
%$(k,t)$-robustness and related notions.
In this section, we review the solution concepts introduced in ADGH
and extend them to asynchronous settings.

%joe30*: added this here, then globally changed \vec[t} to \vec{x} and
%added \Gamma to all utility functions.  We need to be consistent, or
%it makes life really hard for the reader.
Note that in an
asynchronous game $\Gamma$, the utility of a player $i$ can depend not
only on the strategies of the agents, but on what the environment
does.  Since we consider an underlying game, a mediator game, and a
cheap-talk game, it is useful to include explicitly in the utility
function which 
game is being considered.  Thus, we write
%joe34
%$u_i(\Gamma, \vec{\sigma}, \sigma_e, \vec{x})$ to denote the expected
$u_i(\Gamma, \vec{\sigma}, \sigma_d, \sigma_e, \vec{x})$ to denote the expected
utility of player $i$ in game $\Gamma$ when players
play strategy profile $\vec{\sigma}$,
%joe34
the mediator plays $\sigma_d$,
the environment plays $\sigma_e$, and the type profile is $\vec{x}$.
We typically say ``input profile'' rather than ``type profile'', since
in our setting, the type of player $i$ is just $i$'s initial input.
Note that if $\Gamma$ is the underlying game, the $\sigma_e$ component
is unnecessary, since the underlying game is assumed to be
synchronous.
%joe34
%joe44
%We often omit the mediator strategy $\sigma_d$ when it is clear from context.
We occasionally omit the mediator strategy $\sigma_d$ when it is clear
from context. 

%joe30: reinstated; except that I added \sigma_e
Given a type space $\T$, a set $K$ of players,
and $\vec{x} \in \T$, 
let $\T(\vec{x}_K) = \{\vec{x}\,': \vec{x}\,'_K = \vec{x}_K\}$. If
$\Gamma$ is a Bayesian game over type space $\T$, $\vec{\sigma}$ is a 
strategy profile in $\Gamma$, and $\Pr$ is the probability on the type space
$\T$, let $$u_i(\Gamma, \vec{\sigma},\sigma_e, \vec{x}_K) = \sum_{\vec{x}\,' \in
%joe44
  %  \T(\vec{x}_K)} \Pr(\vec{x}\,' \mid \T(\vec{x}_K))
    \T(\vec{x}_K)} \Pr(\vec{x}\,' \mid \T(\vec{x}_K)) \;
u_i(\Gamma,\vec{\sigma},\sigma_e,\vec{x}').$$ Thus,
$u_i(\Gamma,\vec{\sigma},\sigma_e,\vec{x}_K)$ is $i$'s expected payoff
if everyone uses strategy $\vec{\sigma}$ and type profiles are
%joe30
%restricted to $\T(\vec{x}_K)$.
in $\T(\vec{x}_K)$.

{\bf $k$-resilient equilibrium:}
In a standard game, a strategy profile is a Nash equilibrium if no
player can gain any advantage by using a different strategy, given
that all the other players do not change their strategies.
%joe30: shortened
%In \cite{ADGH06}, we defined of \emph{$k$-resilient equilibrium} that
%generalizes Nash equilibrium, but allows a coalition of up to $k$
%players to change their strategies.  Here we further extend this
%notion so that it applies to asynchronous games.  Much of the
%discussion is taken from ADGH.
The notion of $k$-resilient equilibrium extends Nash equilibrium to
allow for coalitions.

%joe14*: We've allowed arbitrary communication between non-honest
%players in both the mediator game and the cheaptalk game, so I don't
%think there's any need for resilient' here.  It just clutters things up.
%\begin{definition}\label{def:1} $\vec{\sigma}$ is a \emph{$k$-resilient$'$
\begin{definition}\label{def:1} $\vec{\sigma}$ is a \emph{$k$-resilient
    equilibrium}
  %joe30: combined definitions
  (resp., \emph{strongly $k$-resilient equilibrium})
  in an asynchronous game $\Gamma$ if, for all subsets $K$
%joe30
%  of players with $|K| \le k$ all type profiles $\vec{x} \in \T$, and all
%joe34: we need to say that 1 \le |K|; nothing is robust if K is empty
%(since ``for some i \in K'' can't hold).  I made this change globally
%  of players with $|K| \le k$, all strategy profiles $\vec{\tau}_K$
  of players with $1 \le |K| \le k$, all strategy profiles $\vec{\tau}_K$
  for the players in $K$, 
  all type profiles $\vec{x} \in \T$, and all
  %joe16
%  strategies $s$ of the environment, is not the case that there exists a
  strategies $\sigma_e$ of the environment,
%joe30: let's make it more like what we use
  %  it is not the case that  there exists a 
%  strategy profile $\vec{\tau}$ such that
  $u_i(\Gamma,(\vec{\sigma}_{-K},\vec{\tau}_K),\sigma_e, 
%joe30
  %  \vec{x}_K) > u_i(\Gamma,\vec{\sigma}, \sigma_e, \vec{x}_K)$ for all
  \vec{x}_K) \le u_i(\Gamma,\vec{\sigma}, \sigma_e, \vec{x}_K)$ for
%joe30
%  all
  %joe30
 some (resp., all)
  $i \in K$.%
\footnote{As usual, the strategy profile $(\vec{\sigma}_{-K}, \vec{\tau}_K)$
is the one where each player $i \in K$ plays $\tau_i$ and each
player $i \notin K$ plays $\sigma_i$.}
\end{definition}
Thus, $\vec{\sigma}$ is $k$-resilient if, no matter what the environment
does, no subset $K$ of at most $k$ players can all do better by
deviating, even if they share their type information (so that if the
true type is $\vec{x}$, the players in $K$ know $\vec{x}_K$).
%joe30
It is strongly $k$-resilient if not even one of the players in $K$ can
do better if all the players in $K$ deviate.
%joe14
%As the prime suggests, this will not be exactly the definition we focus
%on.
%joe30
%In ADGH, we considered a stronger notion, which requires that there be
%no deviation where even one player does better.

\commentout{
\begin{definition}\label{def:2} $\vec{\sigma}$ is a \emph{strongly
%joe14
    %    $k$-resilient$'$
        $k$-resilient
equilibrium} of an asynchronous game $\Gamma$ if for all subsets $K$
of players, and all type profiles $\vec{x} \in \T$, and all strategies $\sigma_e$
of the environment, it is not the case that there exists a strategy
profile $\vec{\tau}$ such that
$u_i(\Gamma,(\vec{\sigma}_{-K},\vec{\tau}_K),\sigma_e,\vec{x}_K) >
%joe14
%u_i(\vec{\sigma,s}, \vec{t}_C  for some $i \in C$.
u_i(\Gamma,\vec{\sigma}, \sigma_e, \vec{x}_K)$ for some $i \in K$.
\end{definition}
}

%joe4: this was redundant (it appears again below)
%\begin{definition}\label{def:3} $\vec{\sigma}$ is a \emph{(strongly) $k$-resilient
%equilibrium} in a game $\Gamma$ if $\vec{\sigma}$ is a (strongly)
%$k$-resilient$'$ equilibrium in the cheap-talk extension of $\Gamma$
%(where we identify the strategy $\sigma_i$ in the game $\Gamma$ with
%the strategy in the cheap-talk game where player $i$ never sends any
%messages beyond those sent according to $\sigma_i$).
%\end{definition}

%joe14*: we don't assume this
\commentout{
Both of these definitions have a weakness: they implicitly assume
that the coalition members cannot communicate with each other beyond
agreeing on what strategy to use. As we show in ADH,
allowing communication can \emph{prevent} certain equilibria.  This
motivates the following definition.

\begin{definition}\label{def:3} $\vec{\sigma}$ is a \emph{(strongly) $k$-resilient
equilibrium} in an asynchronous game $\Gamma$ if $\vec{\sigma}$ is a
(strongly) $k$-resilient$'$ equilibrium in the cheap-talk extension
of $\Gamma$ (where we identify the strategy $\sigma_i$ in the game
$\Gamma$ with the strategy in the cheap-talk game where player $i$
never sends any messages beyond those sent according to $\sigma_i$).
\end{definition}

\noindent By considering the cheap-talk extension, we effectively
allow the players in the coalition arbitrary communication when they
deviate.
}

%joe44
%For some of our results we will be interested in strategies that
For some of our results we will be interested in equilibria that
%joe16
%give ``almost'' $k$-resilience, in the sense that no player in a
%joe44
%give ``almost'' $k$-resilient, in the sense that no player in a
%coalition can do more than $\epsilon$ better by changing strategies,
are ``almost'' $k$-resilient, in the sense that no player in a
coalition can do more than $\epsilon$ better if the coalition its strategy, 
for some small $\epsilon$.

%joe29
%\begin{definition} If $\epsilon \ge 0$, then
\begin{definition}\label{def:eps-equilibrium} For 
%ivan83:
%  $\epsilon \ge 0$,
  $\epsilon > 0$, 
%joe14
  %  $\vec{\sigma}$ is an \emph{$\epsilon$\_$k$-resilient equilibrium}
  $\vec{\sigma}$ is an \emph{$\epsilon$-$k$-resilient equilibrium}
%joe29
(resp.,   \emph{strongly $\epsilon$-$k$-resilient equilibrium})
  if, for all subsets $K$ of players,
  %joe30
  all strategy profiles $\vec{\tau}_K$ for the players in $K$,
  all type profiles $\vec{x} \in \T$, and all
%joe16: again
%strategies $s$ of the environment, it is not the case that there exists a
  strategies $\sigma_e$ of the environment,
%joe30
% it is not the case that there exists a strategy profile $\vec{\tau}$ such that
we have
  $u_i(\Gamma,(\vec{\sigma}_{-K},\vec{\tau}_K),\sigma_e, \vec{x}_K) 
%ivan83:
%>
%joe30
%\ge
<
 u_i(\Gamma,\vec{\sigma},\sigma_e,\vec{x}_K) + \epsilon$ for
%joe30
 %all $i \in K$ (resp., for some $i \in K$).
 some (resp., for all) $i \in K$.
\end{definition}
%joe30:
Note that we have ``$<  u_i(\Gamma,\vec{\sigma},\sigma_e,\vec{x}_K) +
\epsilon$'' here, not ``$\le$''; this means that a $0$-$k$-resilient
equilibrium is not a $k$-resilient equilibrium.  However, an
equilibrium is $k$-resilient iff it is $\epsilon$-$k$-resilient for
all $\epsilon > 0$.  We have used this slightly nonstandard definition
to make the statements of our theorems cleaner.

{\bf Robustness:}
%joe2: why should we mention asynchrony?
%\alert{mention asynchrony here?}
A standard assumption in game theory is that utilities are
(commonly) known; when we are given a game we are also given each
player's utility. When players make decision, they can take other
players' utilities into account. However, in large systems, it seems
almost invariably the case that there will be some fraction of users
who do not respond to incentives the way we expect.
%joe44
\arxiv{For example, in
a peer-to-peer network like Kazaa or Gnutella, it would seem that no
rational agent should share files. Whether or not you can get a file
depends only on whether other people share files; on the other hand,
it seems that there are disincentives for sharing (the possibility
of lawsuits, use of bandwidth, etc.). Nevertheless, people do share
files.  However, studies of the Gnutella network have shown almost
70 percent of users share no files and nearly 50 percent of
responses are from the top 1 percent of sharing hosts~\cite{AH00}.
}
%joe30
\fullv{
  
One reason that people might not respond as we expect is that they
have utilities that are different from those we expect.  That is, if
%joe16
%we take the $\vec{u}$ as characterizing the utilities we
we take $\vec{u}$ as characterizing the utilities we
\emph{expect} players to have,  what we are saying here is that
$\vec{u}$ may not correctly represent the utilities of all the
players. In the Kazaa example, it may be the case that some users
derive pleasure from knowing that they are the ones providing files
for everyone else.  In a computer network, inappropriate responses
may be due to faulty computers or faulty communication links. Or,
indeed, users may simply be irrational.  Whatever the reason, it}
\shortv{It}
seems important to design protocols that tolerate such unanticipated
behaviors, so that the payoffs of the users
%joe30
\fullv{with ``standard'' utilities do not get affected by the
  nonstandard players using different strategies.
}
who follow the recommended strategy do not get affected by players who
deviate, provided that not too many deviate.

\begin{definition}
%joe30
%  A strategy profile $\vec{\sigma}$ is \emph{$t$-immune} if, for all
  A strategy profile $\vec{\sigma}$ is \emph{$t$-immune} in a game
  $\Gamma$ if, for all
  subsets $T$ of players with $|T| \leq t$, all strategy profiles $\vec{\tau}$,
all $i \notin T$, all type profiles $\vec{x} \in \T$, and all strategies $\sigma_e$
of the environment, we have
$u_i(\Gamma,(\vec{\sigma}_{-T},\vec{\tau}_T),\sigma_e, \vec{x}_T) 
\ge u_i(\Gamma,\vec{\sigma},\sigma_e,\vec{x}_T)$.
\end{definition}
%joe30
%Intuitively, $\vec{\sigma}$ is $t$-immune if there's nothing that
%player in a set $T$ of size at most $t$ can do to give them a worse
Intuitively, $\vec{\sigma}$ is $t$-immune if there is nothing that
player in a set $T$ of size at most $t$ can do to give the players not
in $T$ a worse
payoff, even if the players in $T$ share their type information.

The notion of $t$-immunity and $k$-resilience address different
%joe44
%concerns.  For $t$ immunity, we consider the payoffs of the players
%not in $K$; for resilience, we consider the payoffs of players in
concerns.  For $t$-immunity, we consider the payoffs of the players
%not in $K$; for $k$-resilience, we consider the payoffs of players in
$K$. It is natural to combine both notions.  Given a strategy profile
$\vec{\tau}$, let $\Gamma^{T}_{\vec{\tau}}$ be the game which is
identical to $\Gamma$ except that the players in $T$ are fixed to
playing strategy $\vec{\tau}_T$.
%joe29*: need to define \epsilon t-immune, \epsilon (t,k)-robust

\begin{definition}
%joe14: since we've talked about strong resilience, we might as well
%carry it through
% $\vec{\sigma}$ is a \emph{$(k,t)$-robust} equilibrium if
  $\vec{\sigma}$ is a \emph{(strongly) $(k,t)$-robust} equilibrium
  %joe30
  in a game $\Gamma$
  if
  $\vec{\sigma}$ is $t$-immune and, for all subsets $T$ of players with
$|T| \leq t$ and all strategy profiles $\vec{\tau}$,
%joe14
%  $\vec{\sigma}_{N-T}$ is a $k$-resilient strategy of
%joe16
  %  $\vec{\sigma}_{-T}$ is a (strongly) $k$-resilient strategy of
%joe31
%  $\vec{\sigma}_{-T}$ is a (strongly) $k$-resilient equilibrium of
  $(\vec{\sigma}_{-T},\vec{\tau}_T)$ is a (strongly) $k$-resilient equilibrium of
  %joe30
  %  $\Gamma^{\vec{\tau}}_T$.
    $\Gamma_{\vec{\tau}}^T$.
\end{definition}
%joe14
%Note that a $k$-resilient equilibrium is a $(k,0)$-robust
\fullv{
Note that a (strongly) $k$-resilient equilibrium is a (strongly) $(k,0)$-robust
equilibrium.
The notion of $(0,t)$ resilience is somewhat in the spirit of
Eliaz's \citeyear{Eliaz00} notion of $t$ fault-tolerant
implementation, although Eliaz does not require $t$-immunity.}

%joe30: shortened
%Following the intuition of Definition~\ref{def:eps-equilibrium}, we
%can also define $\epsilon$-$t$-immunity and
%$\epsilon$-$(t,k)$-robustness. For $\epsilon$-$t$-immunity we want
%that coalitions of size at most $t$ cannot decrease other players'
%payoff by more than $\epsilon$.
We can define ``approximate'' notions of $t$-immunity and
$(k,t)$-robustness analogous to Definition~\ref{def:eps-equilibrium}:

\begin{definition}
%joe30
  %Given $\epsilon > 0$, a
  %ivan84: Does this make sense for \epsilon = 0?
  %joe31: no!
  %  For $\epsilon \ge 0$, a
    For $\epsilon > 0$, a  
%joe30
%strategy profile $\vec{\sigma}$ is \emph{$\epsilon$-$t$-immune} if, for all
strategy profile $\vec{\sigma}$ is \emph{$\epsilon$-$t$-immune} in
$\Gamma$ if, for all
subsets $T$ of players with $|T| \leq t$, all strategy profiles
$\vec{\tau}$,
all $i \notin T$, all type profiles $\vec{x} \in \T$, and all strategies $\sigma_e$
of the environment,
%joe30: in general, trying to avoid saying ``it holds that''
%it holds that
we have
$u_i(\Gamma,(\vec{\sigma}_{-T},\vec{\tau}_T),\sigma_e, \vec{x}_T)
%joe30*: don't you want > here?, by analogy with your other changes?
%\ge u_i(\Gamma,\vec{\sigma},\sigma_e,\vec{x}_T) - \epsilon$.
> u_i(\Gamma,\vec{\sigma},\sigma_e,\vec{x}_T) - \epsilon$.
\end{definition}

%joe30
\fullv{For $\epsilon$-$(k,t)$-robustness, we replace $t$-immunity and
%joe30
  %(strongly) $k$-resilience for $\epsilon$-$t$-immunity and (strongly)
 (strong) $k$-resilience for $\epsilon$-$t$-immunity and (strong)
  $\epsilon$-$k$-resilience, respectively:
  }

%joe30
%\begin{definition} Let $\epsilon > 0$, $\vec{\sigma}$ is a
\begin{definition} For $\epsilon \ge 0$, $\vec{\sigma}$ is a
%joe30
  %  \emph{(strongly) $\epsilon$-$(k,t)$-robust} equilibrium if
    \emph{(strongly) $\epsilon$-$(k,t)$-robust} equilibrium in
    $\Gamma$ if 
  $\vec{\sigma}$ is $\epsilon$-$t$-immune and, for all subsets $T$ of players with
$|T| \leq t$ and strategy profiles $\vec{\tau}_{T}$,
%joe14
%  $\vec{\sigma}_{N-T}$ is a $k$-resilient strategy of
%joe16
  %  $\vec{\sigma}_{-T}$ is a (strongly) $k$-resilient strategy of
%joe31
%    $\vec{\sigma}_{-T}$ is a (strongly) $\epsilon$-$k$-resilient equilibrium of
    $(\vec{\sigma}_{-T},\vec{\tau}_T)$ is a (strongly) $\epsilon$-$k$-resilient equilibrium of
    %joe30
%    $\Gamma^{\vec{\tau}}_T$.
    $\Gamma_{\vec{\tau}}^T$.
  \end{definition}

%joe30: bad idea.  Let's just not use 0-t
%We extend these definitions by making $0$-$t$-immunity equivalent to
%$t$-immunity and $0$-$(k,t)$-robustness equivalent to
%$(k,t)$-robustness. 

%joe28: changed section title
%joe34*: In retrospect, we should move this to the appendix, and
%combine it with the first part of Section B.2 (where we have
%propositions B.1, B.2, B.3, and B.4.  The split doesn't really make
%sense.  It may make ssense to combine some of the results (e.g.,
%Proposition 4.1 and Proposition B.1, although we'll need the result
%for both \epsilon = 0 and \epsilon > 0).  
%joe42*: as I said in the joe34* comment above, I moved the discussion
%of adversaries to the appendix.

%joe5
%\section{Upper bounds}\label{sec:upperbounds}
%joe7
%\section{Main theorems: formal statements}\label{sec:upperbounds}
%ivan83: refactoring adversary notation
\section{Main theorems: formal statements}\label{sec:upperbounds}

%joe2: the results should be discussed in the intro.  There's no
%reason to have a separate section on results in general.  But because
%we are limited to 10 pages, we are forced to leave details to the
%appendix.  In the full paper, we won't just state results here; we
%can prove them (although we can also defer proofs to the appendix;
%this is standard in econ papers).  I just moved in the text from
%asynch-15 
%joe5
%In this section we will state the main results shown in this work.
In this section, we state our results formally.
%joe10*:
%joe16
%Just as with R1--R3,  for these results we assume that the
Just as with \shortv{R1 and R2,} \fullv{R1--R3,}  for these results we assume that the
communication in the mediator game is bounded.  But since ``rounds''
is not meaningful in asynchronous systems, we express the bounds in
terms of number of messages.  Specifically, we assume that at 
most $N$ messages are sent in all histories of the mediator game, and
that the mediator can be represented by an arithmetic circuit with at
most $c$ gates.

%We begin with two results that follow immediately from the protocols
%for asynchronous secure multiparty computation of \cite{BCG} and
%\cite{BKR}. 

%joe26*: not true, as we now realize!
We begin with a result that
%follows immediately from the properties of the multiparty computation protocol
%of \cite{BCG93}. This
is an analogue of R1 in the asynchronous setting.
%joe16
%
We say that a game $\Gamma'$ is a utility variant of a game $\Gamma$
if $\Gamma'$ and $\Gamma$ have the same game tree, but the utilities
of the players may be different in $\Gamma$ and $\Gamma'$. We use the
notation $\Gamma(\vec{u})$ if we want to emphasize that $\vec{u}$ is
the utility function in game $\Gamma$. We then take $\Gamma(\vec{u}')$
to be the utility variant of $\Gamma$ with utility function
$\vec{u}'$.

%joe20*:
One more technical comment before stating the theorems: in the
%joe26
%mediator game we can also via the mediator as a player (albeit one
mediator game we can also view the mediator as a player (albeit one
without a utility function) that is following a strategy.
%joe26
%So when we talk about an equilibrium $\vec{\sigma}$ in the mediator game, we
%assume that $\vec{\sigma}$ includes a strategy $\sigma_d$ for the
%mediator as well as a strategy $\sigma_i$ for each player $i$.
Thus, when we talk about a strategy profile that is a $(k,t)$-robust
equilibrium in 
the mediator game, we must 
give the mediator's strategy as well as the players' strategies.  We
sometimes write $\vec{\sigma}+\sigma_d$ if we want to distinguish the
players' strategy profile $\vec{\sigma}$ from the mediator's
%joe30
%strategy $\sigma_d$.  We are occasionally abuse notation and drop the
strategy $\sigma_d$.  We occasionally abuse notation and drop the
$\sigma_d$ if it is clear from context, and just 
%joe44
%about $\vec{\sigma}$ being a $(k,t)$-robust equilibrium.
talk about $\vec{\sigma}$ being a $(k,t)$-robust equilibrium.

%joe2
%Given this definition, we start with the theorem that follows from $\cite{BCG}$.

%joe2: added labels to theorems
%joe10*: in all these theorems, we'll have to make clear what our
%assumptions are about the mediator.  I think we'll need to assume
%that the mediator can be characterized by a circuit who inputs come
%from a finite domain D.  The inputs will always be a message from
%some agent together with the current state of the mediator, together
%with some fied number of random bits.  That means that all the
%messages must be elements of D.  Perhaps we can get away with
%something more general, but in any case, this must be clarified
%ivan13 To do
%joe14*: still to do
\begin{theorem}\label{thm:upperbound-default-no-punish}
If $\Gamma$ is a normal-form Bayesian game with $n$ players,
%joe31
%$\vec{\sigma}$ is a strategy profile for the players and the
$\vec{\sigma}+\sigma_d$ is a strategy profile for the players and the
mediator in 
%joe31
%a game $\Gamma_d$ with
an asynchronous mediator game $\Gamma_d$ 
%joe5: the mediator is synchronous!
%an asynchronous mediator
%joe31
%with a mediator $d$
that extends $\Gamma$, and $n > 4k + 4t$,
%joe16
%then, with both the default-move approach and the AH approach, there
%joe26*: we need to add the proofs of Theorems 1 and 2 with the
%default-nove approach
then with both the default-move approach and the AH approach,
there exists
%joe31
%an asynchronous cheap-talk game $\Gamma_{\ACT}$ and
a strategy profile
$\vec{\sigma}_{\ACT}$ that implements
%ivan10
% $\vec{\sigma}_d$ such that
%joe20
%$\vec{\sigma}$ such that
$\vec{\sigma} + \sigma_d$
%joe31
in the asynchronous cheap-talk game $\Gamma_{\ACT}$ 
such that
%joe30
%for all utility variants $\Gamma_d(\vec{u})$ of $\Gamma_d$,
for all utility variants $\Gamma_d(\vec{u}')$ of $\Gamma_d$,
%joe14*: added (strongly) in a number of places; see the appendix for
%the formal definition.   Since we defined it, we may as well use it.
%if $\vec{\sigma}$ is a $(k,t)$-robust equilibrium
%joe20
%if $\vec{\sigma}$ is a (strongly) $(k,t)$-robust equilibrium
%joe26
%if $\vec{\sigma}$ is a (strongly) $(k,t)$-robust equilibrium
%if $\vec{\sigma} + \sigma_d$ is a (strongly) $(k,t)$-robust equilibrium
if $\vec{\sigma} + \sigma_d$ is a (strongly) $(k,t)$-robust equilibrium
%joe30
%in $\Gamma_d(\vec{u})$, then $\vec{\sigma}_{\ACT}$ is a (strongly)
in $\Gamma_d(\vec{u}')$, then $\vec{\sigma}_{\ACT}$ is a (strongly)
$(k,t)$-robust 
equilibrium in $\Gamma_{\ACT}(\vec{u}')$, and
%joe7*: ``rounds'' doesn't make sense in an asynchronous setting.  (It
%can be defined, but I see no need
%it runs in a constant expected number of rounds, independent of $\vec{u}$. 
%joe10
%and the number of messages sent is bounded, independent of $\vec{u}$.
the number of messages sent in a history of $\vec{\sigma}_{\ACT}$ is
$O(nNc)$, independent of $\vec{u}'$.  
\end{theorem}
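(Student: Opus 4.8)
The plan is to realize $\vec{\sigma}_{\ACT}$ as a secure distributed simulation of the mediator, following the ADGH/GMW recipe but carried out with \emph{asynchronous} secure multiparty computation. For the security and robustness analysis we treat the at most $k+t$ rational and malicious players as fully Byzantine; since $n>4(k+t)$ this puts us in the regime where perfectly secure asynchronous MPC is available \cite{BCG93,BKR94}, and it is this \emph{perfect} (zero-error) security that will let us get an exact implementation rather than an $\epsilon$-implementation. Concretely, each player first shares its input and private coins using asynchronous verifiable secret sharing augmented with check vectors in the style of Rabin--Ben-Or; the players run an asynchronous common-subset protocol ($\ACS$) to agree on a set of at least $n-(k+t)$ players whose shared inputs are used; and then they evaluate, gate by gate, the size-$c$ arithmetic circuit that describes the behaviour of $\sigma_d$ together with the honest players' prescribed responses over the $N$-message interaction. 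Shared values are kept on degree-$(k+t)$ polynomials, multiplication gates being handled by the usual degree-reduction/re-sharing step for which $n>4(k+t)$ is exactly the needed resilience; whenever $\sigma_d$ would send a message to player $i$, the corresponding wire is reconstructed toward $i$ alone, and whenever player $i$ would send a message to the mediator, it is injected as a fresh sharing. The whole simulation uses $O(nNc)$ messages independently of the utilities, and every honest player halts and plays its $\Gamma$-move after finitely many messages, because at every step it may proceed once it has heard from any $n-(k+t)$ players and there are always that many honest players. In particular honest players never deadlock, so the default-move and the AH approaches coincide for them.

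Next I would check that $\vec{\sigma}_{\ACT}$ implements $\vec{\sigma}+\sigma_d$, i.e.\ that the set of distributions on $\Gamma$-action profiles induced by $\vec{\sigma}_{\ACT}$ over all cheap-talk schedulers equals the set induced by $\vec{\sigma}+\sigma_d$ over all mediator-game schedulers. For the ``$\subseteq$'' inclusion, given a cheap-talk scheduler, correctness of the MPC implies that the honest players' joint $\Gamma$-moves are distributed exactly as in $\Gamma_d$ under the mediator-game scheduler that hands each honest player its mediator messages in the matching order, so the two induced action-profile distributions agree. For ``$\supseteq$'', given a mediator-game scheduler I would construct a cheap-talk scheduler that reproduces the same delivery pattern, using the fact that the mediator interaction is bounded and that the long ($O(nNc)$-message) simulation has enough scheduling slack to realize it --- this is the direction in which ``the implementation has to be long'' in order to capture all schedulers.

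The heart of the proof is transferring (strong) $(k,t)$-robustness from $\Gamma_d(\vec{u}')$ to $\Gamma_{\ACT}(\vec{u}')$. Fix a utility variant $\Gamma_d(\vec{u}')$ in which $\vec{\sigma}+\sigma_d$ is (strongly) $(k,t)$-robust, a set $T$ with $|T|\le t$ and strategies $\vec{\tau}_T$, a set $K$ with $|K|\le k$, a deviation $\vec{\tau}_K$, and a cheap-talk scheduler. I would construct from these a deviation $\vec{\tau}'_K$ together with a mediator-game scheduler in $\Gamma_d(\vec{u}')$ that induces, for \emph{every} player, the same expected utility; then the assumed robustness in $\Gamma_d(\vec{u}')$ yields the required inequality for some (resp.\ all) $i\in K$, and $\epsilon$-$t$-immunity transfers the same way. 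This reduction rests on two properties of the protocol. First, \emph{secure extraction}: by privacy of the asynchronous $\VSS$ and the MPC, the coalition $K\cup T$ learns nothing in the cheap-talk game beyond what $\sigma_d$ reveals to its members, and from the coalition's cheap-talk strategy one can read off the effective input each member commits in the $\ACS$ phase and the effective messages it feeds the simulated mediator --- and sending a fake input, sending fake messages, or refusing to participate are all moves available to the coalition against the real mediator in $\Gamma_d$. Second, \emph{robust reconstruction}: when an honest player's (or a $T$-player's) output is reconstructed toward it, the coalition can neither read it nor change it, since with $n>4(k+t)$ the honest shares alone determine the value, Reed--Solomon unique decoding corrects the up to $k+t$ shares the coalition might corrupt, and the check vectors prevent undetected lying; so the coalition gains nothing in reconstruction that it could not already gain in $\Gamma_d$. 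Finally, since honest players always terminate, whether a \emph{bad} player moves in $\Gamma$ after the cheap-talk phase --- receiving $M_i(t)$ in the default-move approach, or the move its ``will'' prescribes in the AH approach --- is merely another option already open to it in $\Gamma_d$, so the argument is uniform across the two approaches.

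I expect the main obstacle to be precisely this simulation/extraction step: exhibiting, from an arbitrary asynchronous cheap-talk deviation together with an arbitrary scheduler, a matching (deviation, scheduler) pair in the mediator game, and proving that the two induce exactly the same map from type profiles to $\Delta(A)$. Asynchrony makes this delicate because schedulers on the two sides inhabit structurally different game trees, so the correspondence has to be built by hand; and one has to verify that timing-sensitive deviations --- for instance a coalition member seeing an intermediate reconstruction ``early'' relative to when it supplies its own input, or an honest player's input being shut out of the agreed set --- correspond to legitimate choices of delivery order in $\Gamma_d$ rather than to extra power. The security layer (asynchronous $\VSS$ with check vectors, error-correcting reconstruction) is exactly what rules out the former possibilities, and tracking the polynomial degrees and decoding radius shows that $n>4(k+t)$ is the precise threshold at which everything goes through.
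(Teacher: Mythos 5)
Your overall plan matches the paper's: run perfectly secure asynchronous MPC in the spirit of BCG, treat the $\le k+t$ deviating players as Byzantine, and use the $n>4(k+t)$ resilience threshold, then transfer implementation and robustness from the mediator game to the cheap-talk game by simulation in both directions. But the paper packages the crucial technical content differently, and I think it's worth flagging because the gap in your sketch is precisely where the paper introduces new machinery rather than citing standard results.

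The paper does \emph{not} argue directly from the internals of the MPC (VSS, ACS, check vectors, Reed--Solomon decoding). Instead it defines two abstract properties of a cheap-talk protocol relative to a mediator protocol --- $t$-bisimulation and $t$-emulation --- and cites a companion result (Theorem~\ref{thm:errorless-simulation}, from \cite{GH18}) that a protocol with both properties exists for coalitions up to $t<n/4$, with $O(nNc)$ messages. Instantiated at coalition size $k+t$, $(k+t)$-bisimulation gives implementation immediately (your two inclusions are exactly bisimulation for $T=\emptyset$), and $(k+t)$-emulation gives the function $H$ mapping each coalition member's cheap-talk strategy \emph{pointwise} to a mediator-game strategy, which is what you informally call ``secure extraction.'' The robustness transfer then goes through Propositions~\ref{prop:adversary-immunity} and~\ref{prop:adversary-robustness}, which strengthen $t$-immunity and $(k,t)$-robustness so that they hold even when the deviating players coordinate with the scheduler and pretend to have different inputs; this is needed because the scheduler $\sigma_e'$ that $H$ produces in the mediator game need not be ``the same'' scheduler the coalition was coordinating with in the cheap-talk game.

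Two places in your sketch need more than you currently give. First, the paper explicitly warns that vanilla BCG security is \emph{not} sufficient: it gives weak implementation and does not directly give either full implementation (your ``$\supseteq$'' direction, which you wave at via ``scheduling slack'') or the per-player strategy-mapping $H$ needed for the robustness transfer. Establishing these is the content of \cite{GH18}, so if you want to argue the construction satisfies them, you would need the stronger notions, not just BCG's ideal/real indistinguishability. Second, the pointwise nature of $H$ matters: for $(k,t)$-robustness you need, from the single coalition strategy $(\vec{\tau}_K,\vec{\tau}_T)$, a mediator-game deviation $(H(\vec{\tau}_K),H(\vec{\tau}_T))$ in which each player's strategy is obtained from that player's cheap-talk strategy alone (given the scheduler); otherwise you cannot separate the ``$T$ deviates'' baseline from the ``$K\cup T$ deviates'' comparison as the definition of $(k,t)$-robustness requires. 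Your ``read off the effective input each member commits'' is the right intuition, but you should state it as a per-player map that is uniform across the coalition, and verify it against Proposition~\ref{prop:adversary-robustness}'s form of the inequality. With these two points tightened --- i.e., by either invoking $t$-bisimulation/$t$-emulation as black boxes or re-proving them for your construction --- your argument coincides with the paper's.
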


%joe26: added
The proof of Theorem~\ref{thm:upperbound-default-no-punish} uses
ideas from the multiparty computation protocol of Ben-Or, Canetti, and
Goldreich \citeyear{BCG93} (BCG from now on).  Our construction
actually needs stronger properties than these provided by BCG; we
show that we can get protocols with these stronger properties in a
companion paper \cite{GH18}; see Section~\ref{sec:simulation} for
further discussion.

%joe2: added from asynch-15
%joe5: since, we did this in detail later
%\begin{proof}
%We can use the protocol of \cite{BCG93} and view the $k+t$ players as
%malicious. The protocol works no matter what the utilities are.
%(See below, especially Theorem~\ref{thm:BCG} and the subsequent
%discussion, for further details.) 
%\end{proof}

We can obtain better bounds if we are willing to accept
$\epsilon$-equilibrium,
%joe5
using ideas due to Ben-Or, Kelmer, and Rabin \citeyear{BKR94}.

\begin{theorem}\label{thm:upperbound-default-no-punish1}
If $\Gamma$ is a normal-form Bayesian game with $n$ players,
%joe20
%$\vec{\sigma}$ is a strategy profile for a game $\Gamma_d$ with
%joe31
%$\vec{\sigma}$ is a strategy profile the players and mediatory
$\vec{\sigma} + \sigma_d$ is a strategy profile for the players and mediator
%joe31
%in a game $\Gamma_d$ with
in an asynchronous mediator game $\Gamma_d$ 
%joe5
%an asynchronous mediator $d$ that extends $\Gamma$, and $n > 3k + 3t$,
%joe31
%with a mediator $d$
that extends $\Gamma$,
%joe31: added
$M > 0$,
and $n > 3k + 3t$,
%joe16
%then, with both the default-move approach and the AH approach, for all
then with both the default-move approach and the AH approach, for all
$\epsilon > 0$, there exists
%joe31
%an asynchronous cheap-talk game $\Gamma_{\ACT}$ and
a strategy profile $\vec{\sigma}_{\ACT}$ in
%joe31
the asynchronous cheap-talk game
$\Gamma_{\ACT}$ 
%ivan10
%such that
that
%ivan10 
%$\vec{\sigma}_{\ACT}$  
%joe7*: we haven't defined $\epsilon$-implements, and I suspect that
%we don't want it.  Please check!!  If we do need it, we need to
%define it formally
%$\epsilon$-implements $\vec{\sigma}$,
%ivan10: I think we are not implementing exactly when we have
%probability of error, right? For instance, 
%in theorem 3 if some computation fails, players remain in deadlock
%and may punish, which not necessarily 
%occurs in the mediator game.
%joe8*: In that case, you have to define the notion of
%\epsilon-implements above, and you can reinstate \epsilon here (and
%in Theorem 4, which probably needs it too).  Note that in the
%description of R3 in the introduction, where we used
%\epsilon-implement, we said that ``intuitively, [this means that]
%there is an implementation where players get utility within $\epsilon$
%of what they could get by deviating).  I don't think that's what we
%need here.
%I still think we need the \epsilon below too
%joe10*: this issue needs to be settled: is it implements or
%\epsilon-implements?   If it's the latter, the notion must be defined
%ivan13: It is eps-implements, I think. Having a probability of error
%leads to new possible strategies
%joe11*: But then you have to define exactly what it means!
%joe14*: still to do: DEFINE $\epsilon$-implement
%implements $\vec{\sigma}$
%joe20
%$\epsilon$-implements $\vec{\sigma}$
$\epsilon$-implements $\vec{\sigma}$
%ivan10
%.
%joe14
%such that if $\vec{\sigma}$ is a $(k,t)$-robust equilibrium
%joe20
%such that if $\vec{\sigma}$ is a (strongly) $(k,t)$-robust equilibrium
%joe26
%such that if $\vec{\sigma}$ is a (strongly) $(k,t)$-robust equilibrium
such that
%joe30
for all utility variants $\Gamma_d(\vec{u}')$ of $\Gamma_d$
%joe31*: added
bounded by $M/2$ (i.e., where the range of $u_i'$ is contained in
$[-M/2,M/2]$),  
if $\vec{\sigma} + \sigma_d$ is a (strongly) $(k,t)$-robust
equilibrium 
%joe7*: here is where I think we need the \epsilon
%in $\Gamma_d(\vec{u})$, then $\vec{\sigma}_{\ACT}$ is a $(k,t)$-robust
in $\Gamma_d(\vec{u}')$, then $\vec{\sigma}_{\ACT}$ is a
%ivan10
%joe8: there are actually some subtle rules about when two use hyphens
%(-) and en-dashes (--) in doubly-hyphenated words, so I did this
%deliberately, but I think it's OK the way you did it.
%$\epsilon$-$(k,t)$--robust 
%joe14
(strongly) 
$\epsilon$-$(k,t)$-robust 
%joe16
%equilibrium in $\Gamma_{\ACT}(\vec{u})$
equilibrium in $\Gamma_{\ACT}(\vec{u}')$, and
%joe10*: is this true
%ivan13: yes, for this case.
%joe16
the number of messages sent in a history of
$\vec{\sigma}_{\ACT}$ is $O(nNc)$, independent of $\vec{u}'$. 
\end{theorem}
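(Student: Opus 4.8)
The plan is to proceed exactly as in Theorem~\ref{thm:upperbound-default-no-punish}, but to replace the perfectly secure asynchronous multiparty computation protocol of BCG~\cite{BCG93} (which needs $n > 4(k+t)$) by an $\epsilon$-secure protocol in the style of BKR~\cite{BKR94}, which tolerates $k+t$ simultaneous corruptions as soon as $n > 3(k+t) = 3k+3t$. Throughout, I would treat the $k$ rational deviators and the $t$ players with unknown utilities as a single coalition of size at most $k+t$ against which all the cryptographic machinery must be secure; the split between the two kinds of bad player enters only in the game-theoretic part of the argument.

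The protocol $\vec{\sigma}_{\ACT}$ would have three phases. First, \emph{input sharing}: each player verifiably secret-shares its type with a degree-$(k+t)$ asynchronous VSS, and the honest players proceed once the inputs of $n-t$ players have been fixed, substituting an agreed default input for the remaining (at most $k+t$) players, exactly as the mediator does when a player fails to send its input. Second, \emph{circuit evaluation}: the players run the $\epsilon$-secure asynchronous MPC protocol on the $c$-gate arithmetic circuit describing $\sigma_d$, simulating its reactive (at most $N$-message) interaction with the players, keeping every intermediate value --- in particular each mediator message $m_i$ destined for player $i$ --- shared by a degree-$(k+t)$ polynomial, and reconstructing to each player, round by round, the messages addressed to it. Third, \emph{move}: player $i$ makes the move $\sigma_i(m_i)$ prescribed by the reconstructed messages and halts. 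To reconstruct a value $m_i$, everyone sends its share to $i$; since the sharing polynomial has degree $k+t$ and $n \ge 3(k+t)+1$, player $i$ recovers $m_i$ by Reed--Solomon unique decoding even when up to $k+t$ of the received shares are wrong or missing, while any set of at most $k+t$ players learns nothing about $m_j$ for honest $j$ beyond what $\vec\sigma$ already reveals. Because every allowable environment is fair, every honest player eventually obtains enough shares, reconstructs, and moves in every history, so the default move $M_i$ (resp.\ the AH continuation) is never invoked for an honest player; this is why the same construction works under both approaches. The bound $O(nNc)$ on the number of messages is inherited from the VSS, MPC, and reconstruction sub-protocols.

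For the analysis, fix the target $\epsilon$ and choose the MPC error parameter $\delta$ so that both $\delta \le \epsilon$ and $\delta M \le \epsilon$ (say $\delta = \epsilon/\max(M,1)$). Running $\vec{\sigma}_{\ACT}$ with all players honest produces, for each input profile, a distribution on $A$ within statistical distance $\delta$ of the one produced by $\vec\sigma+\sigma_d$; checking that every cheap-talk scheduler is matched within $\delta$ by some mediator scheduler, and conversely that every mediator scheduler is matched within $\delta$ by some cheap-talk scheduler --- the converse direction being where the length of the protocol is used, so that the scheduler space of $\Gamma_{\ACT}$ is rich enough to realize every mediator outcome --- gives that $\vec{\sigma}_{\ACT}$ $\epsilon$-implements $\vec\sigma + \sigma_d$. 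For robustness, fix a utility variant $\Gamma_d(\vec u')$ bounded by $M/2$ in which $\vec\sigma+\sigma_d$ is (strongly) $(k,t)$-robust, a set $T$ with $|T| \le t$ and a strategy $\vec\tau_T$, a subset $K$ with $|K \cup T| \le k+t$, a deviation $\vec\tau_K$, and a cheap-talk scheduler $\sigma_e'$. Security of the MPC yields, from the coalition's behavior together with $\sigma_e'$, a well-defined \emph{effective input} for each coalition member and a mediator scheduler $\sigma_e''$ such that the distribution on $A$ induced by $((\vec{\sigma}_{\ACT})_{-(K\cup T)}, \vec\tau_{K\cup T}, \sigma_e')$ is within $\delta$ of the one induced in $\Gamma_d$ by the corresponding deviation $\vec\tau'_{K\cup T}$ --- which feeds those effective inputs and in the reconstruction phase does nothing more harmful than in the cheap-talk run, since it can neither stop an honest player from reconstructing nor learn anything it could not already compute from the honest shares --- under $\sigma_e''$. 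Hence each coalition member's utility changes by at most $\delta M \le \epsilon$ between the two runs; since $\vec\tau'_{K\cup T}$ helps no coalition member (resp.\ not all coalition members) in $\Gamma_d(\vec u')$, the deviation $\vec\tau_K$ gains at most $\epsilon$ for some (resp.\ all) $i \in K$. Taking $K = \emptyset$ gives $\epsilon$-$t$-immunity, so $\vec{\sigma}_{\ACT}$ is (strongly) $\epsilon$-$(k,t)$-robust in $\Gamma_{\ACT}(\vec u')$.

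The hard part will be that off-the-shelf BKR does not literally provide everything this reduction asks for: the coalition's effective inputs must be extractable even when the coalition follows an arbitrary (rational) strategy rather than a polynomial-time adversary; the outputs $m_i$ must end up shared with exactly threshold $k+t$, so that reconstruction is robust yet leaks nothing; the reconstruction step itself must be both robust and leakage-free; and the asynchronous phenomenon that up to $t$ honest inputs may be excluded from the computation must be resolved consistently with the mediator's behavior. The plan is to use an enhanced MPC protocol with these properties, obtained as in our companion paper~\cite{GH18}; and because reconstruction runs at threshold $k+t$, the bound $n > 3(k+t)$ is exactly what Reed--Solomon unique decoding demands, leaving no slack, so every parameter has to be chosen tightly.
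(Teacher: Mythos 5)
Your approach is essentially the paper's: the paper also obtains Theorem~\ref{thm:upperbound-default-no-punish1} by swapping the perfectly secure simulation (Theorem~\ref{thm:errorless-simulation}, threshold $n>4(k+t)$) for the $\epsilon$-secure, BKR-style one (Theorem~\ref{thm:error-simulation}, threshold $n>3(k+t)$), both imported from the companion paper \cite{GH18}, and then rerunning the reduction used for Theorem~\ref{thm:upperbound-default-no-punish}; your deferral of the ``enhanced MPC'' properties to \cite{GH18} matches what the paper actually does. Two points in your game-theoretic gluing need attention, though. First, the step ``since $\vec\tau'_{K\cup T}$ helps no coalition member in $\Gamma_d(\vec u')$'' is not an instance of the definition of $(k,t)$-robustness: the emulation hands you a deviation under a \emph{different} mediator scheduler and with spoofed effective inputs, and the inequality you need compares that run against a baseline run under yet another scheduler and the true inputs. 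The paper closes exactly this hole with Propositions~\ref{prop:adversary-robustness}, \ref{prop:eps-robustness}, and \ref{prop:compactness-robustness}, which show that robustness automatically extends to adversaries that coordinate with the scheduler and misreport their types (because deviating players can covertly signal the scheduler via self-addressed messages); you must invoke something of this form explicitly. Second, your error accounting drops a factor of two: both the deviating run and the baseline run are translated between $\Gamma_{\ACT}$ and $\Gamma_d$, each at cost $\delta M$, so the total slack is $2\delta M$, and since $\epsilon$-robustness is defined with a strict inequality you need $\delta$ strictly below $\epsilon/(2M)$ (the paper takes $\epsilon'=(\epsilon-\epsilon_0)/2M$ after extracting a compactness gap $\epsilon_0<\epsilon$). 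Both issues are repairable without changing your architecture.
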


%joe2: added
%joe5: cut
%\begin{proof}
%We can use the protocol of \cite{BKR94} with security parameter
%$\epsilon'=\epsilon/M$, where $M$ is the maximal utility in
%$\Gamma$. Even if a player manages to disrupt the multiparty
%computation protocol, his expected utility is at most $\epsilon' M =
%\epsilon$ more than the equilibrium payoff.
%\end{proof}

%joe2: It's not really ``Moreover''.   Also, we have to define
%punishment strategy formally.
%Moreover, if there exists a $(t+1)$-punishment strategy, we can get a
%tighter bound with the AH approach: 
%joe44*
%If we have a punishment strategy but utilities are unknown, we can do
If we have a punishment strategy and utilities are known, we can do
better with the AH approach.
%joe7*: removed, based on your comment
\commentout{
The idea is to use the punishment to force the players to act in a
%joe6*: is this how we use punishment?  If so, we should stress it; if
%not, we should remove this.
%ivan9: We should remove this. We are using a different approach but
%the high-level idea is the same: if rational players do not cooperate
%enough we get to deadlock ==> punish. We only need default deadlock
%moves to be punishing in this case 
more synchronized manner.
We proceed in virtual rounds;
in each round, each player waits for an ``appropriate'' response from
$n-t-1$ other players.  (Inappropriate messages are assumed to come from
malicious players, so are ignored.)  If enough players do not
send an appropriate message, then there will be a deadlock.
Players write their will so that deadlocks result in punishment.
}
To make this precise, we need the definition of an $m$-punishment strategy
%joe5
%{REFERENCE]]
\cite{ADGH06} (which generalizes the notion of punishment strategy defined by 
%joe16
%Ben Porath \citeyear{Bp03}.
Ben Porath \citeyear{Bp03}).
%joe16*:
Before defining this carefully, note that in an asynchronous setting
%joe27:
(i.e., in the mediator game and the cheap-talk game, but not in the
%joe44
%underlying game)
underlying game),
the utility of players depends on the environment's strategy as well
as the players' strategy profile and the players' type profile.
%joe28: not any more; we defined this earlier, and now mention the game 
%We write $u_i(\vec{\sigma},\sigma_e,\vec{x})$ to denote $i$'s utility if
%the players use $\vec{\sigma}$, the environment uses $\sigma_e$, and the type
%profile is $\vec{x}$.

\begin{definition}
%joe14: changed C -> K globally
% Given a type space $\T$, a set $C$ of players, and $\vec{x} \in \T$, let
  %joe30: not needed, since I reinstated it above
  \commentout{
  Given a type space $\T$, a set $K$ of players, and $\vec{x} \in \T$, let
 $\T(\vec{x}_K) = \{\vec{x}': \vec{x}'_K = \vec{x}_K\}$.
%joe27
%If $\Gamma$ is a game over type space $\T$, $\vec{\sigma}$ is a
If $\Gamma$ is an underlying game over type space $\T$, $\vec{\sigma}$ is a
%joe10: changed lots of occurrences of ``joint strategy'' to
%``strategy profile'' 
%joint strategy in $\Gamma$, and $\Pr$ is the probability on the
strategy profile in $\Gamma$, and $\Pr$ is the probability on the
%joe16: changed globally
%type space $\T$, let $$u_i(\vec{\sigma},\vec{x}_K)
%joe30*: CHECK
%joe27: no envirionment in underlying game
type space $\T$, let $$u_i(\Gamma,\vec{\sigma},\vec{x}_K)
= \sum_{\vec{x}\,' \in \T(\vec{x}_K)} \Pr(\vec{x}\,' \mid \T(\vec{x}_K))
%joe16
%u_i(\vec{\sigma}'(\vec{x}\,')).
%joe25
%u_i((\vec{\sigma},\sigma_e)(\vec{x}\,')).$$
%joe27
%u_i(\vec{\sigma},\sigma_e,\vec{x}).$$
%joe28*: Ivan, you're going to have to include the game here and make
%the utility notation consistent.  Search for all occurrences of u_i.
%I like including the game (although you may want to say that you omit
%it if it's clear from context), but as I said earlier, I would
%definitionly omit the T from the adversary.  Now note  
u_i(\Gamma,\vec{\sigma},\vec{x}).$$
Thus, $u_i(\Gamma,\vec{\sigma},\sigma_e,\vec{x}_K)$ is $i$'s expected payoff
%joe16
%if everyone uses strategy profile $\vec{\sigma}$,
if the players use strategy profile $\vec{\sigma}$
%joe27
%the environment uses $\sigma_e$, 
and type profiles are in $\T(\vec{x}_K)$.
%joe27
We can similarly define $u_i(\Gamma_d,\vec{\sigma},\sigma_e,\vec{x}_K)$ for the
mediator game, where we do consider an environment.
}
If $\Gamma'$ is an extension of an underlying game $\Gamma$,
a strategy profile $\vec{\rho}$ in $\Gamma$ is a \emph{$k$-punishment
strategy with respect to a strategy profile $\vec{\sigma}'$ in $\Gamma'$} if
%joe34
%for all subsets $K$ of players with $|K| \leq k$, all
for all subsets $K$ of players with $1 \le |K| \leq k$, all
strategy profiles $\vec{\sigma}$ in $\Gamma$,
%joe16
all strategies $\sigma_e$ for the environment,
%joe20
%all types $\vec{x} \in \T$,  and
all type profiles $\vec{x} \in \T$,  
and all players $i \in K$, we have
%joe28: here having the games explicit is useful:
%$$u_i(\vec{\sigma}',\sigma_e,\vec{x}_K) >
$$u_i(\Gamma',\vec{\sigma}',\sigma_e,\vec{x}_K) >
%joe27: removed \sigma_e, since it doesn't arise in the underlying game
%u_i(\vec{\sigma}_{K},\vec{\rho}_{-K}, \sigma_e,\vec{x}_K).$$
%joe28
%u_i(\vec{\sigma}_{K},\vec{\rho}_{-K}, \vec{x}_K).$$
%ivan91: do we need \vec{x}_K? I'll remove it.
%joe34: reinstated.  I think we want it.  Why shouldn't we
%allow the players in K to communicate about their inputs in the
%underlying game if we allow it in the mediator game?  If you
%disagree, we should discuss this.
u_i(\Gamma,(\vec{\sigma}_{K},\vec{\rho}_{-K}), \vec{x}_K).$$
%u_i(\Gamma,(\vec{\sigma}_{K},\vec{\rho}_{-K})).$$
%joe32*: Define strong punishment strategy, and add it to the
%statement of the theorem
\end{definition}
%joe44
%
%joe7: added
Thus, if $\vec{\rho}$ is a $k$-punishment strategy with respect to
%joe34
%$\vec{\sigma}$, if all but $k$ players play their part of
%$\vec{\rho}$, then all of the remaining players will be worse off than
$\vec{\sigma}'$, if all but $k$ players play their part of
$\vec{\rho}$ in the underlying tame, then all of the remaining players
will be worse off than 
they would be
%ivan91: I think this is not correct, it should reference the game extension
%if everyone had played $\vec{\sigma}$.
%joe34: I amde some minor changes.  I think it hsould be OK now.
in $\Gamma'$ if everyone had played $\vec{\sigma}'$,
%joe34
no matter what they do in the underlying game.

%ivan101:
\commentout{
%joe34: rewrote
%Intuitively, such punishment strategies can be used to threat rational
%players to follow the protocol. However, this definition only
%guarantees a worse payoff than the expected equilibrium outcome, which
%means that depending on their local history it might actually be worth
%for rational players to be punished. In this case the punishment
%threat has no effect. This concern is real, as we describe at
%section~\ref{sec:punish}. To deal with such situations it would be
%more efficient to have a strategy profile that guarantees a worse
%payoff for the remaining players regardless of their local history
%profile. 
Intuitively, a punishment strategy can be used to threaten rational
players to follow the protocol in $\Gamma'$. However, if the good
%ivan95*: I don't understand the next line! I guess this is a good
%place to define local history. 
players' action in $\Gamma'$ is determined by the punishment strategy,
%joe35: what don't you understand?  Note that the punishment strategy
%is *not* a strategy in \Gamma'.  It is a strategy in \Gamma.  So the
%good players can't use the punishment strategy in \Gamma; they can
%only use it to determine their move in the underlying game.  Does
%that help? In any case, yes, this is a good place to define local
%history (unless you use it earlier) 
%this guarantees only that the the deviating players in $K$ will get a
this guarantees only that the deviating players in $K$ will get a
payoff worse than their expected equilibrium outcome.  Thus,
for some local histories, deviating players might actually be better
%joe36: you still haven't defined local history.  I did it here.
%off if the remaining players play the punishment strategy.  That is,
off if the remaining players play the punishment strategy (where
player $i$'s \emph{local history} at time $t$ consists of $i$'s
initial state and the sequence of messages sent and received by $i$ up
to time $t$). That is,
for some local histories, the threat of playing the punishment
strategy is not effective.  This can be an issue in some games, as we
show in Section~\ref{sec:punish}.  This observation motivates the
following stronger notion.

\begin{definition}
%joe34: you don't need both \md and \; you also need the environment strategy 
%Let $u_i(\Gamma, \vec{\sigma}, \vec{x} \ \mid \ \vec{h})$ denote the
%expected payoff of player $i$ on game $\Gamma$ if players follow
% strategy profile $\vec{\sigma}$, have inputs \vec{x}, and have a
% local history profile $\vec{h}$
  Let $u_i(\Gamma, \vec{\sigma}, \sigma_e, \vec{x} \mid \vec{h})$ denote the
  expected payoff of player $i$ in game $\Gamma$ if players play
  strategy profile $\vec{\sigma}$, the environment plays $\sigma_e$,
  and the type profile is $\vec{x}$, conditional on the players having 
%joe34*: you haven't defined local history.  You need to do this
local history profile $\vec{h}$. If $\Gamma'$ is an extension of an
underlying game $\Gamma$, 
a strategy profile $\vec{\rho}$ in $\Gamma$ is a \emph{strong $k$-punishment
strategy with respect to a strategy profile $\vec{\sigma}'$ in $\Gamma'$} if
for all subsets $K$ of players with $1 \le |K| \leq k$, all
strategy profiles $\vec{\sigma}$ in $\Gamma$,
all strategies $\sigma_e$ for the environment,
all type profiles $\vec{x} \in \T$,  
and all players $i \in K$, and all local history profiles for players
in $K$ we have 
%joe34: reinstated \vec{x}_K
%$$u_i(\Gamma',\vec{\sigma}',\sigma_e,\vec{x}_K \ \mid \ \vec{h}_K) >
%u_i(\Gamma,(\vec{\sigma}_{K},\vec{\rho}_{-K})).$$
$$u_i(\Gamma',\vec{\sigma}',\sigma_e,\vec{x}_K \mid \vec{h}_K) >
u_i(\Gamma,(\vec{\sigma}_{K},\vec{\rho}_{-K}), \vec{x}_K).$$
\end{definition} 
}
%joe32*: if you define strong punishment, give a little more
%inteuition here.

\begin{theorem}\label{thm:punish}
If $\Gamma$ is a normal-form Bayesian game with $n$ players,
%joe31
%$\vec{\sigma}$ is a strategy profile
$\vec{\sigma}+\sigma_d$ is a strategy profile
%joe15*
in canonical form
%joe20
%for a game $\Gamma_d$ with
%joe31
%for the players and mediator in a game $\Gamma_d$ with
for the players and mediator in an asynchronous mediator game $\Gamma_d$ 
%joe5
%an asynchronous mediator $d$ that extend $\Gamma$, and $n > 3k+4t$, then,
%joe16
%a mediator $d$ that extend $\Gamma$, and $n > 3k+4t$, then,
%joe31
%a mediator $d$ that extend $\Gamma$, and $n > 3k+4t$, then
that extends $\Gamma$,  there is a $(k+t)$-punishment strategy with
respect to $\vec{\sigma}+\sigma_d$, and $n
> 3k+4t$, then 
with the AH approach,
there exists
a strategy profile
$\vec{\sigma}_{\ACT}$  that 
%ivan32: changed
%joe20: changed back
implements
%%quasi-implements
%joe20
%$\vec{\sigma}$
$\vec{\sigma} + \sigma_d$ in the asynchronous cheap-talk game $\Gamma_{\ACT}$,
%joe31**: I don't believe that this holds for all utility variants; it
%doesn't in the synchronous case.  It seems that you agree.
%such that for all utility variants
%$\Gamma_d(\vec{u}')$ of $\Gamma_d$, if $\vec{\sigma} + \sigma_d$ is a
%(strongly) $(k,t)$-robust 
%joe5
%strategy profile for
and if 
%joe31
%$\Gamma_d(\vec{u}')$ and there is a $(k+t)$-punishment strategy with
%joe5
%respect to $\vec{\sigma}$ then %$(\Gamma_{\ACT}(\vec{u}),\sigma_{\ACT})$
%implements $(\Gamma_d(\vec{u}), \vec{\sigma})$ and its expected
%number of rounds is a constant that does not depend on $\vec{u}$. 
%joe14
%respect to $\vec{\sigma}$, then $\vec{\sigma}_{\ACT}$ is a $(k,t)$-robust
$\vec{\sigma}+\sigma_d$ is a (strongly) $(k,t)$-robust
equilibrium in $\Gamma_d$,
then $\vec{\sigma}_{\ACT}$ is a (strongly)
$(k,t)$-robust
%joe7
%equilibrium  in $\Gamma_{\ACT}(\vec{u})$, it runs in a constat expected number
%joe31
%equilibrium  in $\Gamma_{\ACT}(\vec{u}')$,
equilibrium  in $\Gamma_{\ACT}$.
%joe32*
If 
%ivan91
there exists a strong $(k+t)$-punishment strategy or
we require only that $\vec{\sigma}_{\ACT}$ is a weak
%joe44
%implementation
%, then
implementation, then
the number of messages in a history of
$\vec{\sigma}_{\ACT}$ is $O(nc)$ (and $\vec{\sigma}_{\ACT}$ continues
to be a (strongly) $(k,t)$-robust equilibrium in $\Gamma_{\ACT}$ if
$\vec{\sigma}$ is a (strongly) $(k,t)$-robust equilibrium in $\Gamma_d$).
%joe10
%it runs in a constant expected number of rounds, independent of $\vec{u}$. 
%ivan84: not anymore!
%and the expected number of messages in histories of
%$\vec{\sigma}_{\ACT}$ sent is $O(nNc)$.
\end{theorem}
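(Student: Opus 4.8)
The plan is to construct $\vec{\sigma}_{\ACT}$ in the two-stage style of ADGH (going back to \cite{GMW87}): a secure computation of the mediator's circuit, followed by a ``rational'' reconstruction of its outputs. In the first stage each player verifiably secret-shares its input; the players then evaluate the mediator's circuit gate by gate and compute, for each player $i$, a secret sharing of the whole transcript of messages the mediator would have sent $i$. This transcript is well-defined and bounded because $\vec{\sigma}+\sigma_d$ is in canonical form: the mediator sends $i$ at most $r$ messages and at most $N$ in all. In the second stage the players reveal their shares, message by message, so that each $i$ learns in turn the messages it would have received from the mediator and, finally, its recommended move in $\Gamma$, which it then plays. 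Throughout we exploit the AH mechanism: player $i$'s ``will'' specifies that if its cheap-talk history indicates the protocol has deadlocked, $i$ plays its component $\rho_i$ of the given $(k+t)$-punishment strategy $\vec{\rho}$; otherwise $i$ plays the move computed in stage two. (We also use the assumed start signal to give all players a common starting point.)

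The new ingredient compared with Theorem~\ref{thm:upperbound-default-no-punish} is to use the punishment to manufacture synchrony, so that a \emph{synchronous}-style secure-computation and verifiable-secret-sharing protocol can be used. We organize the cheap talk into virtual rounds: in virtual round $j$ every player sends its round-$j$ messages and then waits until it has collected valid round-$j$ messages from $n-t-1$ other players before starting round $j+1$. Since in equilibrium at least $n-t$ players (all honest players together with all rational players) follow the protocol, and the environment must deliver every message, no well-behaved player ever deadlocks. But a player acts in round $j$ on the basis of only $n-t$ players' messages (itself plus the $n-t-1$ it heard from), of whom as many as $k+t$ may be deviators; so the underlying synchronous protocol must tolerate $k+t$ faults among $n-t$ participants, which needs $n-t>3(k+t)$, i.e.\ $n>3k+4t$ --- this is precisely the extra $t$ relative to the $n>3k+3t$ of the fully synchronous setting. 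If some player is ever unable to collect $n-t-1$ valid round-$j$ messages --- which, on any path where at most $k+t$ players deviate, forces more than $t$ players to have gone silent --- it treats this as a deadlock and, by its will, plays $\rho_i$. By definition of a $(k+t)$-punishment strategy with respect to $\vec{\sigma}+\sigma_d$, every deviator is then strictly worse off than under $\vec{\sigma}+\sigma_d$; hence no deviation that provokes a deadlock can be profitable. This is what lets the punishment absorb the cost of asynchrony in the $t$ direction rather than the $k$ direction.

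The crux of the argument --- and the step I expect to be the main obstacle --- is the rational reconstruction: in the asynchronous model a player may already hold enough shares to recover a secret before it has sent its own share, so the usual simultaneous-exchange argument breaks. I would handle this with the two-phase per-round secret-sharing protocol sketched in the introduction. In the first phase of a round, a value is shared that is equally likely to be the true secret or a random decoy; in the second phase, a shared coin reveals which it was. A player gains nothing by withholding its share, since the honest players can still reconstruct from their $\Theta(n-k-t)$ honest shares (choosing the sharing degree so that these determine the sharing polynomial while $k+t$ shares do not, and using Reed--Solomon decoding as in ADGH to tolerate the possibly-corrupted shares); and it gains nothing by lying, since check vectors in the style of Rabin--Ben-Or catch a false share with high probability when the phase-one value is a decoy, while a false share that is detected when the phase-one value is the true secret still leaves the honest players able to reconstruct.

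Finally, correctness of the implementation is checked by verifying that, ranging over all environment strategies in $\Gamma_{\ACT}$, the set of induced distributions on action profiles of $\Gamma$ equals the set induced by $\vec{\sigma}+\sigma_d$ ranging over all environment strategies in $\Gamma_d$; the virtual-round structure is what ensures the cheap-talk game admits enough environment strategies to realize each mediator-game scheduler. The $(k,t)$-robustness (resp.\ strong $(k,t)$-robustness) transfer then follows from the dichotomy above: a deviation by at most $k+t$ players either produces honest-player views indistinguishable from an equilibrium execution --- so the implementation property pins down the same outcome distribution, and the security of the secure computation together with the secret-sharing incentives prevent any of the rational deviators from gaining --- or causes a deadlock and triggers $\vec{\rho}$, making the deviators strictly worse off. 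For the last clause, if the punishment is \emph{strong} (so the threat stays credible conditional on every local history) or if only a weak implementation is required (so the awkward schedulers need not be reproduced), the two-phase reconstruction can be run for a bounded number of rounds, independent of any error parameter, and a routine count gives the claimed $O(nc)$ bound on the number of messages.
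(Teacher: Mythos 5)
Your plan diverges substantially from the paper's, and it leaves open a gap that the paper explicitly identifies and then spends most of the proof closing.

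The paper does not build $\vec{\sigma}_{\ACT}$ directly via virtual rounds and a two-phase rational secret-sharing protocol. Instead it treats the secure-computation machinery as a black box: it invokes Theorem~\ref{thm:errorless-simulation} from the companion paper to obtain a cheap-talk strategy that $t$-coterminates and relaxed $(t,t')$-emulates $\vec{\sigma}+\sigma_d$, so that any deadlock caused by a deviating coalition corresponds, under the emulation, to a \emph{relaxed scheduler} in the mediator game that withholds the STOP messages. The burden of the proof is then to show that no coalition can gain under a relaxed scheduler, and that is where the punishment enters.

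Here is the step your argument would fail on. You write that a deviation ``either produces honest-player views indistinguishable from an equilibrium execution \dots\ or causes a deadlock and triggers $\vec{\rho}$, making the deviators strictly worse off.'' But the definition of a $(k+t)$-punishment strategy bounds the deviators' payoff under $\vec{\rho}$ only against their \emph{ex ante} expected equilibrium payoff, not conditional on what they have already learned during the cheap talk. Your protocol securely computes and then \emph{reveals to $i$ the whole transcript of messages the mediator would have sent $i$}. Those messages can leak information that makes the punishment payoff conditionally attractive. The paper gives a concrete counterexample: the mediator sends each $i$ the value $a+bi\bmod 2$ before announcing $b$. A coalition containing $i,j$ with $i-j$ odd recovers $b$ from these intermediate messages; if $b=0$ they strictly prefer the payoff from the punishment profile $\bot$ (which gives $1.1$) to their conditional equilibrium payoff ($1$), so they deliberately stall and let the wills fire $\vec{\rho}$. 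Your virtual-round mechanism does not prevent this, because the deviators are not trying to avoid the deadlock --- they are trying to \emph{cause} it, after having learned something useful.

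The paper's fix, which is the real content of the proof, is to replace $\vec{\sigma}+\sigma_d$ by a \emph{minimally informative} strategy $\mifunction(\vec{\sigma}+\sigma_d)$ in which the mediator's messages before the STOP carry no information (just round counters), and only the final STOP message contains an action. Lemma~\ref{lem:correctness-of-construction} shows this transformation preserves (strong) $(k,t)$-robustness and (weak) implementation, and Proposition~\ref{lemma:better-mediator-errorless} then shows that for a minimally informative $(k,t)$-robust strategy with a $(k+t)$-punishment, a relaxed scheduler cannot help a coalition: if a run terminates the relaxed scheduler is payoff-equivalent to some non-relaxed one, and if it deadlocks the punishment definition applies because nothing was learned before the STOP. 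Combined with $t$-cotermination (so that either all honest players terminate or all execute their will) and Corollary~\ref{lemma:scheduler-proof}, this gives the robustness transfer. The last clause of the theorem is also not the ``routine count'' you suggest: the distinction between implementation (which needs to realize \emph{every} mediator-game scheduler, costing $\msgbound$ messages in the minimally informative construction) and weak implementation or a strong punishment strategy (which drops that requirement, giving $O(n)$ extra messages and hence $O(nc)$ overall) is exactly where the two message bounds come from.

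In short: your virtual-round synchronization and two-phase secret-sharing ideas are in the spirit of the discussion in the introduction, but they do not by themselves establish the theorem, and you would still need something like the minimally-informative reduction to rule out information-triggered deliberate deadlocks.
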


%joe2
%If we assume both a punishment strategy and that utilities are
%joe29
%If we assume both the existence of $(2t+2k)$-punishment strategy and
%joe32
Note that in Theorem~\ref{thm:punish}, the running time of the
algorithm is significantly affected by whether we want
$\vec{\sigma}_{\ACT}$ to implement $\vec{\sigma}$ or whether a weak
implementation suffices.

%joe30
If we assume both that there is a $(2k+2t)$-punishment strategy and
that utilities are 
known, we can get an analogue to R2, but with an $\epsilon$ error.

\begin{theorem}\label{thm:punish-eps}
If $\Gamma$ is a normal-form Bayesian game with $n$ players,
%joe31
%$\vec{\sigma}$ is a strategy profile
$\vec{\sigma}+\sigma_d$ is a strategy profile
%joe15*
in canonical form
%joe20
%for a game $\Gamma_d$ with
%joe31
%for the players and mediator in a game $\Gamma_d$ with
for the players and mediator in an asynchronous mediator game $\Gamma_d$ 
%joe5
%an asynchronous mediator $d$ that extends $\Gamma$, there is a
%joe31
%a mediator $d$
that extends $\Gamma$, there is a
%joe2: again
%$(t+1)$-punishment strategy with respect to $\vec{\sigma}$, and $n >
%joe29
%$(2t+2k)$-punishment strategy with respect to $\vec{\sigma}$, and $n >
%joe31
%$(2k+2t)$-punishment strategy with respect to $\vec{\sigma}$, and $n >
$(2k+2t)$-punishment strategy with respect to $\vec{\sigma} + \sigma_d$, and $n >
%joe16
%3k + 3t, then, with the AH approach, for all $\epsilon > 0$ there
%joe31**: otherwise, this is subsumed by the earlier theorem!  It
%should be either 2k+3t or (as we claimed in the intro) 2k+4t.  Please check!
%3k + 3t$, then with the AH approach, for all $\epsilon > 0$ there
2k + 3t$, then with the AH approach, for all $\epsilon > 0$ there
%joe31
%exists an asynchronous cheap-talk game $\Gamma_{\ACT}$ and a strategy
is a strategy
%joe5
%$\vec{\sigma}_{\ACT}$ such that $(\Gamma_{\ACT}, \vec{\sigma}_{\ACT})$
%$\epsilon$-implements $(\Gamma_d, \vec{\sigma})$.
$\vec{\sigma}_{\ACT}$ that 
%ivan10
%implements
%joe7*: Again, I don't think we want \epsilon-implements.  
%$\epsilon$-implements 
%$\sigma_d$ such that if
%ivan32: changed for eps-quasi-implements
%joe20
%implements
%$\epsilon$-quasi-implements
$\epsilon$-implements 
%joe20
%$\vec{\sigma}$ such that if
$\vec{\sigma}$
%joe31
in the asynchronous cheap-talk game $\Gamma_{\ACT}$
such that if
%joe14
%$\vec{\sigma}$ is a $(k,t)$-robust equilibrium in $\Gamma_d$, then 
%joe20
%$\vec{\sigma}$ is a (strongly) $(k,t)$-robust equilibrium in $\Gamma_d$, then 
$\vec{\sigma} + \sigma_d$ is a (strongly) $(k,t)$-robust equilibrium in $\Gamma_d$, then 
%joe7*
%$\vec{\sigma}_{\ACT}$ is a $(k,t)$-robust
%joe14
%$\vec{\sigma}_{\ACT}$ is an $\epsilon$-$(k,t)$-robust
$\vec{\sigma}_{\ACT}$ is a (strongly)  $\epsilon$-$(k,t)$-robust
%joe31
%equilibrium in $\Gamma_{\ACT}(\vec{u}')$
equilibrium in $\Gamma_{\ACT}$,
%joe32*
and the number of messages sent in a history of
%ivan91* I'm not sure if this is true... N! is not O(2^N). I would 
%joe34: You're right; there is an issue here.  We'll need to compute
%carefully what the right number is.  See some of my later comments.
$\vec{\sigma}_{\ACT}$ is $O(n2^Nc)$.
If 
%ivan91
there exists a strong $(k,t)$-punishment strategy or
we require only that $\vec{\sigma}_{\ACT}$ is a weak
implementation, then
the number of messages in a history of
$\vec{\sigma}_{\ACT}$ is $O(nc)$ (and $\vec{\sigma}_{\ACT}$ continues
to be a (strongly) $(k,t)$-robust equilibrium in $\Gamma_{\ACT}$ if
$\vec{\sigma}$ is a (strongly) $(k,t)$-robust equilibrium in $\Gamma_d$).
%joe10*: what can we say about the number of messages.
%ivan13: it increases the lower the epsilon, but I think it is O(log{eps^-1})
%joe11*: we should try to prove this carefully, but let's leave it out
%for now.
\end{theorem}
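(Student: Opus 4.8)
The plan is to follow the general paradigm of ADGH (itself going back to GMW): have the players jointly simulate the arithmetic circuit that computes the messages the mediator would send, using their private inputs as the circuit inputs, and then have them reveal the circuit outputs to one another via a rational secret-sharing subprotocol that supplies the right incentives. The new ingredient, available because we have a $(2k+2t)$-punishment strategy $\vec{\rho}$ and are working in the AH approach, is that deadlock can be turned into punishment: whenever a good player is waiting for a message that never arrives, it instructs its executor, in its will, to play its part of $\vec{\rho}$ in the underlying game. This lets us run the cheap-talk protocol in \emph{virtual rounds}: in each round every good player waits for appropriate messages from $n-t-1$ others (ignoring malformed messages as coming from malicious players) before proceeding, and if it does not receive them it deadlocks and hence punishes. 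Since $n > 2k+3t \ge 3t$, this forces enough synchrony that the circuit-simulation phase can be carried out by a synchronous (BGW-style) multiparty computation protocol: robustness against the $t$ malicious players follows from $n > 3t$, and any detectable deviation by a rational player triggers punishment, so with utilities known no rational coalition gains more than $\epsilon$ by deviating in this phase. The canonical-form hypothesis on $\vec{\sigma}+\sigma_d$ (the mediator sends each player at most $r$ messages, the last containing STOP) is what tells the cheap-talk protocol when to stop talking and move in $\Gamma$.

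For the reveal phase I would use a novel two-phase rational secret-sharing protocol. Each virtual round has two phases: in the first, a value is secret-shared that is either the true circuit output or a uniformly random field element (the players cannot tell which); in the second, a coin is secret-shared that indicates whether the first-phase value was the true output. The shares are protected by information-checking vectors in the style of Rabin--Ben-Or, so that a player sending an incorrect share is caught except with probability that can be driven below any desired bound by enlarging the field; this statistical failure probability is exactly what produces the $\epsilon$ error (no cryptography is needed). One then verifies the equilibrium conditions phase by phase: deviating in the first phase by sending a wrong value probably reveals nothing and risks being caught and punished; sending nothing still lets the other players reconstruct; and lying causes a detectable inconsistency, after which the good players know the first-phase value was genuine. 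The two-phase structure is also what neutralizes the asynchronous subtlety that a player may already hold enough shares to reconstruct before it has sent its own share, since at that point it still does not know whether the value it could reconstruct is the real secret.

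The message bound requires care. To obtain a genuine (not merely weak) $\epsilon$-implementation we must match, within $\epsilon$, the output distributions produced by \emph{every} scheduler of $\Gamma_d$, and conversely; since a history of $\Gamma_d$ contains up to $N$ messages, the number of essentially distinct mediator schedules — and hence of distributions we must be able to reproduce — can be as large as $2^{\Theta(N)}$, forcing $\vec{\sigma}_{\ACT}$ to run long enough to simulate all of them, which gives the $O(n2^N c)$ bound. If instead a strong punishment strategy is available, or we ask only for a weak implementation, it suffices to reproduce the ``good'' schedulers, the virtual-round structure collapses to a bounded number of rounds, the message count drops to $O(nc)$, and the construction in fact yields an exact implementation, which is why the final sentence of the theorem drops the $\epsilon$.

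The main obstacle I expect is the design and analysis of the two-phase secret-sharing subprotocol: proving that it is simultaneously an $\epsilon$-$(k,t)$-robust equilibrium and correctly resolves the ``shares-before-sending'' issue, and doing so uniformly across all the case distinctions in the statement (strong versus non-strong robustness, ordinary versus strong punishment, implementation versus weak implementation), is where essentially all of the work lies; the circuit-simulation and virtual-round machinery, by contrast, is a fairly direct adaptation of the synchronous constructions once the punishment-as-deadlock device is in place.
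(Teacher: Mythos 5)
Your proposal and the paper's proof diverge in where the work gets done. The paper does not construct the cheap-talk protocol inside the proof of Theorem~\ref{thm:punish-eps}; it delegates that construction entirely to a companion paper, invoking Theorem~\ref{thm:error-simulation} to obtain a strategy $\vec{\sigma}_{\ACT}$ that $(\epsilon,t)$-coterminates, $(\epsilon,t,t')$-emulates, and $(\epsilon,t)$-bisimulates $\vec{\sigma}+\sigma_d$, and concentrates the present argument on two game-theoretic steps you omit: (i) replacing $\vec{\sigma}+\sigma_d$ by the \emph{minimally informative} strategy $\mifunction(\vec{\sigma}+\sigma_d)$, so that the mediator leaks nothing but round numbers and a final action---otherwise, as the paper's $a+bi$ example shows, a rational coalition that decodes an intermediate mediator message can deadlock exactly when it prefers the punishment outcome; and (ii) Proposition~\ref{lemma:better-mediator}, proved by a compactness argument (Proposition~\ref{prop:compactness-robustness}), showing that against the minimally informative mediator, a coalition playing with a \emph{relaxed} scheduler cannot beat its equilibrium payoff by more than some $\epsilon_0 < \epsilon$. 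The $\epsilon$-$(k,t)$-robustness then falls out of chaining the emulation equalities with Proposition~\ref{lemma:better-mediator} and the scheduler-proofness corollary. Your sketch, by contrast, is essentially a description of what one expects the companion paper's construction to look like (two-phase secret sharing, information-checking vectors, punishment written into wills); that is not wrong, but it is a proof of a different paper's theorem.

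Within your sketch there are also a few concrete gaps. First, the claim that punishment ``forces enough synchrony that the circuit-simulation phase can be carried out by a synchronous (BGW-style) multiparty computation protocol'' misidentifies the mechanism: the protocol stays asynchronous, and the $t < n/3$ bound at the $\epsilon$-level comes from BKR-style asynchronous MPC with statistical error, not from a synchrony argument; punishment in this theorem is there to deter relaxed-scheduler deadlocks, not to synchronize the circuit phase. Second, you do not address \emph{cotermination}: if only some honest players' virtual rounds stall while others terminate and move, the wills execute only a partial punishment and the deterrent fails; the paper needs $(\epsilon,t)$-cotermination precisely to rule this out, and your $n-t-1$-threshold waiting rule does not, by itself, provide it. Third, you neither use nor motivate the $(2k+2t)$-punishment hypothesis (versus the $(k+t)$-punishment used for Theorem~\ref{thm:punish}); that distinction has to do actual work somewhere and a proof should locate it. Finally, your last paragraph reads the $O(nc)$/weak-implementation parenthetical as yielding an exact implementation, but the $\epsilon$ in this theorem originates in the statistical error of the $(\epsilon,t)$-emulation, not in the count of scheduler-sampling rounds, so cutting the message bound does not by itself eliminate the $\epsilon$.
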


\shortv{We prove these results using ideas in the spirit of ADGH, but
  much more care must be taken to deal with asynchrony.
  Among other
  things, we need stronger security guarantees than are traditionally
  provided for multiparty 
%joe42
  % communication.  The details can be found in the appendix.}
%\newpage
%joe4
  %  communication; see Appendix~\ref{sec:simulation} for details.}
  communication; see Section~\ref{sec:simulation} for details.}
  %joe47: this is the full paper!
\disc{We provide a detailed proof of
  Theorem~\ref{thm:upperbound-default-no-punish} in
Section~\ref{sec:thm4.1proof}, then sketch the remaining arguments.
Details can be found in the full paper \cite{ADGH18}.}
\arxiv{We provide proofs of all the results in Section~\ref{sec:proofs}.}

\fullv{
Finally, we can strengthen Theorem~\ref{thm:punish} by taking into
account the utility functions of the rational agents (note that
Theorem~\ref{thm:AH1} does not hold for all utility variants): 

\begin{theorem}\label{thm:AH1}
If $\Gamma$ is a normal-form  Bayesian game with $n$ players,
%joe5
%$\vec{\sigma}$ is a strategy for a game $\Gamma_{\!\!d}$ with
%an asynchronous mediator $d$ that extends $\Gamma$, there is a
%joe20
%$\vec{\sigma}$ is a strategy in game $\Gamma_{\!\!d}$ with
$\vec{\sigma}$ is a strategy for the players and mediator in
a game $\Gamma_{d}$ with 
a mediator $d$ that extends $\Gamma$, there is a
%joe5*: changed back; in this case k+t suffices, right?
%ivan4: we need 2k + 2t as well!
$(2k+2t)$-punishment strategy with respect to $\vec{\sigma}$, and $n >
%joe16
%2k+4t$, then, with the AH approach, there exists an asynchronous
2k+4t$, then with the AH approach, there exists an asynchronous
cheap-talk game $\Gamma_{\ACT}$ and a strategy
%joe5
%$\vec{\sigma}_{\!\!\ACT}$ such that $(\Gamma_{\!\!\ACT},
%\vec{\sigma}_{\!\!\ACT})$ implements $(\Gamma_{\!\!d},
%\vec{\sigma})$.
%joe7
%$\vec{\sigma}_{\!\!\ACT}$ that implement $\vec{\sigma}$ such that if
$\vec{\sigma}_{\ACT}$ that 
%ivan32: changed again
%joe20
implements
%quasi-implements 
$\vec{\sigma}+ \sigma_d$ such that if
%joe14
%$\vec{\sigma}$ is a $(k,t)$-robust equilibrium
%in $\Gamma_d$, then $\vec{\sigma}_{\ACT}$ is a $(k,t)$-robust
%joe20
%$\vec{\sigma}$ is a (strongly) $(k,t)$-robust equilibrium
$\vec{\sigma}$ is a (strongly) $(k,t)$-robust equilibrium
in $\Gamma_d$, then $\vec{\sigma}_{\ACT}$ is a (strongly) $(k,t)$-robust
equilibrium in $\Gamma_{\ACT}$.
%joe10*: what can we say about the number of messages.
%ivan13: I think in this case it is something like log (maximum payoff
%- punishment payoff). However it is better to first complete the
%proof, and then this value comes out straightforwardly.
%joe11: I agree that it's better to do the proof first.
\end{theorem}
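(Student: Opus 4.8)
The plan is to combine the two-phase rational secret-sharing construction behind Theorem~\ref{thm:punish-eps} with a perfect, error-free asynchronous multiparty computation, thereby strengthening Theorem~\ref{thm:punish}: using the rational players' utilities together with a $(2k+2t)$- rather than a $(k+t)$-punishment lowers the $k$-coefficient from $3$ to $2$, while replacing the statistically secure computation of Theorem~\ref{thm:punish-eps} by a perfect one buys an exact rather than an $\epsilon$ implementation, at the price of $3t\to 4t$. Concretely, $\vec\sigma_{\ACT}$ proceeds in virtual rounds in which each honest player waits for appropriate messages from $n-t-1$ of the others before proceeding and, under the AH approach, writes into its will the instruction to play the punishment action prescribed by $\vec\rho$ if the game ever ends while it is still waiting. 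Within this structure the players run a perfect verifiable secret sharing of their inputs (with information checking) followed by a perfect asynchronous evaluation of the circuit $\sigma_d$, using the strengthened secret-sharing and degree-reduction primitives of the companion paper~\cite{GH18} (the off-the-shelf primitives of \cite{BCG93,BKR94} are not quite strong enough), producing for each message the mediator would have sent a Shamir sharing of degree $k+t$. Privacy of degree-$(k+t)$ sharing against the coalition $K\cup T$, together with the robustness requirements of the perfect asynchronous computation against $t$ faulty players (in particular the degree-$2(k+t)$ sharings produced by multiplication gates), is what makes $n>2k+4t$ the right threshold.

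Rather than open the shared outputs in one shot, the players then run the two-phase rational secret-sharing protocol: each round first reconstructs a value that, according to a hidden biased coin, is either the true share or a random value, and then reconstructs the coin, with the index of the ``true'' round itself fixed inside the computation so that reconstruction terminates with certainty. The coin bias $\beta$ and the number of rounds are chosen as functions of the utility bound (the largest swing $M$ of the $u_i$'s) and of the positive equilibrium-versus-punishment gap $\delta$ guaranteed by $\vec\rho$, with $\beta$ small enough that $\beta M<(1-\beta)\delta$. It is exactly this dependence of the protocol on $\vec u$ that is responsible for Theorem~\ref{thm:AH1} not holding for all utility variants of $\Gamma_d$, unlike Theorem~\ref{thm:punish}.

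To establish that $\vec\sigma_{\ACT}$ implements $\vec\sigma+\sigma_d$ I would exhibit, for every scheduler in $\Gamma_{\ACT}$, a scheduler in $\Gamma_d$ inducing the same distribution on action profiles of $\Gamma$, and conversely: the forward direction uses perfect correctness of the computation and the fact that, when no one deviates, no honest player deadlocks; the backward direction uses that $\vec\sigma_{\ACT}$ involves far more message exchanges than $\Gamma_d$, so that any pattern of delays of the mediator's messages can be matched by delaying cheap-talk messages, the honest players' wills being consulted only off the implemented path. For robustness, assume $\vec\sigma+\sigma_d$ is (strongly) $(k,t)$-robust in $\Gamma_d$. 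Then $t$-immunity holds because, if only the at most $t$ players in $T$ deviate, every other player runs $\vec\sigma_{\ACT}$, no honest player deadlocks, the computation stays robust, and each honest player does at least as well as at equilibrium, by $t$-immunity of $\vec\sigma+\sigma_d$ in $\Gamma_d$ and the implementation. For $k$-resilience of the residual profile after fixing $T$ to an arbitrary $\vec\tau_T$: a coalition $K\cup T$ of size at most $k+t$ learns, by privacy of the degree-$(k+t)$ computation, nothing beyond what $\vec\sigma+\sigma_d$ reveals to it in $\Gamma_d$ up to the start of reconstruction, and the two-phase analysis shows no rational player gains by deviating in reconstruction --- withholding either leaves the others able to reconstruct or creates a deadlock and triggers punishment; sending a wrong value is caught by information checking with probability $1-\beta$ and then punished; and acting in the underlying game on a single round's value is a strictly losing gamble once $\beta M<(1-\beta)\delta$. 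Hence any coalition strategy in $\Gamma_{\ACT}$ can be simulated by one in $\Gamma_d$ that gives each member of $K$ at least as high a payoff, so a deviation helping every (resp.\ some) member of $K$ in $\Gamma_{\ACT}$ would yield one helping every (resp.\ some) member of $K$ in $\Gamma_d$, contradicting (strong) $(k,t)$-robustness of $\vec\sigma+\sigma_d$. A $(2k+2t)$-punishment, rather than the $(k+t)$-punishment of Theorem~\ref{thm:punish}, is needed to absorb the slack that asynchronous detection introduces into the reconstruction phase: when a deviation triggers punishment, some honest players may make an underlying-game move via their will before learning that punishment is called for, so the threat must remain effective even when a larger set fails to carry it out.

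The hard part will be exactly this reconstruction-phase analysis: arranging simultaneously that honest players are never left without a correct move, that every reconstruction-phase deviation by a rational coalition is strictly unprofitable given only a bounded-strength punishment and the known utilities, and that the asynchronous detection lag requires a punishment of strength at most $2(k+t)$ and no more --- and getting all three to close at the threshold $n>2k+4t$. This is where the two-phase protocol and the strengthened asynchronous computation primitives of \cite{GH18} do the real work; by comparison, the scheduler-simulation bookkeeping behind the implementation claim is routine given the virtual-round structure.
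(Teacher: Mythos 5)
The paper does not actually supply a proof of Theorem~\ref{thm:AH1}; it is stated at the end of Section~\ref{sec:upperbounds} as a strengthening of Theorem~\ref{thm:punish} ``taking into account the utility functions,'' and Section~\ref{sec:proofs} proves only Theorems~\ref{thm:upperbound-default-no-punish}--\ref{thm:punish-eps}. So the comparison here is against the paper's high-level framework rather than against a written-out argument. Your high-level plan---run the two-phase rational reconstruction that the full version sketches in its introduction on top of the perfect (rather than statistical) asynchronous simulation from \cite{GH18}, calibrate the bias and round count using the known utility swing $M$ and punishment gap $\delta$, and lean on the $(2k+2t)$-punishment rather than the $(k+t)$-punishment of Theorem~\ref{thm:punish}---is consistent with both the paper's informal discussion and the overall pattern of bounds ($3t\to 4t$ for perfect computation; a drop of $k$ in the coefficient when utilities are exploited). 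Your explanation of why the result cannot hold for all utility variants is also exactly the right one.

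There are, however, two concrete gaps you should close. First, the claim of \emph{implementation} rather than weak implementation is not earned by the observation that $\vec{\sigma}_{\ACT}$ ``involves far more message exchanges.'' In the proof of Theorem~\ref{thm:punish} the paper has to construct the \emph{minimally informative} strategy $\mifunction(\vec{\sigma}+\sigma_d)$ precisely to guarantee that the cheap-talk game can hit every equivalence class of mediator-game schedulers, and the price is a blow-up of roughly $\msgbound$ messages: the number of cheap-talk messages is deliberately inflated, and the mediator's behavior is reorganized so that the only information leaked before the STOP message is a round counter. Without an analogous transformation you only get weak implementation (and you need the leak-freeness anyway to justify that rational players have no incentive to deadlock). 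You cannot treat this as bookkeeping; it interacts with your reconstruction-phase analysis, since the mediator you simulate must itself be minimally informative.

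Second, the route through Theorem~\ref{thm:errorless-simulation} that the other proofs take gives $(k+t)$-cotermination only under $k+t<n/3$. At $n>2k+4t$ this fails whenever $k>t$, so the all-or-nothing property of honest players' wills that Theorem~\ref{thm:punish} relies on is simply unavailable. You correctly identify that the $(2k+2t)$-punishment is what absorbs partial termination, but this means you cannot invoke the Section~\ref{sec:simulation} framework as a black box the way Theorems~\ref{thm:upperbound-default-no-punish}--\ref{thm:punish-eps} do: you would need to re-derive, from the VSS/degree-reduction primitives of \cite{GH18}, a substitute for Proposition~\ref{lemma:better-mediator-errorless} in which the adversary controls a $(k+t)$-coalition, only $t<n/4$ players lie, up to another $k+t$ honest players may already have moved when the deadlock is detected, and the threshold accounting (degree-$(k+t)$ sharings, degree-$2(k+t)$ products, waiting for $n-t$ responses, tolerating $t$ wrong and $k+t$ missing shares) genuinely closes at $2k+4t$. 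That is exactly the part you flag as ``the hard part,'' and at present it is asserted rather than argued; as written the bound $n>2k+4t$ is plausible but not derived.
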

}
%joe16: \end{fullv}

%joe3: simulation is not quite equivalent to performing secure
%multiparty computation.  We already have some glue at the beginning of
%the next section; that should be enough.
%We will prove these five theorems by making the players simulate
%the mediator by themselves, which is equivalent to performing secure
%multiparty computations. This relationship is emphasized in the next
%section.

%joe27
%\section{Strong $t$-simulation}\label{sec:simulation}
%joe30
%joe44: we don't need any appendix here
%\shortv{
%\begin{center}
%  {\Large {\bf Appendix}} \end{center}
%  \appendix }
\section{$t$-bisimulation and $t$-emulation}\label{sec:simulation}

%joe26*: Ivan, you need to explain the big picture here.  What kinds
%of results does paper 2 have?  How do they relate to other results.
%It's not helpful to say ``we use t-simulation and strong
%t-simulation''.
To construct the cheap-talk protocol for
Theorems~\ref{thm:upperbound-default-no-punish} and
\ref{thm:upperbound-default-no-punish1}, we use ideas from a companion
paper \cite{GH18}, where we provide constructions that extend the
security guarantees given by the multi-party computation protocols for 
%joe27: Ivan, you can just write BGW/BCG without explaining who they are
%joe28
%\citeyear{BCG93}.  Roughly speaking, BGW88/BCG93 provide protocols
the synchronous case by Ben-Or, Goldwasser, and Wigderson
\citeyear{bgw} (BGW from now on) and the asynchronous case by
Ben-Or, Canetti, and Goldreich
\citeyear{BCG93} (BCG from now on).
%joe30
We briefly review the main details here.

%joe44: line shaving
%Roughly speaking,
BGW/BCG
%joe44
%provide protocols
%%%%ivan80
%joe27
%that compute arbitrary functions $f$ of $n$ inputs in a network of $n$
%players even when some of those players are malicious. BGW88 deals
%joe44
%that show if a function $f$ of $n$ inputs that are provided by $n$
show that if a function $f$ of $n$ inputs provided by $n$
players can be computed using a mediator, then it can be computed by
the players without the mediator and without revealing any information
beyond the function value, even when some of the players are 
malicious. BGW deals 
with the synchronous case and tolerates up to $n/3$ malicious players,
%joe27
%while BCG93 deals with the asynchronous case and tolerates up to
while BCG deals with the asynchronous case and tolerates up to
%joe27*: Ivan, this next few lines don't make sense.  They can't
%define a set of outputs and ensure the the real output is
%identically distributed to a set.  I also don't know what an ``ideal
%output'' is.  The set of possible outputs has to be the same in both
%cases, otherwise the distributions can't be identical.  I'm pretty
%sure that BGW provide a distribution over outputs in the ideal scenario
%and ensure that the distribution over outputs in the real scenario is
%the identical.  Whatever the truth is, the sentence elow has to be
%rewritten.   You also have to explain in more detail what ``their
%approach is trying to accomplish''.  I helped you by adding the
%phrase about not revealing any information.  The way they capture
%``not revealing any information'' is to do what you say below.  This
%has to be explained!
 $n/4$. 
%ivan81
%Their approach is to define a set of \emph{valid} or \emph{ideal} outputs
%joe29
%Their approach is to define a set of \emph{valid} or \emph{ideal}
The notion of not revealing any information is made precise by
%joe30*: Ivan, is the set a singleton in the synchronous case?  That
%is, does the fact that we have a set come from the set of schedulers?
%If so, don't we have bisimulation already in the synchronous case
%(which means that we have implementation, right?)  If not, where does
%the set come from.  Either way, we need to say a little more here, I think.
defining a set of \emph{ideal}
%joe28
%distribution of outputs\footnote{For a precise description of the
%joe29
%distributions of outputs\footnote{For a precise description of the
%ideal distributions for securely computing a function of $n$ inputs
%  see \cite{bgw} and \cite{BCG93}.}
distributions over possible values of the function,
%ivan81
%joe29
%and ensure that the \emph{real}
and ensuring that the \emph{real}
%ivan81
%output
distribution  
is
%joe29
%identically distributed to one of those.
identically distributed to one of those (see BGW and BCG
%joe44
%for the formal definitions and further details).
for formal definitions and details).

%joe27: again, you can't talk about the set of ideal outputs.  You
%need a distribution.
%ivan81:
%In our case, by taking the set of ideal outputs to be any possible
%outcome in the mediator game when all honest players play the
%equilibrium strategy, BCG's protocol is not sufficient for our
%joe28
%In our case, by taking this set to be any possible
%distribution of outcomes in the mediator game when all honest players play the
%joe29: slowed down
%In our case, if we take  this set to consist of all possible
%distributions of outcomes in the mediator game when all honest players
We can view a mediator game as computing an action
profile in the underlying game; the ideal distributions are the possible
distributions over action profiles when the honest players
%joe44
%play the
play 
%ivan81
%equilibrium
%joe28: there is not (k,t)-robust strategy, only (k,t)-robust equilibrium
%$(k,t)$-robust strategy, BCG's protocol is not sufficient for our
their component of the $(k,t)$-robust equilibrium strategy profile
%joe29
in the mediator game.  BCG's protocol then essentially gives us a
strategy in the cheap-talk game.  However, the BCG protocol
is not sufficient for our 
%joe27*: I thougt that there were other reasons that it didn't
%suffice, like the need to do several invocations and maintain secrecy
%throughout. 
%purposes since it does not guarantee implementation nor it guarantees
%that it is a best response for rational players. For this purpose we
%define next some additional properties that $\vec{\sigma}_{\ACT}$ must
%satisfy.
%ivan81: BCG mantains secrecy, the only real issue is implementation
%[also we would have to prove afterwards that sticking to BCG's
%protocol is a best-response] 
purposes for two reasons: it does not guarantee that the real protocol is
an implementation of the ideal protocol in the sense of the
definition in Section~\ref{sec:definitions}
%joe44*
(although it does suffice for \emph{weak} implementation),
nor does it  guarantee 
%joe44
%that playing the protocol is a $(k,t)$-robust equilibrium.
that the protocol is a $(k,t)$-robust equilibrium.
%joe29
%To prove these stronger result, we first need $\vec{\sigma}_{\ACT}$ to
%joe44
%To prove these stronger results, we first show that
To prove these stronger results, we show that
$\vec{\sigma}_{\ACT}$ can be constructed so as to
satisfy some additional security properties, which we now define.
%with the following properties [[Ivan, explain the properties of the
%protocols of these papers and, at a high level the additional
%properties that we provide.  Then say that these additional properties
%will be critical for our proof.]] 
%%%%%
%We review the definition of the
%relevant properties in the asynchronous setting here.

%ivan81: changed both definitions to include adversaries

\begin{definition} [$t$-bisimulation]
%joe26*: What if the protocol is randomized?  Also, why do you
%subscript \tau by T but not subscript 
% Let $O(\pi + \pi_d, T, \vec{\tau}_T, \sigma_e, \vec{x})$ be the output
%of protocol $\pi$ with scheduler $\sigma_e$ and input $\vec{x}$, in  
%which players in $T$ deviate by following strategy
%$\vec{\tau}_T$. Given a protocol $\pi$ for $n$ players to be executed
%with a trusted mediator $\pi_d$ and a protocol $\pi'$ for $n$ players,
%we say that $\pi'$ $t$-simulates $\pi + \pi_d$ if the following holds
  %joe44: put this here
  Take an \emph{adversary} $A$ to be a pair $(\vec{\tau}_T,\sigma_e)$
    consisting of a strategy for the malicious players and an
    environment strategy.
  Let $O(\vec{\pi} + \pi_d, A, \vec{x})$ be the
  distribution over outputs when running strategy $\vec{\pi}$ with
%joe44
%some
 adversary $A = (\tau_T, \sigma_e)$.  Protocol $\vec{\pi}'$
 \emph{$t$-bisimulates}  
$\vec{\pi} + \pi_d$ if,
%joe44
% for every subset $T$ of malicious players with $|T| \le t$ and for
 %every input $\vec{x}$:
  for all $T$ with $|T| \le t$ and inputs $\vec{x}$: 
\begin{itemize}
%joe26*: what does strategy ``for \pi'' mean?  I just cut this.
%Perhaps you want to say ``in the mediator scenario'' (and then ``for
%\pi''' should be come ``in the cheap-talk scenario).  More generally,
%you need to say above that there are two scenarios, one
%with a mediator and one without.  However, the outputs in both
%scenarios come from the same set.  Still, I suspect you want 
%How do other papers talk about the two scenarios?  You may want to
%use O^d and O^{ct} to distinguish the two scenarios.
  %\item For every strategy $\vec{\tau}_T$ for $\pi$ and every scheduler
%  $\sigma_e$ there exists a strategy $\vec{\tau}'_T$ for $\pi'$ and a
%and a scheduler $\sigma'_e$ such that $O(\pi, T, \vec{\tau}_T, \sigma_e,
%joe28
%\item For all adversaries $\mathcal{A} \in \mathcal{A}(T)$ there
  %  exists an adversary $\mathcal{A}' \in \mathcal{A}(T)$ such that
%joe29
  %\item For all adversaries $A = (T,\vec{\tau}_T,\sigma_e)$, there
  \item for all adversaries $A = (\vec{\tau}_T,\sigma_e)$, there
%  exists an adversary $A' \in \mathcal{A}_T$ such that
exists an adversary $A' = (\vec{\tau}_T',\sigma_e')$  such that
  $O(\vec{\pi} + \pi_d, A, 
  \vec{x})$ and $O(\vec{\pi}', A', \vec{x})$ are
%joe29
  %  identically distributed.
    identically distributed; 
%joe26
%\item For every strategy $\vec{\tau}'_T$ for $\pi'$ and every
  %  scheduler $\sigma'_e$ there exists a strategy $\vec{\tau}_T$ for
%joe28
%\item For all adversaries $\mathcal{A}' \in \mathcal{A}(T)$ there
  %  exists an adversary $\mathcal{A} \in \mathcal{A}(T)$ such that
%joe29
  %\item For all adversaries $A' = (T, \vec{\tau}',\sigma_e')$, there
%ivan84
    %  \item for all adversaries $A' = (\vec{\tau}',\sigma_e')$, there
%        exists an adversary $A = (\vec{\tau},\sigma_e)$ such that 
  \item for all adversaries $A' = (\vec{\tau}'_T,\sigma_e')$, there
        exists an adversary $A = (\vec{\tau}_T,\sigma_e)$ such that 
  $O(\vec{\pi} + \pi_d, A, 
  \vec{x})$ and $O(\vec{\pi}', A', \vec{x})$ are
  identically distributed. 
\end{itemize}
\end{definition}

%joe26: this should be part of the earlier discussion
%Note that usually in cryptography the focus is on finding protocols
%such that the output in the \emph{real scenario} matches the output of
%some \emph{ideal scenario}. However, $t$-simulation goes both ways. 

%joe26*: now we need some intuition for strong t-simulation
%joe27*: Ivan, as I said below, you need some intuition for this
%definition *before* you give it.  Please move up some of the material
%below. 

%joe29*; You can't talk about the differences between t-bisimulation
%and t-emulation before you've defined t-emulation.  But you can and
%*should* motivate t-emulation.
For one direction of the simulation, we need an even tighter
%joe30: this doesn't parse
%connection between the histories being simulated. 
%ivan83: added a bit more of intuition
%We want that there exists a somewhat
correspondence between deviations in the cheap-talk game and
deviations in the mediator game. This is made
precise in the following definition.

\commentout{
%ivan81:
%joe28: no more ``strongly
%There are differences between strongly $t$-bisimulation and $t$-emulation.
There are two significant differences between $t$-bisimulation and
$t$-emulation. 
As the name suggest, with $t$-bisimulation, we require simulation in both
directions (for every $\vec{\tau}$ and $\sigma_e$ there is a
$\vec{\tau}'$ and $\sigma_e'$, and vice versa); for emulation, we have
only one direction.
%ivan81
%; $H$ may not be invertible.  
On the other hand, with $t$-emulation, 
the strategy 
%ivan81:
%$\vec{\tau}'$ depends only on $\vec{\tau}$ and
$\vec{\tau}_i'$ depends only on $\vec{\tau}_i$ and
$\sigma_e$, whereas with
$t$-bisimulation, 
%ivan81:
%$\vec{\tau}'$ can also depend on $\sigma_e'$.  
$\vec{\tau}_i'$ can also depend on $\sigma_e'$ and
%joe28
%other players' strategies $\vec{\tau}_j$.
all the strategies in $\vec{\tau}$.
}
\begin{definition}[$t$-emulation]
%joe26
  %Given two protocols $\pi$ and $\pi'$ for $n$ players, we say that
%  $\pi'$ strongly $t$-simulates $\pi$ if for every scheduler $\sigma'_e$
The protocol $\vec{\pi}'$
%ivan96: 
\emph{$t$-emulates}
$\vec{\pi}$ if,
for every scheduler $\sigma'_e$, there exists a function $H$ from
%joe26: ``strategies in \pi'' doesn't make sense!  \pi itself is a strategy!
%I used ``cheap-talk/mediator scenario, but if there is more standard
%language, we should use that.
%strategies in $\pi$ with $H(\pi'_i) = \pi_i$ for all $i$, such that
%joe31: in this paper we can call them games, although perhaps not in
%the other paper.  I agree that we use it more broadly than cheap-talk
%-> mediator game, but it would be good to make clear what the domain
%of H is 
%strategies in the cheap-talk scenario to strategies in the mediator scenario
strategies to strategies
such that $H(\pi'_i) = \pi_i$ for all players $i$ and, 
%joe26: we haven't used a_i to denote players up to now.  It's not a
%good time to start
%for all sets $T = \{a_1, \ldots, a_m\}$ of malicious players with $m
%\le t$ and all strategy profiles $\vec{\tau}'_T = \{\tau'_{a_1},
%\ldots, \tau'_{a_m}\}$ for players in $T$, there exists a scheduler
%$\sigma_e$ for $\pi$ such that, for all inputs $\vec{x}$, $O(\pi, T,
for all sets $T$ of players with $|T| \le t$ 
%joe28
%and all adversaries $\mathcal{A}$ of the form $\mathcal{A} = (T,
%\vec{\tau}', \sigma'_e)$ there exists an adversary $\mathcal{A'}$ of
%the form $\mathcal{A'} = (T, \vec{\tau}, \sigma_e)$ such that, for all
%ivan84
%and all adversaries $A' = (\vec{\tau}', \sigma'_e)$, there exists an
%adversary $A = (\vec{\tau}, \sigma_e)$ such that, for all
and all adversaries $A' = (\vec{\tau}'_T, \sigma'_e)$, there exists an
adversary $A = (\vec{\tau}_T, \sigma_e)$ such that, for all
input profiles $\vec{x}$, $O(\vec{\pi},A, \vec{x})$ and
$O(\pi',A', \vec{x})$ are identically distributed, where 
%joe28
%$\vec{\tau}_T = H(\vec{\tau}'_{-T})$ (where we take
%$H((\tau'_1,\ldots,\tau'_m) = (H(\tau_1), \ldots, H(\tau'_m)$.
$\vec{\tau}_T = H(\vec{\tau}'_{-T})$ (and we take
%joe30
%$H((\tau'_1,\ldots,\tau'_m) = (H(\tau_1), \ldots, H(\tau'_m))$).
$H(\tau'_1,\ldots,\tau'_m) = (H(\tau_1), \ldots, H(\tau'_m))$).
\end{definition}

%joe29: Now you can talk about the differences
There are two significant differences between $t$-bisimulation and
$t$-emulation. 
%joe29
%As the name suggest, with $t$-bimimulation, we require simulation in both
%joe46
%As the name suggests, with $t$-bimimulation, we require simulation in both
As the name suggests, with $t$-bisimulation, we require simulation in both
directions (for every $\vec{\tau}$ and $\sigma_e$ there is a
$\vec{\tau}'$ and $\sigma_e'$, and vice versa); for emulation, we have
only one direction.
On the other hand, with $t$-emulation, the strategy 
$\vec{\tau}_i'$ depends only on $\vec{\tau}_i$ and
$\sigma_e$, whereas with
$t$-bisimulation, 
$\vec{\tau}_i'$ can also depend on $\sigma_e'$ and
all the strategies in $\vec{\tau}$.

%joe49: rewrote slightly
%ivan110
%Note that $t$-bisimulation and $t$-emulation are stronger notions than
%implementation and weakly implementation respectively since
%implementation only guarantees the same distribution of outputs in the
%mediator game and the cheap talk game if all players act honestly, as
%opposed to $t$-bisimulation and $t$-emulation, which guarantees the
%same distribution of outputs in both scenarios even if a subset of
%players of size at most $t$ deviate. In particular, $0$-bisimulation
%joe49*: isn't 0-emulation acatually equivalent to weak
%implementation?  If t=0, then the H function is trivial
%is equivalent to implementation and  $0$-emulation implies weakly
%implementation. In some of our results, we actually prove these
%stronger notions rather than just implementation or weakly
%implementation. 
Note that 0-bisimulation is equivalent to implementation, while
0-emulation 
%ivan111 not true
%joe51: reinstated
is equivalent to 
%implies 
weak implementation.  Implementation and
weak implementation consider only what happens when there is no
malicious behavior; bisimulation and emulation generalize these
notions by taking malicious behavior into account.  For some of our
%joe49: Presumably we define them because we need them.  Is what I
%wrote correct?
results (specifically, Theorems~\ref{thm:upperbound-default-no-punish} and
\ref{thm:upperbound-default-no-punish1}), we 
%ivan111 I don't know if we necessarily 'need' them. Our proofs do but
%maybe they can be done with some other method. 
%will actually need
%joe50
%actually show
%these stronger notions.  
use these notions, and show that they are achievable under the
conditions of these theorems.
%joe51
In fact, although we don't need it for our proof, we can show that we
can get $t$-bisimulation and $t$-emulation under the conditions of
Theorems~\ref{thm:punish} and~\ref{thm:punish-eps} as well.  We briefly
comment on how this can be done when we present the proof.
%ivan110: technically only for theorems 1 and 2

%joe26: this isn't necessarily true; it's a requirement that we've imposed.
%Schedulers in asynchronous systems must deliver each message
We have required schedulers to deliver each message eventually.
%joe26*: we need to stress that this is in the mediator scenario only.
%We consider next a special type of schedulers, which we
%call \emph{enhanced schedulers}, that are free of such constraints.
%Thus, enhanced schedulers might decide not to ever
%deliver some message. However, enhanced schedulers must satisfy the
%following: if the mediator sends several messages simultaneously, the
%scheduler must either deliver all of them eventually or none of
%them.  More importantly, we need a story here.  I've added one.
%Ivan, you really have to learn to write this kind of motivation.
%It's critical if you want readers to understand what you're doing
%(and to read your papers!)
Because we assume that protocols in the mediator game are bounded,
all protocols in the mediator game must terminate.  This means that
we can't hope to emulate a protocol in the cheap-talk game that
deadlocks.  (We assume that if the protocol deadlocks, it has a
%joe44
%special output which we denote $\bot$.  Given our constraints, we can
special output that we denote $\bot$.  Given our constraints, we can
never get an output of $\bot$ in the mediator game.)  To deal with this
situation, we relax this
requirement on schedulers somewhat, but only in the mediator game.
We take a \emph{relaxed scheduler} to be one that may never deliver
some messages.  However, we require that if the mediator sends several
%joe44
%messages at the same step, then either all of them are delivered or
%none of them are.  (There is no requirement on messages sent by the
%joe46
%messages at the same step, then a relaxed scheeduler either delivers
messages at the same step, then a relaxed scheduler either delivers
all of them or 
none of them.  (There is no requirement on messages sent by the
players, since they send messages only to the mediator, and we can
assume without loss of generality that they send only one message at
each step.)  
%joe26: added; we don't need a more formal definition
We can define \emph{relaxed $t$-emulation} just as we defined
%joe31
%$t$-emulation, except that we now allow that scheduler $\sigma_e$ in
$t$-emulation, except that we now allow the scheduler $\sigma_e$ in
the definition to be a relaxed scheduler.
%ivan92
%joe38
%In some scenarios we can strengthen relaxed $t$-emulation by ensuring
%joe44
%In some scenarios, we can strengthen relaxed $t$-emulation by ensuring
%that $\sigma_e$ is non-relaxed if $T$ is small enough. For this
%purpose we define $(t,t')$-emulation, which is identical to relaxed
Finally, we define \emph{$(t,t')$-emulation} just as we defined relaxed
%joe36
%$t$-emulation except that it is guaranteed that $\sigma_e$ is
%non-relaxed if $|T| \le t'$. 
%joe44
%$t$-emulation except $\sigma_e$ must be non-relaxed
$t$-emulation except that $\sigma_e$ must be non-relaxed 
if $|T| \le t'$. 
%joe29*: no!; the requirement that |T| \ge t/4 shouldn't be part of teh definition.
%ivan83: we need to ensure that
%joe30: it's in the theorem statement
%ivan80
%if $|T| \ge t/4$.

%joe26: this intuition will be meaningless at this point in the paper
%Enhanced $t$-simulation is analogous to strongly $t$-simulation but
%considering an enhanced scheduler in the ideal scenario instead of a
%regular one. This captures the intuition that for $t \ge n/4$
%malicious players are able to sabotage the computations in the
%cheap-talk game, but they still cannot alter the outputs of such
%computations if they ever terminate. This translates into the mediator
%game as if some messages might not be delivered. 

%joe26: cut, given the sentence above
\commentout{
\begin{definition}[Enhanced $t$-simulation]
Given two protocols $\pi$ and $\pi'$ for $n$ players, we say that
$\pi'$ enhanced $t$-simulates $\pi$ if for every scheduler $\sigma'_e$
for $\pi'$, there exists a function $H$ from strategies in $\pi'$ to
strategies in $\pi$ with $H(\pi'_i) = \pi_i$ for all $i$, such that
for all sets $T = \{a_1, \ldots, a_m\}$ of malicious players with $m
\le t$ and all strategy profiles $\vec{\tau}'_T = \{\tau'_{a_1},
\ldots, \tau'_{a_m}\}$ for players in $T$, there exists an enhanced
scheduler $\sigma_e$ for $\pi$ such that, for all input profiles $\vec{x}$,
$O(\pi, T, \vec{\tau}_T, \sigma_e, \vec{x})$ and $O(\pi', T,
\vec{\tau}'_T, \sigma'_e, \vec{x})$ are identically distributed, where
$\vec{\tau}_T = \{H(\tau'_{a_1}), H(\tau'_{a_2}), \ldots,
H(\tau'_{a_m})\}$. 
\end{definition}
}

%joe26* again, you need to write some motivation for the definition of
%termination.  It's also a bad name.  
%Note that we are considering deadlock as a special type of output
%$\bot$. Finally we define $t$-termination as follows:
%ivan80:
%joe27*: Ivan, this will be incomprehensible motivation.  I have no
%idea *why*  it critical that either all or none of the players terminate.
%You can't just say it before you explain why we should care.  I've
%rewritten it, but this is something you'll really need to learn to
%do.  It's as important as writing good proofs.  (Nobody will read
%your proofs if they can't understand what's going on.)
%Since in some cases we are dealing with punishment strategies, it is
%critical that either all honest players terminate or all honest
%players end in deadlock. This way we can guarantee if they have the
%punishment strategy in their will that either all honest players
%terminate or all honest players play the punishment strategy. This
%property is made precise in the following definition. 
%joe29
%We need a further property of protocols once we get to protocols with
We need a further property to deal with protocols that involve
punishment strategies.  For a punishment strategy to be effective, all
the honest players have to play it.  In our protocols, the punishment
strategy is played when there is a deadlock (so some players never
terminate); that is, the punishment strategy is in the honest players'
``wills''.  Thus, we want it to be the case that either none of the
honest players terminate (in which case the punishment strategy will
be effective) or all of them terminate; we do not want it to be the
case that only some of the honest players terminate.

%joe26: this definition didn't use \pi.  I rewrote it to what I
%thought you intended, and renamed it.
%\begin{definition}[$t$-termination]
%  We say that a protocol $\pi$ $t$-terminates if for all schedulers
%$\sigma_e$, all subsets $T\subseteq n$ of players satisfying $|T| \le
%t$ and all strategy profiles $\vec{\tau}_T$ for players in $T$ it
%holds for all $i,j \in \bar{T}$ that $i$ terminates iff $j$
%terminates. 
\begin{definition}[$t$-cotermination]
%joe44
%Say that a protocol $\vec{\pi}$ \emph{$t$-coterminates} if, for all schedulers
A protocol $\vec{\pi}$ \emph{$t$-coterminates} if, for all schedulers
  %joe31
%$\sigma_e$, all subsets $T$ of at most $t$ players, and
$\sigma_e$, all subsets $T$ of at most $t$ players, 
and all strategy profiles $\vec{\tau}_T$ for the players in $T$, in
all histories of the protocol $(\vec{\pi}_{-T},\vec{\tau}_T,
\sigma_e)$, either all the players 
%ivan80:
%in $T$
%joe28: it's not clear that \bar{T} means complement
%in $\bar{T}$
not in $T$
terminate or none of them do.
\end{definition}

%joe26*: Ivan, I'm not sure what ``\epsilon probability of error'' mean
%for t-simulation, where you want output distributions to be
%identical.  Do you mean that they need to be close?  In any case,
%this needs to be explained better.
%joe27
%We define as well analogous properties to $t$-bisimulation,
For some of our results, we need ``approximate'' versions of $t$-bisimulation,
$t$-emulation, relaxed $t$-emulation, and $t$-cotermination that allow 
an $\epsilon$ probability of error. 
%ivan80:
%joe27: this didn't parse.  In any case, we already defined this
%notion of closeness
%For $t$-bisimulation, $t$-emulation and relaxed $t$-emulation this
%means that the distribution of outputs in the cheap-talk game
%a differ\footnote{We consider the distance between two distributions
%  $X,X'$ to be $\sum_{x \in X\cup X'} |\Pr[X = x] - \Pr[X' = x]|$.} in
%at most $\epsilon$,
For $t$-bisimulation, $t$-emulation, 
%ivan96
%joe36
%$(t,t')$-emulation and relaxed $t$-emulation, this
$(t,t')$-emulation, and relaxed $t$-emulation, this
means that the distance between the distribution over outputs in the
cheap-talk game and the distribution in the mediator game is at most
$\epsilon$ (where the notion of distance is that used in the
definition of $\epsilon$-implementation in Section~\ref{sec:definitions})
while for $t$-cotermination it means that the
property holds with probability $1 - \epsilon$. 
We call these properties
%joe27
%$(\epsilon, t)$-bisimulation, strongly $(\epsilon, t)$-emulation,
$(\epsilon, t)$-bisimulation, $(\epsilon, t)$-emulation,
relaxed $(\epsilon,t)$-emulation and $(\epsilon,
t)$-cotermination.
%joe26*: this is not good enough!  We need careful statements of the
%theorems from paper 2.  Do you ever use t-emulation in the theorems,
%or just relaxed %t-emulation?  I suspect it's only the latter.  If
%so, I suggest we cut the definition of t-emulation altogether, and
%just give the definition of ``enhanced/relaxed'' simulation.  If you
%do that, you can just drop the word ``enhanced/relaxed''.  
%
%joe27
%[PAPER2] shows an implementation of
%$\vec{\sigma}_{\ACT}$ for
%Theorems~\ref{thm:upperbound-default-no-punish1} and
%\ref{thm:punish-eps} that satisfies, for arbitrarily small $\epsilon >
%0$,  $(\epsilon, t)$-bisimulation and $(\epsilon,
%t)$-emulation for $t < n/3$, and relaxed $(\epsilon,t)$-emulation
%and $(\epsilon, t)$-cotermination for $t < n/2$.
%joe44: removed paragraph break for disc
\arxiv{

}  
%ivan80: I would like to give the expected number of messages here,
%but I think we haven't introduced the 'N' and I don't know what it
%is. I assume it is the expected number of the mediator in
%\vec{\sigma} for our result to make sense. I'll leave it blank for
%now.
%joe27*: put it on the stack, but we'll need it for the full paper
%Altogether, [PAPER2] gives the following results:
In \cite{GH18}, the following results are proved:
%ivan96: rewrote theorems for t,t'-emulation.
%joe36: Ivan, I'm not sure how to parse this!  There are nested
%``respectively''s, which is a very bad idea.  It should have the form
%''... such that (1) (resp., (2), (3), (4)) if (1') (resp., (2'),
%(3'), (4').  Please get rid of one of the respectively's.
\begin{theorem}\label{thm:errorless-simulation}
Given a mediator game $\Gamma_d$ extending $\Gamma$ and a strategy
profile $\vec{\sigma} + 
\sigma_d$, there exists a strategy profile $\vec{\sigma}_{\ACT}$ for
$\Gamma_{\ACT}$ such that $\vec{\sigma}_{\ACT}$ 
%joe27: for -> if
%and $t$-emulates $\vec{\sigma} + \sigma_d$ for $t < n/4$, relaxed
%$t$-emulates $\vec{\sigma}$ for $t < n/3$, and $t$-coterminates for $t
%joe30*: it wasnt clear how to parse this.  I wrote what I thought you meant
%The question is whether t-bisimulation holds no matter what.  I
%assume it does.  If not, we need to make it clear.
%$t$-bisimulates
%and $t$-emulates $\vec{\sigma} + \sigma_d$ if $t < n/4$, relaxed
%$t$-emulates $\vec{\sigma}$ if $t < n/3$, $t$-coterminates if $t
%< n/3$,
%ivan96: it should be t-emulates (or rel. t-emulates), t-coterminates
%and t-bisimulates if t < n/4 (or t/n/3), t < n/3 and t < n/3
%respectively. 
%$t$-emulates (resp., relaxed $t$-emulates, $t$-coterminates) and
%$t$-bisimulates $\vec{\sigma} + \sigma_d$ if $t < n/4$ (resp., 
%$t < n/3$, $t < n/2$)
$t$-coterminates, $t$-emulates (resp., relaxed $(t,t')$-emulates), and
$t$-bisimulates $\vec{\sigma} + \sigma_d$ if $t < n/3$, $t < n/4$
%joe37
%(resp., $t < n/3$ and $t' < n/4$) and $t < n/4$ respectively,
(resp., $t < n/3$ and $t' < n/4$), and $t < n/4$ respectively, 
%ivan83:
and the expected number of messages in histories of
$\vec{\sigma}_{\ACT}$ is $O(nNc)$. 
\end{theorem}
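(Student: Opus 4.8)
The plan is to build $\vec{\sigma}_{\ACT}$ in two stages, exactly mirroring the structure of the BGW/BCG protocols but with extra machinery to get the stronger security notions. In the first stage, the players run a verifiable secret-sharing subprotocol to commit to their inputs $\vec{x}$; in the second stage, they jointly evaluate the arithmetic circuit of depth/size $c$ that the mediator $\sigma_d$ computes (together with the randomness $\sigma_d$ injects, which we treat as additional random inputs secret-shared by the honest players), gate by gate, and finally reconstruct each player's output share so that player $i$ learns only its component of the action profile, then plays that move in $\Gamma$. For $t<n/3$ we use the synchronous BGW machinery adapted to run asynchronously by having each player wait for $n-t$ acknowledgements before advancing a ``virtual round'' (this is what gives $t$-emulation and $t$-bisimulation at the $n/3$ threshold); for $t<n/4$ we fall back to the genuinely asynchronous BCG reconstruction, which only needs agreement among $n-t\ge 3t+1$ players and hence tolerates the environment withholding up to $t$ honest players' messages (this is what underlies relaxed $(t,t')$-emulation, where $t'<n/4$ is the threshold below which we can in addition force non-relaxed schedulers). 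The message-count bound $O(nNc)$ follows because each of the $O(c)$ gates and each of the $O(N)$ input/output shares triggers $O(n)$ point-to-point messages per player, and the verifiable-secret-sharing and information-checking subprotocols add only constant-factor overhead per sharing.

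The next step is to verify the three security properties for the construction. For $t$-bisimulation, I would argue both directions via the standard ideal/real simulator paradigm: given an adversary $A=(\vec{\tau}_T,\sigma_e)$ in the cheap-talk game, the simulator runs $A$ internally, extracts the effective inputs the corrupted players commit to (possible because verifiable secret sharing binds them), feeds those to the mediator, and schedules the mediator's replies to match; conversely, given $A'$ in the mediator game, we pad it out to a cheap-talk adversary that behaves identically on the honest players' view. Since there is no cryptography here and sharings are information-theoretically hiding against $t<n/3$ (resp.\ $n/4$) corruptions, the two output distributions are \emph{identically} distributed, not merely close. For $t$-emulation I must additionally exhibit, for each fixed scheduler $\sigma'_e$, a single translation map $H$ on strategies with $H(\sigma_{\ACT,i})=\sigma_i$ that works uniformly over all coalitions $T$ and all $\vec{\tau}'_T$ — here the point is that $H$ is defined ``locally'' (it reads off what a player's cheap-talk strategy does to its shares and emits the corresponding mediator message), so it does not depend on the other corrupted players' strategies, which is precisely the asymmetry between emulation and bisimulation noted after the definitions. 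For relaxed $(t,t')$-emulation the same $H$ works, but now $\sigma_e$ may be a relaxed scheduler dropping the $\le t$ withheld honest messages, and the BCG-style reconstruction still terminates correctly because $n-t>2t$; the requirement that a relaxed scheduler delivers all or none of the mediator's simultaneous messages is exactly matched by the fact that in our protocol honest players release a full ``layer'' of shares atomically.

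The $t$-cotermination property is proved by a separate structural argument: the protocol is organized so that the only decision point at which a player outputs is the final reconstruction round, and a player reaches that round only after receiving $n-t$ confirmations that \emph{every} honest player has reached the penultimate round; an induction on the virtual-round number then shows that in any history the set of honest players is always within one virtual round of itself, so either they all cross the finish line or (if the scheduler or malicious players starve enough of them) none do. I expect this cotermination step, together with reconciling it with the $n/4$ relaxed-emulation regime, to be the main obstacle: the synchronous-style ``wait for $n-t$'' barrier gives cotermination cleanly when $t<n/3$, but making the genuinely asynchronous BCG reconstruction coterminate — where by design some honest players may never hear from $t$ others — requires a careful extra agreement layer (essentially an asynchronous Byzantine agreement on ``everyone is done'') whose correctness rests on $n>4t$, and one must check that adding this layer does not inflate the message count beyond $O(nNc)$ or break the exact (errorless) distributional equivalence used for bisimulation. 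Since the detailed constructions and their security proofs are carried out in the companion paper~\cite{GH18}, here I would state precisely which lemmas of~\cite{GH18} supply each of the three properties at each threshold and assemble them, deferring the internal verifications.
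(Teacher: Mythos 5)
The paper itself does \emph{not} prove Theorem~\ref{thm:errorless-simulation}: the text immediately preceding it reads ``In \cite{GH18}, the following results are proved,'' so the paper's entire proof is a citation to the companion paper. Your proposal therefore goes beyond what the present paper does, but since you close by saying you would ``state precisely which lemmas of~\cite{GH18} supply each of the three properties $\dots$ deferring the internal verifications,'' you end up in the same place as the paper itself. To that extent the proposal is consistent with the paper, and the overall architecture you sketch---BGW/BCG-style secret sharing and circuit evaluation, ideal/real simulators to extract effective inputs, a local translation map $H$ for emulation, and a ``wait for $n-t$'' barrier plus an agreement-on-termination layer for cotermination---matches the hints the paper gives about the companion construction.

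There is, however, a concrete error in the first paragraph of the plan: you assign the wrong thresholds to the properties. You claim the synchronous-style $n/3$ barrier ``is what gives $t$-emulation and $t$-bisimulation at the $n/3$ threshold'' and that the BCG fallback at $n/4$ ``underlies relaxed $(t,t')$-emulation,'' but the theorem states the opposite pairing: $t$-emulation and $t$-bisimulation both require the \emph{stronger} bound $t<n/4$, while it is $t$-cotermination that holds at $t<n/3$, and relaxed $(t,t')$-emulation holds at $t<n/3$ together with $t'<n/4$. The intuition should run the other way from yours: the genuinely asynchronous BCG machinery (threshold $n/4$) is what the cheap-talk game needs for exact distributional equivalence to the mediator game (emulation and bisimulation), whereas the weaker structural property of cotermination can be obtained up to $n/3$. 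If you are going to hang lemma citations from~\cite{GH18} on a threshold scaffold, the scaffold must match the theorem; as written, it does not, and the ``reconciling $\dots$ with the $n/4$ relaxed-emulation regime'' discussion in your last paragraph inherits the same confusion (there is no ``$n/4$ relaxed-emulation regime''---relaxed $(t,t')$-emulation is the one property that lives at $t<n/3$).
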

\begin{theorem}\label{thm:error-simulation}
%joe30
  %  Given a mediator game, a strategy profile $\vec{\sigma} +
    Given a mediator game $\Gamma_d$ extending $\Gamma$, a strategy
    profile $\vec{\sigma} + 
\sigma_d$ in $\Gamma_d$, and a real number $\epsilon \in (0,1]$, there
  exists a strategy profile $\vec{\sigma}_{\ACT}$
%joe30
  %  for the cheap-talk game such that $\vec{\sigma}_{\ACT}$
in $\Gamma_{\ACT}$ such that $\vec{\sigma}_{\ACT}$
%joe30*: again, please check and clarify if necessary
%$(\epsilon,t)$-bisimulates and $(\epsilon,t)$-emulates $\vec{\sigma} +
%\sigma_d$ if $t < n/3$, relaxed $(\epsilon,t)$-emulates
%$\vec{\sigma}$ if $t < n/2$, $(\epsilon,t)$-coterminates if $t <
%n/2$,  $(\epsilon,t)$-emulates $\vec{\sigma} + \sigma_d$ 
%ivan92 same change
\commentout{
$(\epsilon,t)$-emulates (resp.,
%joe30*: why do we need (\epsilon-t)-coterminates?  Doesn't it already
%follow from  (\epsilon,t)-emulates?
%ivan96
%relaxed $(\epsilon,t)$-emulates, $(\epsilon,t)$-coterminates)
%joe36: again, please get rid of one of the respectively's
relaxed $(\epsilon,t)$-emulates, $(\epsilon,t)$-coterminates)
and $(\epsilon,t)$-bisimulates $\vec{\sigma} + \sigma_d$ 
if $t < n/3$ (resp., $t < n/2$, $t < n/2$),
}
$(\epsilon,t)$-coterminates, $(\epsilon, t)$-emulates (resp.,
%joe37
%$(\epsilon, t, t')$-emulates) and $(\epsilon, t)$-bisimulates
$(\epsilon, t, t')$-emulates), and $(\epsilon, t)$-bisimulates
$\vec{\sigma} + \sigma_d$ if $t < n/2$, $t < n/3$ (resp., $t < n/2$
and $t' < n/3$) and $t < n/3$ respectively, 
%ivan83:
and the expected number of messages in histories of
$\vec{\sigma}_{\ACT}$ is $O(nNc)$. 
\end{theorem}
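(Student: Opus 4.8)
The plan is to build $\vec{\sigma}_{\ACT}$ as a round-by-round asynchronous simulation of the mediator game --- essentially the construction underlying Theorem~\ref{thm:errorless-simulation}, but with the errorless asynchronous verifiable secret sharing of BCG \cite{BCG93} replaced by an $\epsilon$-error asynchronous VSS built from Rabin and Ben-Or's information-checking protocol \cite{RB89} (in the style of BKR \cite{BKR94}), and with errorless asynchronous agreement replaced by a randomized $\epsilon$-error asynchronous agreement. This is what buys the improved thresholds: authenticated (information-checked) shares let a reconstructing player discard the at most $t$ bad shares and interpolate from the remaining $\ge t+1$ good ones whenever $n \ge 2t+1$, instead of needing $n \ge 3t+1$ for unstructured Reed--Solomon decoding, so robust degree-reduction for multiplication gates goes through for any $t < n/3$.

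First I would fix a security parameter $\kappa$ as a function of $\epsilon, n, N, c$ so that the total probability that some information-checked value is ever forged, or that the randomized agreement fails to terminate within its allotted expected number of rounds, is at most $\epsilon$; every $\epsilon$ in the conclusion then comes from a single union bound over these events. Next, the input phase: each player $i$ shares $x_i$ using the $\epsilon$-error asynchronous VSS, and the players agree on a core set of $\ge n-t$ accepted inputs (missing inputs default to $0$, exactly as in BCG/BKR). Then, for each of the at most $N$ message rounds of the mediator game, the players evaluate the mediator's circuit (at most $c$ gates) gate by gate on the shared values: linear gates are local, and each multiplication gate is handled by re-sharing the product shares and Lagrange-interpolating, using the authenticated shares so the degree-reduction step is robust for $t < n/3$. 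The mediator's round-$j$ message to player $i$ is reconstructed toward $i$: every player sends $i$ its authenticated share, $i$ discards the shares that fail the check, and interpolates. For cotermination, when the simulated mediator would send STOP the players run a randomized $\epsilon$-error asynchronous Byzantine agreement on the bit ``terminate now''; this tolerates $t < n/2$, and, by design, no honest player terminates unless the agreement completes, so either all honest players terminate or none do, with probability $\ge 1-\epsilon$ --- giving $(\epsilon,t)$-cotermination for $t < n/2$.

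For $(\epsilon,t)$-bisimulation I would argue both directions. The easy direction embeds a mediator-game adversary into the cheap-talk game by letting the cheap-talk adversary run the mediator's honest code internally on the messages it receives. The hard direction is the usual MPC simulator argument: given a cheap-talk adversary $(\vec{\tau}'_T,\sigma'_e)$, the simulator watches the input phase, extracts from the VSS the effective input each player in $T$ has committed to, feeds those inputs to the mediator in $\Gamma_d$, and then drives the rest of the interaction so that the induced distribution on action profiles in $\Gamma$ is within $\epsilon$ of the real one; this extraction and matching works for $t < n/3$. For $(\epsilon,t)$-emulation I additionally need the strategy translation $H$ to be \emph{oblivious} --- $H(\tau'_i)$ depending only on $\tau'_i$ (and the scheduler), not on the other deviators or on $\sigma'_e$ --- which I would get by arranging that player $i$'s entire mediator-game-visible behavior is a fixed local function of $i$'s cheap-talk transcript, so a deviating $i$ effectively just picks an input and then ``plays honestly'' after extraction; this again forces $t < n/3$. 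For relaxed $(\epsilon,t,t')$-emulation I would allow the mediator-game scheduler to drop the mediator's messages to the (at most $t$, but more than $t'$) players that are ``lost'' because too many deviators blocked their reconstructions, while insisting on a non-relaxed scheduler whenever $|T| \le t' < n/3$.

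The step I expect to be the main obstacle is reconciling, inside one protocol, two conflicting pressures: robust computation, which demands ``always produce the correct output'' and hence forces $t < n/3$ wherever exactness of the output distribution matters, versus cotermination at the looser threshold $t < n/2$, which demands ``when in doubt, do not terminate'' so that honest players stay in their wills and the punishment remains effective in the applications. Concretely, the termination agreement must be able to fail to terminate gracefully --- without ever letting a strict subset of honest players terminate --- even under a relaxed scheduler that can swallow whole batches of the mediator's messages, and the error budgets for cotermination, emulation/relaxed-emulation, and bisimulation must all fit inside the same $\epsilon$ while the expected message count stays $O(nNc)$ (so the randomized agreement must terminate in $O(1)$ expected rounds and not blow past the per-round $O(nc)$ budget). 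Getting all of this to line up simultaneously is where the real care is needed.
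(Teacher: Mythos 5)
The theorem you are trying to prove is not proved in this paper at all: the paper explicitly states ``In \cite{GH18}, the following results are proved:'' and then lists Theorems~\ref{thm:errorless-simulation} and~\ref{thm:error-simulation}, so the entire content of this statement is imported from the companion paper. Your high-level architecture---BKR-style asynchronous MPC with Rabin--Ben-Or information checking replacing BCG's errorless AVSS so that robust reconstruction and degree reduction work at $t < n/3$ rather than $t < n/4$---is consistent with the hints the paper gives about \cite{GH18} and with the full-version side remark about ``Reed-Solomon unique decoding and check vectors, as in Rabin and Ben-Or's information-checking protocol.''

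The concrete flaw in your outline is the cotermination step. You propose to run a ``randomized $\epsilon$-error asynchronous Byzantine agreement'' and assert it ``tolerates $t < n/2$.'' No such primitive exists: the Bracha--Toueg bound places asynchronous Byzantine agreement at $n > 3t$ even for randomized protocols, signatures do not improve this in the asynchronous model, and an $\epsilon$ error budget does not escape the partition argument. The reason $(\epsilon,t)$-cotermination is nonetheless achievable for $t < n/2$ is that it is strictly weaker than agreement: it requires only that if one honest player terminates then (with probability $\ge 1 - \epsilon$) all honest players eventually do. That follows from a \emph{transferable} termination certificate---e.g., $n - t$ information-checked ``ready'' messages that the first terminating honest player can forward to every other player over the eventually-reliable channels---with the $t < n/2$ threshold inherited from the honest-majority requirement of the information-checking protocol, not from any consensus primitive. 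Invoking Byzantine agreement here is the wrong tool, and if taken at face value it would cap you at $t < n/3$, contradicting the claimed threshold. Separately, your treatment of relaxed $(\epsilon, t, t')$-emulation imagines the relaxed scheduler ``dropping the mediator's messages to the \ldots players that are lost,'' i.e.\ suppressing delivery to a chosen subset; but the paper defines a relaxed scheduler to be batch-atomic---if the mediator sends several messages at the same step, the scheduler must deliver all of them or none. That atomicity is exactly what makes a cheap-talk deadlock correspond to the mediator's entire STOP batch being withheld, which in turn is what lets cotermination and the wills-based punishment compose correctly; a per-player drop model would break this correspondence.
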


%joe29
%joe44
%\section{Proofs of the main theorems}
\section{Proofs of the main theorems}\label{sec:proofs}

%joe42: moved this section here
%joe44
%\subsection{Adversaries and $(k,t)$-robustness}\label{sec:adversary}
\subsection{Coordination between the environment and malicious
  players}\label{sec:adversary} 

%joe28*: we didn't talk about adversaries at all in Section 2, let
%alone an adversary model.  We have to change the story here.
%Recall from section~\ref{sec:definitions} that we model our adversary
%subset of rational players and a subset of malicious players that
%might deviate from the protocol by sending messages to other bad
%players or the mediator. It easily follows from this model that we may
%assume without loss of generality that all rational and malicious
%players are controlled by a single entity $\mathcal{A}$ since they can
%send their local history to all other bad players every time they
%receive a new message from the mediator and, since our results must
%hold for all schedulers, in particular they must hold for schedulers
%that deliver messages between malicious and rational players
%instantly. In this scenario, malicious and rational players can behave
%as a single entity.
%joe42: added sentence
Before proving the main results, it is useful to understand some of
the implication of $(k,t)$-robustness, particularly when it comes to
the interactions between the environment and the malicious and rational
players. 
The definition of $(k,t)$-robustness requires that the rational
players have no profitable deviation no matter what the malicious
players and the environment do.  It may seem \emph{a priori} that the
malicious players, the rational players, and the environment all act
independently, but in fact, we can assume without loss of generality
that they are all under the control of a single adversary.
%joe44
%This may not be surprising in the case of the rational players, who can send
%messages to each other on private channels, and coordinate their
%strategies dynamically (as a function of what the mediator sends in
%the mediator game, for example).
Clearly rational players can coordinate by sending messages to
each other.
  But it may not be obvious that the
malicious and rational players can also coordinate with the environment
%joe31
%so that they can, for example,
so that, for example, 
%ivan82: I guess this does not go here
%joe29: actually, it was OK; undid change
the malicious and rational players can
act knowing who will be scheduled when and the environment can
%joe31
%schedule rational and malicious players based on, for example, their inputs.
schedule rational and malicious players based on their inputs.
Nevertheless, this follows from the fact that $(k,t)$-robustness must
hold for all schedulers.
%joe28: more rewriting
%However, the fact that our results must hold for all schedulers is
%stronger than what can be seen at first sight. Surprisingly, our model
%implies that we can assume without loss of generality that rational
%players, malicious players and the scheduler are a single entity.
We prove this by showing that rational and malicious players can
%transmit their local histories to the scheduler, and the scheduler can
%transmit its local record of messages received and delivered to each
%malicious and rational player.
%A rational or malicious player $i$ can transmit its local history to
effectively communicate with the environment, even though the
environment cannot read messages.

%joe28: I found this hard to parse, but I assume you meant what I
%wrote below
%To see that a rational or malicious player $i$ can transmit its local
%history to the scheduler as follows: whenever it receives a message,
%it sends a certain number $e$ of messages to itself, where $e$ encodes
%the newly message received. This way the scheduler is always aware of all
%rational and malicious players local histories. On the other hand the
%scheduler can do the following to communicate some malicious player
%$i$ when a message is being sent and by who: 

To see that a player $i$ can communicate with the environment, recall
that we have assumed that the message space is finite,
say $\{m_0, \ldots, m_M\}$.  
Immediately after sending $m_j$, $i$ sends $j$ empty messages to
itself.  So, even though the environment cannot read the messages, it
will know that $i$ received $m_j$.%
%joe44
\arxiv{
\footnote{We can encode $m_j$ using using fewer messages by having $i$
  send message to other agents as well as itself. Our goal here is
  not to minimize the number of messages, but to show that
  communication with the environment is possible.}}
(Clearly the environment will also know who sent the message,
since the environment delivered the message.)

%joe28*; What you wrote shows only that the environment can tell i that j when j
%sent its first message to the mediator.  Since i knows the
%environment's protocol, it knows when the message will be delivered.
%But i will also want to know when the mediator sends messages (and
%who it sends the messages to), and will also want to know about later
%messages (not just the first).  
%\begin{enumerate}
%\item At the beginning of the game, $i$ sends $n$ empty messages
%  $mes_1, \ldots, mes_n$ to itself,
%\item Whenever a player $j$ sends a message to the mediator, the
%scheduler delivers message $mes_j$ to $i$ immediately after. 
%\item Player $i$ re-sends $mes_j$ to itself immediately after $mes_j$
%is received.  
%\end{enumerate} 
%With this strategy, $i$ knows that if it receives some message
%$mes_j$ at some point it means that $j$ sent a message to the
%mediator. 
The rational and malicious players know the environment's protocol
(and thus know when a message that is sent will be delivered).  Thus,
it suffices for the environment to tell the non-honest players when
the $k$th message is sent, who sent it, and who the intended recipient
is.  All the non-honest players $i$ initially send themselves 
%ivan82:
\commentout{
$(n+1)(n+2)$ empty messages.  
If the first message was sent by player $j_1$ to player $j_2$
(treating the mediator as player $n+1$ in the mediator game), 
the the environment delivers $(n+1)j_1 +j_2$ empty messages.
Then player $i$ sends another $(n+1)(n+2)$ empty messages, allowing
the environment to encode the sender and receiver of the next message,
%joe31
%if there is one.
if there is one, and so on.
}
%ivan83: undid again
%joe29: undid change.  This was actually right before;  We need
%(n+1)(n+2) message because the maximum value of (n+1)j_1 + j_2 is
%(n+1)(n+2), not (n+1)^2 $(n+1)^2$ messages. 
%$(n+1)(n+2)$
$(n+1)^2$ 
empty messages.
If the first message was sent by player $j_1$ to player $j_2$ 
(treating the mediator as player 0 in the mediator game), then the
%joe29
%environment delivers $(n+1)j + j_2$
environment delivers $(n+1)j_1 + j_2$ of these
empty messages.
%joe29: 
%Then player $i$ sends another $(n+1)^2$ empty messages, allowing the
Then player $i$ sends another 
%ivan83:
%$(n+1)(n+2)$
$(n+1)^2$ 
 empty messages, allowing the
environment to encode the sender and receiver of the next message, if
there is one. 

%joe44: we don't use the adversary notation in this paper; moved it to
%previous section
\commentout{
This shows that rational players, malicious players and the environment
can communicate freely if they want to, which means that we can assume
without loss of generality that they are a single entity.
%joe28
%A similar
%argument shows that we can assume this as well in the cheap-talk
%game. From now on we will assume always that we are dealing with this
%type of adversary 
%We use the notation $\mathcal{A} = (T, \vec{\tau}_T, \sigma_e)$ to
%joe28*: I would remove T from the adversary, it's implicit in the
%vec{\tau}_T in any case.  If you don't remove it here, you'll have to
%make other changes since we are inconsistent with our notation for
%utility.  See some later comments
We use the notation $A = (\vec{\tau}_T, \sigma_e)$ to
denote an adversary that controls a subset $T$ of players and the
scheduler, in which players in $T$ deviate by following strategy
profile $\vec{\tau}_T$ and the scheduler follows strategy
$\sigma_e$.}
\commentout{
We write $u_i(\Gamma, \vec{\sigma}, \sigma_e, \vec{x})$ to denote the expected
utility of player $i$ in game $\Gamma$ when players
%joe28
%  $\vec{\sigma}$ with input profile $\vec{x}$ and adversary
%$\mathcal{A}$.
%not in $T$ play strategy profile $\vec{\sigma}$ with input profile
%$\vec{x}$, the players in $T$ play $\vec{\tau}$,
play strategy profile $\vec{\sigma}$, 
the environment plays $\sigma_e$, and the input profile is $\vec{x}$.    
%and adversary $\mathcal{A}$. A straightforward application of the adversary
%description given in this section is the following: 

%joe28.
%joe29: As I said, it is the definition.  Let's define it here
%It is immediate from the definition that a strategy profile
%$\vec{\sigma}$ is $t$-immune if, for all
A strategy profile is \emph{$t$-immune} if, for all
$t$-adversaries $A = (\vec{\tau}_T,\sigma_e)$, we have 
$$u_i(\Gamma_d, \vec{\sigma}, A, \vec{x}) \le
u_i(\Gamma_d, \vec{\sigma}, (\emptyset, \sigma_e), \vec{x})$$
for all $i \not \in T$; no matter what the environment strategy, the
players do at least as well playing $\vec{\sigma}$ as they do by deviating.

%joe29
%Strategy profile  $\vec{\sigma}$ is $(k,t)$-robust if it is $t$-immune
Strategy profile  $\vec{\sigma}$ is \emph{$(k,t)$-robust} if it is $t$-immune
and for all $A = (
%joe34
%\vec{\tau}_{K \cup T}, \sigma_e)$ with $|T| \le t$ and $|K| \le
\vec{\tau}_{K \cup T}, \sigma_e)$ with $|T| \le t$ and $1 \le |K| \le
k$ we have 
$$u_i(\Gamma_d, \vec{\sigma}, A, \vec{x}) \le
u_i(\Gamma_d, \vec{\sigma}, \vec{\tau}_T, \sigma_e, \vec{x})$$
%joe28*: you had this backwards
%for all $i \in K$.
for some $i \in K$.
%joe29
%That is, not matter what strategies the players in $T$ and the
That is, no matter what strategies the players in $T$ and the
environment are using, at least one player in $K$ cannot gain by deviating. 
A strategy is \emph{strongly} $(k,t)$-robust if
this holds with ``for some $i \in K$'' replaced by ``for all $i \in
K$''.
%joe29: added
More intuition for the definition of $(k,t)$-robustness, as well as a
formula definition of $\epsilon$-$(k,t)$-robustness can be found in the
appendix.  
}
%joe30: \end{commentout}

%ivan82:
Because schedulers can collude with the adversary, 
%joe29
%$t$-immune strategies satisfy even an stronger condition.
$t$-immune strategy profiles satisfy an even stronger condition:
%joe30
deviations by players in a set $T$ with $|T| \le t$ do not make things
worse for the non-deviating players even if the environment
%joe42
%can collude with the players in $T$.
colludes with the players in $T$. 
%joe29
%\begin{lemma}\label{prop:adversary-immunity}
\begin{proposition}\label{prop:adversary-immunity}
  If $\vec{\sigma}$ is $t$-immune, then for all
%joe30
  %  $A = (\vec{\tau}_T, \sigma_e)$,
%  all scheduler strategies $\sigma'_e$, and all and inputs $\vec{x}$ we
sets $T$ of players with $|T| \le t$, strategies $\sigma_e$ and
$\sigma_e'$ for the environment, strategy profiles $\vec{\tau}_T$ for
%joe34
%the players in $T$, and input profiles $\vec{x}$
the players in $T$, input profiles $\vec{x}$
%ivan91:
%joe34: while it's fine to add \vec{x}' here, it actually makes no
%difference, because malicious players with \input{x} could always
%play as if tfhere input was \vec{x}'.  
%and $\vec{x}'$, we
and $\vec{x}'$, and players $i \notin T$, we
have 
%joe30*: please get rid of the A's here.  It makes for inconsistent
%notation.  I also switched the role of \sigma_e' and \sigma_e and
%replaced \vec{x} by \vec{x}_T, to be consistent with the ADGH definition.
%$$u_i(\Gamma_d, \vec{\sigma}, A, \vec{x}) \ge u_i(\Gamma_d, 
%joe35: using labeled equation
\begin{equation}\label{eq:immunity}
u_i(\Gamma_d, (\vec{\sigma}_{-T}, \vec{\tau}_T), \sigma_e', \vec{x}'_T)
\ge u_i(\Gamma_d,
%joe29
%\vec{\sigma}, (\emptyset, \sigma'_e), \vec{x})$$
\vec{\sigma}, \sigma_e, \vec{x}_T).
%$$
\end{equation}
\end{proposition}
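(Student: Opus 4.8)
The plan is to reduce~(\ref{eq:immunity}) to the statement that the all-honest payoff of player $i$ in the mediator game, $u_i(\Gamma_d,\vec{\sigma},\cdot,\cdot)$, is a constant depending only on $\Gamma_d$ and $\vec{\sigma}$ --- on neither the scheduler nor the conditioning --- and then to dispose of the arbitrary deviation $\vec{\tau}_T$ by a single appeal to $t$-immunity. Concretely, fix $T$ with $|T|\le t$, environment strategies $\sigma_e,\sigma_e'$, a deviation $\vec{\tau}_T$, input profiles $\vec{x},\vec{x}'$, and a player $i\notin T$. Instantiating the definition of $t$-immunity with environment $\sigma_e'$, input profile $\vec{x}'$, and deviation $\vec{\tau}_T$ gives
\[ u_i(\Gamma_d,(\vec{\sigma}_{-T},\vec{\tau}_T),\sigma_e',\vec{x}'_T)\ \ge\ u_i(\Gamma_d,\vec{\sigma},\sigma_e',\vec{x}'_T), \]
so it remains only to show $u_i(\Gamma_d,\vec{\sigma},\sigma_e',\vec{x}'_T) = u_i(\Gamma_d,\vec{\sigma},\sigma_e,\vec{x}_T)$.

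Independence of the scheduler is the routine half. In a mediator game the honest players communicate only with the mediator, which follows a fixed strategy; in an all-honest run the honest players simply report their types and carry out the mediator's recommendations, so the resulting action profile is a function of the type profile and the mediator's internal randomness alone. The scheduler can delay and reorder messages but cannot change the eventual action profile or the players' types, hence cannot change $u_i$; so $u_i(\Gamma_d,\vec{\sigma},\sigma_e,\vec{x}_T)$ does not depend on $\sigma_e$.

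The heart of the argument --- and the step I expect to require the most care --- is independence of the conditioning: $u_i(\Gamma_d,\vec{\sigma},\cdot,\vec{x}_T)$ should not depend on $\vec{x}_T$. Here I would use the coordination established earlier in this subsection: since the players in $T$ can encode their inputs to the environment (via the empty-message trick) and read off the environment's scheduling decisions, the pair $(\vec{\tau}_T,\sigma_e')$ may be treated as a single adversary, and in particular, in the world conditioned on $\vec{x}_T$ the players in $T$ can deviate so as to simulate an all-honest run in which they report $\vec{x}'_T$ in place of their true inputs. Applying $t$-immunity to this $T$-deviation yields $u_i(\Gamma_d,\vec{\sigma},\cdot,\vec{x}_T)\le u_i(\Gamma_d,\vec{\sigma},\cdot,\vec{x}'_T)$, and the reverse inequality follows symmetrically. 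The delicate point is verifying that this simulation reproduces exactly the distribution over player $i$'s payoff in the all-honest $\vec{x}'_T$-world; this requires tracking how conditioning on $\vec{x}_T$ rather than on $\vec{x}'_T$ affects both the joint distribution of the honest players' types and the evaluation of $u_i$, and it is here that the assumptions on the type distribution and the structure of mediator games fixed in Section~\ref{sec:definitions} must be used. Combining the two independence facts with the appeal to $t$-immunity above gives~(\ref{eq:immunity}).
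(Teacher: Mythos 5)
Your overall plan is the same as the paper's: reduce everything to a single appeal to $t$-immunity by absorbing the change of scheduler and of conditioning into a $T$-deviation, using the fact (established just before this proposition) that players in $T$ can signal the scheduler by sending themselves empty messages and can play as if they had a different input. The paper phrases this as a proof by contradiction with a single constructed ``switching'' scheduler; you phrase it as a direct argument. That rearrangement is fine in principle.

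However, the step you call the ``routine half'' is the one that is actually wrong as justified. You claim that in an all-honest run ``the resulting action profile is a function of the type profile and the mediator's internal randomness alone'' and hence is scheduler-independent. This is false in an asynchronous mediator game: the mediator's local history, and therefore its randomized choices and outgoing messages, can depend on the \emph{order} in which it receives the players' reports, and that order is under the scheduler's control. So $u_i(\Gamma_d,\vec{\sigma},\sigma_e,\vec{x}_T)$ can genuinely vary with $\sigma_e$, and nothing ``routine'' rules this out. Scheduler-invariance under $\vec{\sigma}$ (conditioned on $\vec{x}_T$) requires exactly the same signaling argument you reserve for the conditioning step: if some $\sigma_e'$ gave $i$ a strictly lower payoff than $\sigma_e$, a player in $T$ could signal the environment to switch from $\sigma_e$ to $\sigma_e'$ and thereby hurt $i$, contradicting $t$-immunity. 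The paper avoids even stating a free-standing scheduler-invariance claim; instead it builds one concrete scheduler $\sigma_e''$ that behaves like $\sigma_e$ unless triggered by a self-message, and observes that under the all-honest $\vec{\sigma}$ no self-message is ever sent, so $u_i(\Gamma_d,\vec{\sigma},\sigma_e'',\vec{x}_T) = u_i(\Gamma_d,\vec{\sigma},\sigma_e,\vec{x}_T)$ trivially. That construction is what your decomposition should use in place of the false ``the scheduler cannot change the action profile'' claim.

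On the ``delicate point'' you flag for conditioning-independence: you are right to be uneasy. When the players in $T$ pretend their inputs are $\vec{x}'_T$ from a world actually conditioned on $\vec{x}_T$, the honest players' types are still drawn from $\Pr(\,\cdot \mid \T(\vec{x}_T))$, not from $\Pr(\,\cdot \mid \T(\vec{x}'_T))$, and $u_i$ depends on $i$'s type; so with correlated types the simulated payoff need not match $u_i(\Gamma_d,\vec{\sigma},\cdot,\vec{x}'_T)$. This is the same subtlety the paper's one-line equality $u_i(\Gamma_d,(\vec{\sigma}_{-T},\vec{\tau}'_T),\sigma_e'',\vec{x}_T) = u_i(\Gamma_d,(\vec{\sigma}_{-T},\vec{\tau}_T),\sigma_e',\vec{x}'_T)$ glosses over. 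You have identified a real loose end in the source argument, but you also leave it unresolved, so your write-up does not constitute a complete proof either.
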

%ivan103:
\arxiv{
\begin{proof}
Clearly, if (\ref{eq:immunity}) holds for all $\sigma_e$, 
$\sigma_e'$, $\vec{x}$, and $\vec{x}'$, then $\vec{\sigma}$ is
$t$-immune.  For the converse, suppose by way of contradiction that
$\vec{\sigma}$ is $t$-immunte but
for some $T$, $\vec{\tau}$, $\sigma_e$, $\sigma_e'$, and $i \notin
T$, (\ref{eq:immunity}) does not hold.
Consider a scheduler $\sigma''_e$ that acts
    just like $\sigma_e$, except that if some player $i$ sends a message
to itself it acts 
like $\sigma_e'$.
Then players in $T$ can effectively decrease $i$'s
contradicting that
payoff with scheduler $\sigma_e''$ by sending a message to themselves
and playing as if they had input $\vec{x}'_T$;
that is, there is a strategy $\vec{\tau}'_T$ such that
$$    u_i(\Gamma_d,(\vec{\sigma}_{-T}, \vec{\tau}'_T), \sigma_e'',
\vec{x}_T)
%ivan91
%= u_i(\Gamma_d,(\vec{\sigma}_{-T}, \vec{\tau}_T), \sigma_e', \vec{x}_T)<
= u_i(\Gamma_d,(\vec{\sigma}_{-T}, \vec{\tau}_T), \sigma_e', \vec{x}'_T)<
u_i(\Gamma_d, \vec{\sigma}, \sigma_e, \vec{x}_T)
= u_i(\Gamma_d, \vec{\sigma}, \sigma_e'', \vec{x}_T),$$
contradicting the assumption that 
$\vec{\sigma}$ is $t$-immune.
The details for the proof are left  
\end{proof}
}

%joe28*: Ivan, we have to *motivate* the next proposition.
%Reinsterted material from previous version that seems to have disappeared.
%ivan82:
%Because schedulers can collude with the adversary, it turns out that
%$(k,t)$-robust strategies satisfy even an stronger condition, since
%joe29
%This argument applies as well for $(k,t)$-robustness, it turns out
%that $(k,t)$-robust strategies
%joe44
\arxiv{
A similar argument shows that $(k,t)$-robust strategy profiles satisfy a
correspondingly stronger condition,
%joe35
%which is made precise in the following proposition:
made precise in the following proposition:}
\disc{$(k,t)$-robust strategy profiles satisfy a
correspondingly stronger condition:}

\begin{proposition}\label{prop:adversary-robustness}
A strategy profile 
  $\vec{\sigma}$ is
$(k,t)$-robust (resp., strongly $(k,t)$-robust) if and only if
it is $t$-immune and 
for all disjoint sets $K$ and $T$ with $1 \le |K| \le k$ and $|T| \le t$,
%joe35
%all strategy profiles $\vec{\tau}_K$, $\vec{\tau}_T$ and
all strategy profiles $\vec{\tau}_K$, $\vec{\tau}_T$, and
$\vec{\tau}'_T$ for the   
players in $K$ and $T$, respectively, 
all environment strategies $\sigma_e$ and $\sigma'_e$, and all input profiles $\vec{x}$
and $\vec{x}'$,
we have that
%joe35: used equation so that I could label it
%$$
\begin{equation}\label{eq:robust}
u_i(\Gamma_d, (\vec{\sigma}_{-(K \cup T)}, \vec{\tau}_K,
\vec{\tau}'_T), \sigma'_e, \vec{x}'_{(K \cup T)}) \le 
u_i(\Gamma_d, (\vec{\sigma}_{-T}, \vec{\tau}_T), \sigma_e, \vec{x}_T)
%$$
\end{equation}
for some $i \in K$ (resp., for all $i \in K$). 
\end{proposition}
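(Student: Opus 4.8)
The plan is to mirror the proof of Proposition~\ref{prop:adversary-immunity}, extending it to account for a nonempty rational coalition $K$ in addition to the malicious set $T$. The $(\Leftarrow)$ direction is the routine one: unwinding the definition of (strong) $(k,t)$-robustness, one specializes (\ref{eq:robust}) to the case $\sigma_e = \sigma'_e$, $\vec{x} = \vec{x}'$, $\vec{\tau}'_T = \vec{\tau}_T$ and reconciles the type-conditioning (the definition conditions on the $K$-types, while (\ref{eq:robust}) conditions on the $K\cup T$- and $T$-types) by a standard averaging argument over the types of the players in $T$; this shows that $(\vec{\sigma}_{-T},\vec{\tau}_T)$ is a (strongly) $k$-resilient equilibrium of $\Gamma_{\vec{\tau}}^T$ for every $T$ and $\vec{\tau}_T$, which together with $t$-immunity is exactly (strong) $(k,t)$-robustness. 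The substantive direction, where the asynchrony of $\Gamma_d$ is used, is $(\Rightarrow)$.

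For $(\Rightarrow)$, suppose $\vec{\sigma}$ is (strongly) $(k,t)$-robust; it is then $t$-immune, so it remains to establish (\ref{eq:robust}). Argue by contradiction: suppose there are disjoint $K,T$ with $1 \le |K| \le k$ and $|T| \le t$, and witnesses $\vec{\tau}_K,\vec{\tau}_T,\vec{\tau}'_T,\sigma_e,\sigma'_e,\vec{x},\vec{x}'$ for which (\ref{eq:robust}) fails, i.e.\ the left-hand side of (\ref{eq:robust}) exceeds the right-hand side for \emph{all} $i \in K$ in the ordinary case (and for \emph{some} $i\in K$ in the strong case). I would then build a single ``combined adversary'' for $\Gamma_d$, consisting of a malicious strategy $\vec{\tau}^*_T$ for the players in $T$ and a scheduler $\sigma^*_e$, together with a deviation $\vec{\rho}_K$ for the coalition $K$, with the property that running $(\vec{\sigma}_{-(K\cup T)},\vec{\rho}_K,\vec{\tau}^*_T)$ under $\sigma^*_e$ induces exactly the distribution on action profiles (hence the same $u_i$'s) as the scenario on the left of (\ref{eq:robust}), while running $(\vec{\sigma}_{-T},\vec{\tau}^*_T)$ under $\sigma^*_e$ (i.e.\ with the $K$-players honest) induces exactly the scenario on the right. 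Applying (strong) $k$-resilience of $(\vec{\sigma}_{-T},\vec{\tau}^*_T)$ in $\Gamma_{\vec{\tau}^*}^T$ to the deviation $\vec{\rho}_K$ then gives some $i\in K$ (resp.\ all $i\in K$) for which the left-hand side of (\ref{eq:robust}) is at most the right-hand side, contradicting the assumed failure.

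The construction of the combined adversary reuses the self-messaging trick of Section~\ref{sec:adversary}. I would have $\vec{\rho}_K$ instruct each $K$-player, the first time it is scheduled, to emit a short burst of empty self-messages (a ``flag'') and to send a flag over its bad-player channels to the players in $T$, and only then to begin simulating $\vec{\tau}_K$ with its input replaced by the corresponding component of $\vec{x}'$, discarding the auxiliary messages so that the simulation of $\vec{\tau}_K$ is undisturbed. The malicious strategy $\vec{\tau}^*_T$ waits for the flag: if it arrives it runs $\vec{\tau}'_T$ with inputs replaced by $\vec{x}'_T$, and if not it runs $\vec{\tau}_T$ with inputs replaced by $\vec{x}_T$. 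The scheduler $\sigma^*_e$ schedules the $K$-players first, observes from the presence or absence of the self-message burst which of the two scenarios it is in, and from then on simulates $\sigma'_e$ or $\sigma_e$ accordingly, delivering the auxiliary messages only to their intended (bad) recipients. Since the flag is the $K$-players' very first action and $\sigma^*_e$ has not yet delivered any message to an honest player or to the mediator, committing to a mode at this point does not perturb the rest of the run. What remains is bookkeeping: checking that the two induced output distributions are identical to the two target scenarios, and handling the type conditioning exactly as in Proposition~\ref{prop:adversary-immunity}, using that the players in $T$ (and, in their deviation, the players in $K$) may ignore their true types and ``play as if'' their inputs were the relevant components of $\vec{x}$ and $\vec{x}'$.

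I expect the main obstacle to be the scheduler coordination: $\sigma^*_e$ has to commit to scheduling choices before it ``knows'' which scenario it is simulating, and in the honest-$K$ scenario there is no positive signal, only the absence of one. As sketched above, I would resolve this by making $\sigma^*_e$ deterministically probe the $K$-players before touching anyone else and use their first action --- the self-message burst versus the honest start-up move --- to select the mode; it is precisely the asynchrony of $\Gamma_d$ that makes such a probing schedule an allowable scheduler, so the construction does not go through in a synchronous setting. The strong versus ordinary distinction needs no extra work beyond carrying the quantifier ``for some $i \in K$''/``for all $i\in K$'' through the reduction unchanged.
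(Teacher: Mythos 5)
Your proposal is correct and follows essentially the same route as the paper's own proof: both directions agree, and in the substantive direction you construct exactly the paper's combined adversary (a scheduler that switches from $\sigma_e$- to $\sigma'_e$-behavior upon observing self-messages, a $T$-strategy that switches from $\vec{\tau}_T$/$\vec{x}_T$ to $\vec{\tau}'_T$/$\vec{x}'_T$ upon a signal from $K$, and a $K$-deviation that emits both signals before simulating $\vec{\tau}_K$ on pretend inputs $\vec{x}'_K$). The only differences are presentational details (e.g.\ having the scheduler probe the $K$-players first), which the paper leaves implicit.
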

%ivan103:
\arxiv{
\begin{proof}
%joe35: moved from above and rewrote
Again, it is clear that if (\ref{eq:robust}) holds for all $K$ and $T$
with $1 \le |K| \le k$ and $|T| \le t$, all $\vec{\tau}_K$,
$\vec{\tau}_T$, $\vec{\tau}'_T$, $\vec{x}$, and $\vec{x}'$, and some
(resp., all) $i \in K$, then $\vec{\sigma}$ is (strongly)
$(k,t)$-robust.

%joe35: note that it should be ``for all i \in K'' if your doing
%robustness, as opposed to strong robustness.
%Suppose that for some disjoint subsets $T$ and $K$ of players with
%$|T| \le t$ and $1 \le |K| \le k$, some strategies $\vec{\tau}_T$ and
%$\vec{\tau}_K$ for players in $T$ and $K$ respectively, some
%environment strategies $\sigma_e$ and $\sigma'_e$ and some input
%profiles $\vec{x}$ and $\vec{x}'$ the previous inequality does not
%hold for some $i \in K$.  
For the converse, assume by way of contradiction
that $\vec{\sigma}$ is $(k,t)$-robust, but for some disjoint sets $K$
and $T$ with  $1 \le |K| \le k$ and $|T| \le t$, $\vec{\tau}_K$,
$\vec{\tau}_T$, $\vec{\tau}'_T$, $\vec{x}$, and $\vec{x}'$, and all
$i \in K$, (\ref{eq:robust}) does not hold.  
Again, we use the fact that the rational players can effectively
communicate with malicious players and with the 
scheduler.
Consider a
scheduler $\sigma_e''$ that acts like $\sigma_e$ unless some 
player sends a message to itself, in which case it acts like
%joe35: lots of minor changes
%$\sigma_e'$. Also, consider a strategy $\vec{\tau}''_T$, in which each
%$i \in T$ acts as if it had strategy $(\vec{\tau}_T)_i$, but changes
%its strategy to $(\vec{\tau}'_T)_i$ and pretends having input
%$\vec{x}'_i$ if it receives a message from a rational player asking
$\sigma_e'$ and a strategy profile $\vec{\tau}''_T$ in which each
$i \in T$ acts as if it was using strategy $({\tau}_T)_i$, except
that it switches to $({\tau}'_T)_i$ and to pretend have input
$x'_i$ if it receives a message from a rational player asking
it to do so. Then, given input profile $\vec{x}$, strategy profile  
$\vec{\tau}''_T$ for $T$, and scheduler $\sigma_e''$, player $i$ can
gain by sending a message to itself and sending a message to players
in $T$ asking them to follow $\vec{\tau}'_T$ and pretends to have input
$\vec{x}'_T$, and by making players in $K$ play $\vec{\tau}_K$ as if
they had input $\vec{x}'_K$,  rather than
playing $\vec{\sigma}$. This contradicts
the assumption that $\vec{\sigma}$ is $(k,t)$-robust.
%joe35
%for some $k \ge 1$.
The argument for strong $(k,t)$-robustness.
\end{proof}
}
%ivan103:
%joe44: said it later; once is enough
%The proofs for these propositions can be found in the full
%paper~\cite{fullPaper_}.
%joe35
%Another property interesting on its own that follows directly from
Another property interesting in its own right that follows from
this argument is that $(k,t)$-robust strategies must be
\emph{scheduler-proof}: the expected payoff for all players 
is the same regardless of the scheduler:
\begin{corollary}\label{lemma:scheduler-proof}
%joe31
%  If $\vec{\sigma}$ is $(k,t)$-robust, then for all sets $T$ with $|T|
    If $\vec{\sigma}$ is $(k,t)$-robust for some $k \ge 1$, then for all
    sets $T$ with $|T| 
    %ivan91*: Shouldn't it be for all |T| \le t+k?. In any case 4.3 is
    %not a corollary from this proposition as it is now, I believe
    %they should be different propositions (this one to give intuition
    %and 4.3 is the one we actually use).
    %joe34*: I think this is right.  The proof derives a contradiction
    %by showing that someone can gain if t people deviate.  This
    %contradicts (k,t)-robustness as long as k \ge 1.  Perhaps we can
    %do better with a different proof.  I'm happy to call 4.3 a
    %proposition but, as I said earlier, we may want to cmobine it
    %with B.2.
\le t$, strategy profiles $\vec{\tau}_T$ for the players in $T$,
environment strategies $\sigma_e$ and $\sigma_e'$, 
input profiles $\vec{x}$,
and players $i \notin T$, we have
%joe30: need to subscript \sigma!
%$u_i(\Gamma, \vec{\sigma},\vec{\tau}_T,\sigma_e,\vec{x}) =
%u_i(\Gamma, \vec{\sigma},\vec{\tau}_T,\sigma'_e,\vec{x})$.
%joe31: add subscript to x (I see you caught this too)
%$u_i(\Gamma, (\vec{\sigma}_{-T},\vec{\tau}_T),\sigma_e,\vec{x}) =
%u_i(\Gamma, (\vec{\sigma}_{-T},\vec{\tau}_T),\sigma'_e,\vec{x})$.
$u_i(\Gamma, (\vec{\sigma}_{-T},\vec{\tau}_T),\sigma_e,\vec{x}_T) =
%ivan91:
%u_i(\Gamma, (\vec{\sigma}_{-T},\vec{\tau}_T),\sigma'_e,\vec{x}_T)$.
u_i(\Gamma, (\vec{\sigma}_{-T},\vec{\tau}_T),\sigma'_e,\vec{x}_T)$.
%joe29
%\end{lemma}
\end{corollary}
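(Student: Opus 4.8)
The plan is to obtain Corollary~\ref{lemma:scheduler-proof} as an immediate consequence of Proposition~\ref{prop:adversary-robustness}, applied twice with the two schedulers swapped so that the two resulting inequalities combine into the desired equality. Fix $T$ with $|T|\le t$, a strategy profile $\vec{\tau}_T$ for the players in $T$, two environment strategies $\sigma_e$ and $\sigma'_e$, an input profile $\vec{x}$, and a player $i\notin T$. Since $\vec{\sigma}$ is $(k,t)$-robust with $k\ge 1$, the singleton $K=\{i\}$ is a legal coalition: it is disjoint from $T$ and satisfies $1\le|K|\le k$. This is exactly the point where the hypothesis $k\ge 1$ is used.

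Now instantiate Proposition~\ref{prop:adversary-robustness} with this $K$, taking $\vec{\tau}_K=\sigma_i$ (so player $i$ does not deviate), $\vec{\tau}'_T=\vec{\tau}_T$, the environment strategy $\sigma'_e$ on the left and $\sigma_e$ on the right, and both input profiles equal to $\vec{x}$. Using $(\vec{\sigma}_{-(\{i\}\cup T)},\sigma_i)=\vec{\sigma}_{-T}$, inequality~(\ref{eq:robust}) becomes
$$u_i(\Gamma_d,(\vec{\sigma}_{-T},\vec{\tau}_T),\sigma'_e,\vec{x}_{\{i\}\cup T})\le u_i(\Gamma_d,(\vec{\sigma}_{-T},\vec{\tau}_T),\sigma_e,\vec{x}_T);$$
the ``for some $i\in K$'' clause is automatic because $|K|=1$, and the same instantiation works for the strong version. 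It remains to replace the $\vec{x}_{\{i\}\cup T}$-conditioned utility on the left by the $\vec{x}_T$-conditioned one in the statement. Since $i\notin T$, the definition of conditioned utility gives, partitioning $\T(\vec{x}_T)$ by $i$'s type, that $u_i(\Gamma_d,(\vec{\sigma}_{-T},\vec{\tau}_T),\sigma'_e,\vec{x}_T)=\sum_{x_i}\Pr(x_i\mid\T(\vec{x}_T))\,u_i(\Gamma_d,(\vec{\sigma}_{-T},\vec{\tau}_T),\sigma'_e,(x_i,\vec{x}_T))$, while the right-hand side above does not depend on $x_i$. Taking the convex combination of the displayed inequality with weights $\Pr(x_i\mid\T(\vec{x}_T))$ therefore yields $u_i(\Gamma_d,(\vec{\sigma}_{-T},\vec{\tau}_T),\sigma'_e,\vec{x}_T)\le u_i(\Gamma_d,(\vec{\sigma}_{-T},\vec{\tau}_T),\sigma_e,\vec{x}_T)$. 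Swapping the roles of $\sigma_e$ and $\sigma'_e$ in exactly the same derivation gives the reverse inequality, and the two together establish the claimed equality.

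The argument is short, so there is no real ``main obstacle''; the two points requiring care are (i) noticing that $k\ge 1$ is precisely what permits placing $i$ into a nonempty legal coalition $K$ so that Proposition~\ref{prop:adversary-robustness} yields information about this particular $i$, and (ii) the passage from the $\vec{x}_{\{i\}\cup T}$-conditioned utility delivered by that proposition to the $\vec{x}_T$-conditioned utility in the statement, which is handled by averaging over $i$'s type and using that the other side of the inequality is independent of it. (For the underlying game the scheduler argument is vacuous, and the derivation above goes through verbatim for strong $(k,t)$-robustness.)
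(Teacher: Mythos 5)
Your proposal is correct and follows the route the paper intends: the paper states the corollary as a direct consequence of the preceding argument (Proposition~\ref{prop:adversary-robustness}), and you instantiate that proposition with $K=\{i\}$, $\vec{\tau}_K=\sigma_i$, $\vec{\tau}'_T=\vec{\tau}_T$, and $\vec{x}'=\vec{x}$, then swap $\sigma_e$ and $\sigma'_e$ to turn the two inequalities into the claimed equality. The only step the paper leaves implicit and you spell out carefully is the averaging over player $i$'s own type, which passes from the $\vec{x}_{\{i\}\cup T}$-conditioned expectation delivered by the proposition to the $\vec{x}_T$-conditioned expectation in the statement; that step is exactly right, since the right-hand side of the instantiated inequality is independent of $x_i$.
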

%joe42: \end{arxiv}

%joe28: said this above.  This should have come *before* the
%proposition to motivate it
%Note that this means, in particular, that $(k,t)$-robust equilibriums
%are scheduler-proof: changing the scheduler's strategy does not affect
%the payoff of the players. 
%$$
%ivan83: Added the eps definitions.

%joe30: 
%There are analogous results in the case in which we have an
%approximate equilibrium
%joe42: moved back here, since this is the appendix
%We have the analogous strengthenings of ($\epsilon$-$t$)-immunity and
%$\epsilon$-$(k,t)$-robustness.  \shortv{We defer the formal details to
%  the appendix.} 
We have analogous strengthenings of $\epsilon$-$t$-immunity and
%joe43
%$\epsilon$-$(k,t)$-robustness,
$\epsilon$-$(k,t)$-robustness.
%ivan103:
%joe43
%whose details can also be found at the full paper~\cite{fullPaper_}.
%joe44
\disc{The proof of all these results can be found in the full
  paper~\cite{ADGH18}.} 

%joe42
%joe30
%    We first state carefully the strengthenings of
 %   ($\epsilon$-$t$)-immunity and $\epsilon$-$(k,t)$-robustness,
%    analogous to Proposition~\ref{prop:adversary-immunity}.
%    %joe31
%joe42: cut this and went straight to the results we actually use
\arxiv{
The proofs are essentially identical to that of
    Proposition~\ref{prop:adversary-immunity}, so we omit them here.
    
%joe30
\begin{proposition}\label{prop:eps-immunity}
If $\epsilon > 0$ and $\vec{\sigma}$ is $\epsilon$-$t$-immune in game $\Gamma$, then
for all sets $T$ of players  with $|T| \le t$, strategy profiles ${\tau}_T$ for the players in
$T$, environment strategies $\sigma_e$ and $\sigma_e'$, 
%joe34
%and
input profiles $\vec{x}$ 
%ivan91:
%joe34
%and $\vec{x}'$
and $\vec{x}'$, and players $i \notin T$,
we have that 
%joe30: no reason to look only at \Gamma_d!
%$$u_i(\Gamma_d, (\vec{\sigma}_{-T}, \vec{\tau}_T), \sigma_e', \vec{x})
$$u_i(\Gamma, (\vec{\sigma}_{-T}, \vec{\tau}_T), \sigma_e', \vec{x}'_T)
%joe30*: don't you want > here?  
%\ge u_i(\Gamma_d, 
> u_i(\Gamma, 
\vec{\sigma}, \sigma_e, \vec{x}_T) - \epsilon.$$
%joe29
%\end{lemma}
\end{proposition}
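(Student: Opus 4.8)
The plan is to follow essentially the same two-step argument used for Proposition~\ref{prop:adversary-immunity}, now carrying the additive slack $\epsilon$ through the inequalities. First I would note that one direction is immediate: specializing the displayed inequality to the case $\sigma_e' = \sigma_e$ and $\vec{x}' = \vec{x}$ gives exactly the defining condition of $\epsilon$-$t$-immunity, so it is enough to prove that $\epsilon$-$t$-immunity implies the (apparently stronger) displayed inequality.

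For this implication I would argue by contradiction. Suppose $\vec{\sigma}$ is $\epsilon$-$t$-immune in $\Gamma$ but the displayed inequality fails: there are a set $T$ with $|T| \le t$, a strategy profile $\vec{\tau}_T$, environment strategies $\sigma_e$ and $\sigma_e'$, input profiles $\vec{x}$ and $\vec{x}'$, and a player $i \notin T$ with $u_i(\Gamma, (\vec{\sigma}_{-T}, \vec{\tau}_T), \sigma_e', \vec{x}'_T) \le u_i(\Gamma, \vec{\sigma}, \sigma_e, \vec{x}_T) - \epsilon$. Exactly as in the proof of Proposition~\ref{prop:adversary-immunity}, define a scheduler $\sigma_e''$ that behaves like $\sigma_e$ until some player sends a message to itself and like $\sigma_e'$ from then on; using the self-message encoding described in Section~\ref{sec:adversary}, the players in $T$ can adopt a strategy profile $\vec{\tau}'_T$ that first sends a message to itself (triggering the switch to $\sigma_e'$) and then plays $\vec{\tau}_T$ while pretending to have input $\vec{x}'_T$. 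Then
$$u_i(\Gamma, (\vec{\sigma}_{-T}, \vec{\tau}'_T), \sigma_e'', \vec{x}_T) = u_i(\Gamma, (\vec{\sigma}_{-T}, \vec{\tau}_T), \sigma_e', \vec{x}'_T) \le u_i(\Gamma, \vec{\sigma}, \sigma_e, \vec{x}_T) - \epsilon = u_i(\Gamma, \vec{\sigma}, \sigma_e'', \vec{x}_T) - \epsilon,$$
where the last equality holds because honest players following $\vec{\sigma}$ never send messages to themselves, so $\sigma_e''$ acts like $\sigma_e$ against $\vec{\sigma}$. This contradicts $\epsilon$-$t$-immunity of $\vec{\sigma}$, completing the proof.

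The one point that needs care is exactly the point already established in Section~\ref{sec:adversary}: that $\sigma_e''$ is still an allowable scheduler (it delivers every message eventually and never permanently blocks any player, since it only ever switches which allowable scheduler it imitates) and that the self-message trigger is unambiguous (honest play produces no self-messages, so the environment cannot mistake honest traffic for the switch signal). I do not expect any genuine difficulty beyond this; it is routine bookkeeping. The analogous strengthenings of $\epsilon$-$(k,t)$-robustness, including the strong version, follow the same template applied to the proof of Proposition~\ref{prop:adversary-robustness} rather than Proposition~\ref{prop:adversary-immunity}: the rational players additionally use point-to-point messages to instruct the coalition in $T$ to switch to $\vec{\tau}'_T$ and to pretend to have input $\vec{x}'_T$, exactly as in that argument, and the $\epsilon$ slack propagates verbatim.
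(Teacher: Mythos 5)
Your proposal is correct and follows exactly the route the paper intends: the paper omits the proof of this proposition, stating only that it is ``essentially identical'' to that of Proposition~\ref{prop:adversary-immunity}, and your argument is precisely that proof (scheduler $\sigma_e''$ switching on a self-message, players in $T$ pretending to have input $\vec{x}'_T$) with the additive $\epsilon$ slack carried through. The extra observations (the trivial converse direction, allowability of $\sigma_e''$, and the extension to $\epsilon$-$(k,t)$-robustness via Proposition~\ref{prop:adversary-robustness}) all match the paper's treatment.
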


\begin{proposition}\label{prop:eps-robustness}
A strategy profile 
  $\vec{\sigma}$ is
$\epsilon$-$(k,t)$-robust (resp., strongly $\epsilon$-$(k,t)$-robust)
%joe30
in game $\Gamma$
if and only if 
it is $\epsilon$-$t$-immune and, 
for all disjoint sets $K$ and $T$ of players with 
$1 \le |K| \le k$ and $|T| \le t$, all strategy profiles $\vec{\tau}_K$ 
and $\vec{\tau}_T$ for players in $K$ and $T$, respectively,
all environment strategies $\sigma_e$ and $\sigma'_e$,
and
all input profiles $\vec{x}$ 
%ivan91:
and $\vec{x}'$,
we have
%joe30: not just \Gamma_d
%that $$u_i(\Gamma_d, (\vec{\sigma}_{-(K \cup T)}, \vec{\tau}_K,
that $$u_i(\Gamma, (\vec{\sigma}_{-(K \cup T)}, \vec{\tau}_K,
%joe30*: \vec{\tau}, not \vec{\tau}', also changed subscript on
%\vec{x} and changed \le to < (otherwise your ``improvement'' would be true!)
%\vec{\tau}_T), \sigma_e', \vec{x}) \le
%u_i(\Gamma_d, (\vec{\sigma}_{-T}, \vec{\tau}_T'), \sigma'_e, \vec{x})
\vec{\tau}_T), \sigma_e', \vec{x}_{(T\cup K)}') <
u_i(\Gamma, (\vec{\sigma}_{-T}, \vec{\tau}_T), \sigma_e, \vec{x}_{T})
+ \epsilon$$ 
for some $i \in K$ (resp., for all $i \in K$). 
\end{proposition}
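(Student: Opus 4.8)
The plan is to prove the equivalence in both directions, treating the ``for some $i$'' and ``for all $i$'' (``strong'') variants together; the argument mirrors the proof of Proposition~\ref{prop:adversary-robustness} with the $\epsilon$-slack carried through mechanically, and its only conceptual ingredient is the coordination between the rational players, the malicious players and the scheduler established in Section~\ref{sec:adversary} (encoding a message to the message-blind scheduler by having a player send itself the appropriate number of empty messages). The reverse implication --- from the displayed inequality together with $\epsilon$-$t$-immunity to $\epsilon$-$(k,t)$-robustness --- is the routine direction, since the displayed inequality is a strengthening of the defining condition of $\epsilon$-$(k,t)$-robustness (Definition~\ref{def:eps-equilibrium}). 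The forward implication --- from $\epsilon$-$(k,t)$-robustness to the displayed inequality --- is the substantive one; I would prove it by contraposition, and I expect the step of turning a hypothetical violation of the displayed inequality into a genuine profitable $\epsilon$-deviation by a $k$-coalition inside a suitable game $\Gamma^T_{\vec\tau}$ to be the main obstacle.

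For the forward direction, suppose $\vec\sigma$ is (strongly) $\epsilon$-$(k,t)$-robust but the displayed inequality fails for some disjoint $K,T$ with $1\le|K|\le k$ and $|T|\le t$, profiles $\vec\tau_K,\vec\tau_T$, schedulers $\sigma_e,\sigma_e'$, and inputs $\vec{x},\vec{x}'$; the negation of the conclusion gives the reversed inequality for every $i\in K$ in the non-strong case and for at least one $i^*\in K$ in the strong case. Following Proposition~\ref{prop:adversary-robustness}, I would fix $T$ and consider the meta-strategy $\vec\tau''_T$ under which each $j\in T$ plays exactly as $(\vec\tau_T)_j$ would on input $\vec{x}'_j$ once it receives a signal from a player in $K$ (and plays an arbitrary fixed strategy before that), together with the scheduler $\sigma_e''$ that mimics $\sigma_e$ until some player sends itself a message and mimics $\sigma_e'$ thereafter. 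Since $\vec\sigma$ is (strongly) $\epsilon$-$(k,t)$-robust, $(\vec\sigma_{-T},\vec\tau''_T)$ is (strongly) $\epsilon$-$k$-resilient in $\Gamma^T_{\vec\tau''}$. But in that game the coalition $K$ can deviate to the profile $\vec\tau''_K$ that first sends itself a message (switching the scheduler to its $\sigma_e'$-behaviour), then signals every player in $T$ (switching them to their $\vec{x}'$-behaviour), and then plays $\vec\tau_K$ as if its input were $\vec{x}'_K$. By the coordination results of Section~\ref{sec:adversary}, the outcome distribution under this deviation, suitably conditioned, coincides with the left-hand side of the failed inequality, while the corresponding equilibrium payoff is its right-hand side; hence at least one $i\in K$ (in the strong case, $i^*$) gains at least $\epsilon$ by switching away from $\vec\sigma$, contradicting (strong) $\epsilon$-$k$-resilience of $(\vec\sigma_{-T},\vec\tau''_T)$ in $\Gamma^T_{\vec\tau''}$.

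For the reverse direction, assume $\vec\sigma$ is $\epsilon$-$t$-immune and the displayed inequality holds for all the stated data; fix $T$ with $|T|\le t$ and $\vec\tau_T$, and let $K$ (disjoint from $T$ without loss of generality), $\vec\tau_K$, $\sigma_e$, and $\vec{x}$ be arbitrary. Specializing the displayed inequality to $\sigma_e'=\sigma_e$ with matching inputs, and then averaging over the malicious players' realized types (weighted by their conditional probability given $\vec{x}_K$) to recover the coarser conditioning on $\vec{x}_K$ demanded by the definition, one obtains $u_i(\Gamma^T_{\vec\tau},(\vec\sigma_{-K},\vec\tau_K),\sigma_e,\vec{x}_K)<u_i(\Gamma^T_{\vec\tau},\vec\sigma,\sigma_e,\vec{x}_K)+\epsilon$, which is precisely (strong, resp. non-strong) $\epsilon$-$k$-resilience of $(\vec\sigma_{-T},\vec\tau_T)$ in $\Gamma^T_{\vec\tau}$; in the non-strong case one first takes $\vec\tau_T$ type-oblivious so that the witnessing $i\in K$ can be chosen independently of the realized malicious type, which is legitimate since $\epsilon$-$(k,t)$-robustness quantifies over all $\vec\tau_T$. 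This is the same bookkeeping as for Proposition~\ref{prop:adversary-robustness}, and the companion strengthening of $\epsilon$-$t$-immunity, Proposition~\ref{prop:eps-immunity}, is obtained identically with $K$ empty and the inequalities reversed, exactly as for Proposition~\ref{prop:adversary-immunity}.
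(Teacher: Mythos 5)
Your high-level plan matches the paper's intent exactly: the paper omits this proof, pointing to Proposition~\ref{prop:adversary-immunity}, and its explicit proof of the parallel Proposition~\ref{prop:adversary-robustness} is precisely the argument you outline --- one direction ``clear'', the substantive direction by contraposition using the coordination trick of Section~\ref{sec:adversary}, with a meta-strategy for $T$ and a meta-scheduler that both switch behaviour once a self-message is observed, producing a profitable $K$-deviation inside $\Gamma^T_{\vec{\tau}''}$.

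There is, however, a real gap in your forward direction: you build $\vec{\tau}''_T$ so that each $j\in T$ ``plays an arbitrary fixed strategy'' until it receives the signal from $K$. The contradiction you want is against $\epsilon$-$k$-resilience of $(\vec{\sigma}_{-T},\vec{\tau}''_T)$ in $\Gamma^T_{\vec{\tau}''}$, and the \emph{non-deviation baseline} in that game is what $i$ obtains when all of $K$ plays $\vec{\sigma}_K$ honestly --- in which case nobody ever signals, so the players in $T$ follow your ``arbitrary'' before-signal behaviour forever under scheduler $\sigma_e$. Your claim that ``the corresponding equilibrium payoff is the right-hand side'' therefore fails: with an arbitrary before-signal profile $\vec{\rho}_T$ you end up bounding the deviation payoff against $u_i(\Gamma,(\vec{\sigma}_{-T},\vec{\rho}_T),\sigma_e,\cdot)$, which has nothing to do with $u_i(\Gamma,(\vec{\sigma}_{-T},\vec{\tau}_T),\sigma_e,\vec{x}_T)$. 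The before-signal behaviour of $\vec{\tau}''_T$ must be $(\vec{\tau}_T)_j$ on $j$'s \emph{true} input, exactly as the paper does in Proposition~\ref{prop:adversary-robustness} (``each $i\in T$ acts as if it was using strategy $(\tau_T)_i$, except that it switches \dots if it receives a message from a rational player''). Making that substitution repairs the argument. Separately, the ``type-oblivious $\vec{\tau}_T$'' remark in your reverse direction is confused --- the proposition quantifies over all $\vec{\tau}_T$, so you cannot restrict to type-oblivious profiles --- but that direction is the easy one and the remark is not load-bearing.
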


%joe39
%We start by showing that we can get better upper-bounds than
%$\epsilon$ for Propositions~\ref{prop:eps-immunity}
%joe42
%We start by showing that we can improve on the bound of
It will be useful for our later results that we can actually improve
on the bound of  
$\epsilon$ in Propositions~\ref{prop:eps-immunity}
and~\ref{prop:eps-robustness}.

%joe42: \end{commentout}
\begin{proposition}\label{prop:compactness-immunity}
%joe30*: again, no reason to restrict to mediator games!
%  Let $\Gamma_d$ be a mediator game and $\vec{\sigma}$ be an
%$\epsilon$-$t$-immune strategy for $\Gamma_d$.
%joe30*: we don't want to fix an adversary!  This should hold for all
%adversaries, otherwise it's not  a strengthening
%Fix an adversary $A = (\vec{\tau}_T, \sigma_e)$.
If $\vec{\sigma}$ is an $\epsilon$-$t$-immune strategy in a finite
game $\Gamma$, then
%joe30
%there exists $\epsilon_0 < \epsilon$ such
there exists $\epsilon_0$ with $0 < \epsilon_0 < \epsilon$ such
that for all
%joe30
%$\sigma'_e$ and inputs $\vec{x}$,
%joe31
%sets of players $T$ with $|T| \le t$, strategy profiles $\vec{T}_T$
sets of players $T$ with $|T| \le t$, strategy profiles $\vec{\tau}_T$
for the players in $T$, environment strategies $\sigma_e$ and 
%joe34
%$\sigma'_e$, and input profiles $\vec{x}$
$\sigma'_e$, input profiles $\vec{x}$
%ivan91
and $\vec{x}'$,
%joe34
and players $i \notin T$,
%joe30
%it holds that
we have that
%joe30
%$$u_i(\Gamma_d, (\vec{\sigma}_{-T}, \vec{\tau}_T), \vec{x}) > u_i(\Gamma_d,
%ivan91: adding ' and _T
%$$u_i(\Gamma, (\vec{\sigma}_{-T}, \vec{\tau}_T), \sigma_e', \vec{x})
 %> u_i(\Gamma,
%\vec{\sigma}, \sigma_e, \vec{x}) - \epsilon_0.$$ 
$$u_i(\Gamma, (\vec{\sigma}_{-T}, \vec{\tau}_T), \sigma_e',
\vec{x}'_T) > u_i(\Gamma,  
\vec{\sigma}, \sigma_e, \vec{x}_T) - \epsilon_0.$$ 
\end{proposition}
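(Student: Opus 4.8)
The plan is to deduce the existence of the uniform gap $\epsilon_0$ from $\epsilon$-$t$-immunity together with a compactness argument on the space of strategies. First I would invoke Proposition~\ref{prop:eps-immunity}, which already tells us that for every choice of $T$, $\vec\tau_T$, $\sigma_e$, $\sigma'_e$, $\vec{x}$, $\vec{x}'$, and $i\notin T$ we have $u_i(\Gamma,(\vec\sigma_{-T},\vec\tau_T),\sigma'_e,\vec{x}'_T) > u_i(\Gamma,\vec\sigma,\sigma_e,\vec{x}_T) - \epsilon$, i.e.\ a strict inequality for each individual configuration. What we need to upgrade this to is a single $\epsilon_0 < \epsilon$ that works uniformly. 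Define, for each fixed $T$ and $i\notin T$ and each fixed pair $(\sigma_e,\vec{x})$,
\[
g(T,i,\sigma_e,\vec{x}) \;=\; \inf_{\vec\tau_T,\ \sigma'_e,\ \vec{x}'}\bigl[\,u_i(\Gamma,(\vec\sigma_{-T},\vec\tau_T),\sigma'_e,\vec{x}'_T) - u_i(\Gamma,\vec\sigma,\sigma_e,\vec{x}_T)\,\bigr],
\]
and let $\epsilon_0$ be $\epsilon$ minus (say, half) the minimum of $-g$ over the finitely many choices of $T$, $i$, $\sigma_e$, $\vec{x}$ for which the infimum is negative; if no such negative value occurs we may take any $0<\epsilon_0<\epsilon$. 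The point of the proposition is precisely that this infimum is \emph{attained} (or at least bounded away from $-\epsilon$), so that $-g(T,i,\sigma_e,\vec x) < \epsilon$ strictly, uniformly.

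The key steps, in order, are: (i) reduce the quantifier over the deviating coalition's input $\vec{x}'$ and the environment strategy pair to a single optimization, observing that since $\Gamma$ is finite, the type space $\T$ is finite and we can enumerate the finitely many $\vec{x},\vec{x}'$ and finitely many sets $T$ and players $i$; (ii) argue that for fixed everything-else, the map $\vec\tau_T \mapsto u_i(\Gamma,(\vec\sigma_{-T},\vec\tau_T),\sigma'_e,\vec{x}'_T)$ is continuous (it is multilinear) on the compact set $\prod_{j\in T}\Delta(\text{strategies for } j)$, and likewise for $\sigma'_e$ over the compact set of environment strategies in the finite game, so the supremum over $(\vec\tau_T,\sigma'_e)$ is attained by some $(\vec\tau_T^*,\sigma_e^{\prime *})$; (iii) apply Proposition~\ref{prop:eps-immunity} at this maximizing configuration to get $u_i(\Gamma,(\vec\sigma_{-T},\vec\tau_T^*),\sigma_e^{\prime *},\vec{x}'_T) > u_i(\Gamma,\vec\sigma,\sigma_e,\vec{x}_T)-\epsilon$ with a \emph{strict} inequality, hence the maximum value is some quantity strictly below the $-\epsilon$ threshold, say at distance $\delta_{T,i,\sigma_e,\vec{x}}>0$ below; (iv) take the minimum of these finitely many $\delta$'s, call it $\delta^*>0$, and set $\epsilon_0 = \epsilon - \delta^*/2 \in (0,\epsilon)$. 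Then for every configuration the strict inequality holds with $\epsilon_0$ in place of $\epsilon$, as desired.

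The main obstacle is step (ii): making precise the compactness of the relevant strategy spaces and the continuity of the payoff functions. A strategy in an asynchronous game is a function from information sets to distributions over actions; in a \emph{finite} game the number of information sets is finite, so a (behavioral) strategy lives in a finite product of probability simplices, which is compact, and expected payoff is a polynomial in the strategy parameters, hence continuous --- but one must be a little careful that the ``environment strategy'' is also being drawn from a space that is compact in the relevant sense, namely that the allowable schedulers in a finite game form a finite (or at least compact) set; here I would appeal to the finiteness of the game tree. One subtlety worth flagging: the coalition $T$ in the deviation may itself behave adaptively based on communication with the environment (as established by Proposition~\ref{prop:adversary-immunity} and the surrounding discussion), but this only enlarges the strategy space in a way that is still captured by behavioral strategies on the finite game tree, so compactness is unaffected. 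If one is nervous about randomized environment strategies forming a genuinely infinite-dimensional space, one can instead note that only finitely many pure environment strategies exist in a finite game and that the infimum over mixed strategies equals the infimum over pure ones, reducing (ii) to a finite minimization where attainment is automatic.
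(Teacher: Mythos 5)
Your proposal takes essentially the same route as the paper: reduce, via Proposition~\ref{prop:eps-immunity}, to a strict pointwise bound, then use compactness of the strategy/input spaces together with continuity of expected payoff to conclude the supremum of $u_i(\Gamma,\vec{\sigma},\sigma_e,\vec{x}_T)-u_i(\Gamma,(\vec{\sigma}_{-T},\vec{\tau}_T),\sigma'_e,\vec{x}'_T)$ is attained and hence stays strictly below $\epsilon$, yielding $\epsilon_0<\epsilon$. One small slip worth fixing: in step (iv) you describe the outer minimization as running over ``finitely many choices of $T$, $i$, $\sigma_e$, $\vec{x}$,'' but $\sigma_e$ is not a finite parameter --- it ranges over the (compact, infinite) space of environment strategies, so the outer minimum must also invoke compactness/continuity (as the paper does by taking a single sup over \emph{all} parameters at once), not finiteness; your closing remarks about reducing to pure schedulers in the finite game tree would repair this, but as written the argument conflates the two cases.
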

%ivan103:
\arxiv{
\begin{proof}
%joe30*: added next sentence and simplified (and corrected) the argument
%joe31*: dropped the assumption of determinism and went back to compactness
%  We can assume without loss of generality that deviations $\tau_T$ 
%are deterministic, since if some randomized deviation makes
%players better off, there must be a deterministic deviation that makes
%players better off.
%joe30*: simpler argument.  Actually, compactness isn't necessary.
%joe34: missing '
% Since, for each choice of $\vec{\tau}_T$, $\sigma_e$, and $\sigma_e$,
  Since, by Proposition~\ref{prop:eps-immunity},
  for each choice of $\vec{\tau}_T$, $\sigma_e$, and $\sigma_e'$,
  we have 
$$u_i(\Gamma, \vec{\sigma}, \sigma_e, \vec{x}_T) - u_i(\Gamma,
%ivan91
%(\vec{\sigma}_{-T}, \vec{\tau}_T), \sigma_e', \vec{x}_T) < \epsilon,$$
(\vec{\sigma}_{-T}, \vec{\tau}_T), \sigma_e', \vec{x}'_T) < \epsilon,$$
  %joe34: no need for blank line here
%
%joe31
%and there are only finitely many deterministic choices, taking
%$\epsilon_1$ to be the maximum value of the left-hand side over all
%possible choices, we can take $\epsilon_0 = (\epsilon +
%\epsilon_1)/2$.
and the space of player strategy profiles, environment strategies, and
input value profiles is compact, if we take the sup of the left-hand
side over all choices of strategy profiles $\vec{\tau}_T$, 
environment strategies $\sigma_e$ and $\sigma_e'$, and input profiles
$\vec{x}$
%ivan91
and $\vec{x}'$, it takes on some maximum value $\epsilon_1 < \epsilon_0$.
We can then take $\epsilon_0 = (\epsilon + \epsilon_1)/2$.
%joe30*: cut old argument
\commentout{
Since the set of actions, inputs and messages is finite, and the
mediator game is in canonical form, the set of possible local history
%profiles is finite and thus the set of possible outcome distributions
profiles is finite.  
$\{O(\vec{\sigma}, A, \vec{x})\}_{A, \vec{x}}$ is compact. In
particular, since by Proposition~\ref{prop:eps-immunity} we have that
$u_i(\Gamma, \vec{\sigma}, A, \vec{x}) > u_i(\Gamma, \vec{\sigma},
\sigma'_e, \vec{x}) - \epsilon$, it follows that 
%joe30: I think this was backwards
$$s:= \sup_{A, \vec{x}}\{u_i(\Gamma_d, \vec{\sigma}, \sigma'_e,
\vec{x}) - u_i(\Gamma_d, \vec{\sigma}, \sigma'_e, \vec{x})\} <
\epsilon$$ 

The desired result holds by picking $\epsilon_0 := (s + \epsilon)/2$.
}
\end{proof}

Using Proposition~\ref{prop:eps-robustness}, we get a similar result
%joe30
%for $\epsilon$-$(k,t)$-robustness. The proof is analogous to the one for 
for $\epsilon$-$(k,t)$-robustness. The proof is analogous to that of
Proposition~\ref{prop:compactness-immunity}. 
}

\begin{proposition}\label{prop:compactness-robustness}
%joe30
%  Let $\Gamma_d$ be a mediator game and $\vec{\sigma}$ be an
%$\epsilon$-$(k,t)$-robust strategy for $\Gamma_d$.
If $\Gamma$ is a finite game and $\vec{\sigma}$ is 
%ivan84:
%an $\epsilon$-$(k,t)$-robust strategy
%$\epsilon$-$(k,t)$-strongly robust strategy
%joe31: reinstated \epsilon!
%a (strongly) $(k,t)$-robust strategy
a (strongly) $\epsilon$- $(k,t)$-robust strategy 
in $\Gamma_d$, then
  %joe30*: Again we don't want to do this.
%Fix an adversary $A = (\vec{\tau}_{T\cup K}, \sigma_e)$.
there exists $\epsilon_0$ with $0  < \epsilon_0 < \epsilon$ such that
for all disjoint sets $K$ and $T$ of players with 
$1 \le |K| \le k$ and $|T| \le t$, all strategy profiles $\vec{\tau}_K$ 
and $\vec{\tau}_T$ for players in $K$ and $T$, respectively,
all environment strategies $\sigma_e$ and $\sigma'_e$, and all input
profiles $\vec{x}$
%ivan91
and $\vec{x}'$, 
we have that $$u_i(\Gamma, (\vec{\sigma}_{-(K \cup T)},
%joe31
%\vec{\tau}_K, \vec{\tau}_T), \sigma_e, \vec{x}) <
%u_i(\Gamma, (\vec{\sigma}_{-T}, \vec{\tau}_T), \sigma'_e, \vec{x}) +
\vec{\tau}_K, \vec{\tau}_T), \sigma_e, \vec{x}'_{(K \cup T)}) <
u_i(\Gamma, (\vec{\sigma}_{-T}, \vec{\tau}_T), \sigma'_e, \vec{x}_T) +
\epsilon_0$$
%joe30: added
for some $i \in K$ (resp., all $i \in K$).
\end{proposition}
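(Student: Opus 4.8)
The plan is to follow exactly the argument sketched for Proposition~\ref{prop:compactness-immunity}, but now starting from Proposition~\ref{prop:eps-robustness} (the scheduler-coordination strengthening of $\epsilon$-$(k,t)$-robustness) rather than from Proposition~\ref{prop:eps-immunity}. The key observation is that a finite normal-form Bayesian game $\Gamma$ with bounded message space and canonical-form mediator strategies gives rise to a compact space of the relevant parameters, so a strict inequality that holds pointwise with gap $\epsilon$ can be upgraded to a strict inequality with a uniform gap $\epsilon_0 < \epsilon$ obtained from the supremum of the gap function.

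Concretely, I would proceed as follows. First, apply Proposition~\ref{prop:eps-robustness}: since $\vec{\sigma}$ is (strongly) $\epsilon$-$(k,t)$-robust, it is $\epsilon$-$t$-immune and, for every choice of disjoint $K,T$ with $1\le|K|\le k$, $|T|\le t$, strategy profiles $\vec{\tau}_K,\vec{\tau}_T,\vec{\tau}'_T$, environment strategies $\sigma_e,\sigma'_e$, and input profiles $\vec{x},\vec{x}'$, we have
$$u_i(\Gamma, (\vec{\sigma}_{-(K\cup T)},\vec{\tau}_K,\vec{\tau}_T),\sigma_e,\vec{x}'_{(K\cup T)}) - u_i(\Gamma,(\vec{\sigma}_{-T},\vec{\tau}_T),\sigma'_e,\vec{x}_T) < \epsilon$$
for some (resp.\ all) $i\in K$. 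Second, fix $K$ and $T$ and define, for each tuple of parameters, the quantity $g_{K,T}$ obtained by taking the minimum over $i\in K$ (for the ordinary case) or the maximum over $i\in K$ (for the strong case) of the left-hand side above; in either case Proposition~\ref{prop:eps-robustness} tells us $g_{K,T}<\epsilon$ pointwise. Third, note that the space of player strategy profiles, environment strategies, and input profiles is compact: the input space $\T$ is finite, the message space is finite, and since we may restrict attention to canonical-form strategies in $\Gamma_d$ the relevant behavioral-strategy space is a finite product of simplices, hence compact; moreover $g_{K,T}$ is continuous in these parameters (each $u_i$ is a finite sum of polynomials in the mixing probabilities). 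Therefore $g_{K,T}$ attains a maximum value $\epsilon_{K,T}<\epsilon$ over this compact set. Fourth, since there are only finitely many pairs $(K,T)$, let $\epsilon_1 = \max_{K,T}\epsilon_{K,T}<\epsilon$ and set $\epsilon_0 = (\epsilon+\epsilon_1)/2$, so that $0<\epsilon_1<\epsilon_0<\epsilon$ and the desired strict inequality holds uniformly with gap $\epsilon_0$.

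The main obstacle, such as it is, is purely a matter of care rather than difficulty: one must make sure the parameter space over which the supremum is taken is genuinely compact and that the payoff-difference function is genuinely continuous on it. This is where the finiteness hypotheses (finite $\Gamma$, finite message space, canonical form) are used; without them the behavioral-strategy space need not be compact and the argument would break. A secondary bookkeeping point is handling the ``some $i\in K$'' versus ``all $i\in K$'' quantifier correctly when passing to the min/max over the finite set $K$ before taking the sup over the continuous parameters—but since $K$ is finite this interchange is harmless. Once these points are settled, the conclusion follows by exactly the same two-line ``$\epsilon_0=(\epsilon+\epsilon_1)/2$'' device used in Proposition~\ref{prop:compactness-immunity}.
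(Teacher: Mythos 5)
Your proposal is correct and takes essentially the same route as the paper: the paper's own proof of Proposition~\ref{prop:compactness-robustness} is stated to be ``analogous to that of Proposition~\ref{prop:compactness-immunity},'' which does exactly what you do---invoke Proposition~\ref{prop:eps-robustness} to get the pointwise strict bound, use compactness of the strategy/environment/input space to pass to a supremum $\epsilon_1<\epsilon$, and set $\epsilon_0=(\epsilon+\epsilon_1)/2$. You are somewhat more careful than the paper's one-paragraph sketch in two places (taking a finite max over the pairs $(K,T)$, and folding the ``some $i\in K$'' vs.\ ``all $i\in K$'' quantifier into a $\min$/$\max$ over the finite set $K$ before taking the sup), but these are exactly the bookkeeping details the paper elides, not a change of approach.
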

%joe29*: it should be \Gamma_d.  Also, we should replace u by u', as
%I've said before and then use u' instead of \mu
%  Let $\vec{u}$ be any utility variant of $\Gamma_d$ such that
%$\vec{\sigma}$ is a $(k,t)$-robust strategy of $\Gamma(\vec{u})$. We
%ivan83:

\fullv{
\begin{proposition}\label{prop:eps-immunity}
%joe30
%Let $\epsilon > 0$. If $\vec{\sigma}$ is $\epsilon$-$t$-immune, then
If $\epsilon > 0$ and $\vec{\sigma}$ is $\epsilon$-$t$-immune, then
  for all $A = (\vec{\tau}_T, \sigma_e)$, 
all environment strategies $\sigma'_e$, and all and input profiles $\vec{x}$ we
have that 
$$u_i(\Gamma_d, \vec{\sigma}, A, \vec{x}) \ge u_i(\Gamma_d,
%joe29
%\vec{\sigma}, (\emptyset, \sigma'_e), \vec{x})$$
\vec{\sigma}, \sigma'_e, \vec{x}) - \epsilon.$$
%joe29
%\end{lemma}
\end{proposition}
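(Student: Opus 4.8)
The plan is to argue by contradiction, essentially reusing the proof of Proposition~\ref{prop:adversary-immunity} (specialized to a single input profile and with the inequality relaxed by an additive $\epsilon$). The definition of $\epsilon$-$t$-immunity only guarantees that $u_i(\Gamma_d,(\vec{\sigma}_{-T},\vec{\tau}_T),\sigma_e,\vec{x}_T) > u_i(\Gamma_d,\vec{\sigma},\sigma_e,\vec{x}_T) - \epsilon$ when the \emph{same} scheduler $\sigma_e$ is used on both sides, so the whole content of the proposition is that the players in $T$ can force the effective scheduler to behave like $\sigma_e$ along the play in which they deviate while behaving like $\sigma'_e$ along the play in which everyone follows $\vec{\sigma}$. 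The tool for this is the observation of Section~\ref{sec:adversary} that a player can signal the environment by sending itself messages, together with the fact that along the run in which every player follows $\vec{\sigma}$ no player ever sends itself a message (honest players talk only to the mediator).

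Concretely, unfolding the notation so that $u_i(\Gamma_d,\vec{\sigma},A,\vec{x})$ denotes $u_i(\Gamma_d,(\vec{\sigma}_{-T},\vec{\tau}_T),\sigma_e,\vec{x})$, I would fix $T$ with $|T|\le t$, the strategies $\vec{\tau}_T$, $\sigma_e$, $\sigma'_e$, the input profile $\vec{x}$, and a player $i\notin T$, and assume for contradiction that $u_i(\Gamma_d,(\vec{\sigma}_{-T},\vec{\tau}_T),\sigma_e,\vec{x}) < u_i(\Gamma_d,\vec{\sigma},\sigma'_e,\vec{x}) - \epsilon$. Then I would define a scheduler $\sigma''_e$ that acts like $\sigma'_e$ until some player in $T$ sends itself a message and acts like $\sigma_e$ thereafter, and a strategy profile $\vec{\tau}''_T$ in which each player in $T$ first sends itself one message and then follows $\vec{\tau}_T$. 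Since $\sigma_e$ and $\sigma'_e$ are both allowable environment strategies and the switch occurs at a finite history, $\sigma''_e$ is allowable as well. The two facts to verify are: (i) running $(\vec{\sigma}_{-T},\vec{\tau}''_T)$ against $\sigma''_e$ reproduces, from the initial self-message on, exactly the play of $(\vec{\sigma}_{-T},\vec{\tau}_T)$ against $\sigma_e$, so $u_i(\Gamma_d,(\vec{\sigma}_{-T},\vec{\tau}''_T),\sigma''_e,\vec{x}) = u_i(\Gamma_d,(\vec{\sigma}_{-T},\vec{\tau}_T),\sigma_e,\vec{x})$; and (ii) when all players follow $\vec{\sigma}$ the trigger is never pulled, $\sigma''_e$ behaves exactly like $\sigma'_e$, and hence $u_i(\Gamma_d,\vec{\sigma},\sigma''_e,\vec{x}) = u_i(\Gamma_d,\vec{\sigma},\sigma'_e,\vec{x})$.

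Putting these together with the contradiction hypothesis gives
\[
u_i(\Gamma_d,(\vec{\sigma}_{-T},\vec{\tau}''_T),\sigma''_e,\vec{x}) \;<\; u_i(\Gamma_d,\vec{\sigma},\sigma''_e,\vec{x}) - \epsilon ,
\]
which contradicts $\epsilon$-$t$-immunity of $\vec{\sigma}$ instantiated with the set $T$, the deviation $\vec{\tau}''_T$, the scheduler $\sigma''_e$, and the input profile $\vec{x}$; this proves the claimed inequality for all $i\notin T$. I expect the only genuinely delicate step to be fact~(i): one has to describe precisely how the self-messages the players in $T$ use (both the initial trigger and any further self-messages needed to keep the environment informed) are interleaved so that, after the switch, every honest player's local history coincides with the one it would have under $(\vec{\sigma}_{-T},\vec{\tau}_T)$ with $\sigma_e$. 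This is exactly the ``communication with the environment'' construction of Section~\ref{sec:adversary}; once it is in place, the additive $\epsilon$ carries over verbatim and the remainder is bookkeeping identical to the proof of Proposition~\ref{prop:adversary-immunity}.
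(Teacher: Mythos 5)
Your proof is correct and is essentially the paper's own argument: the paper proves Proposition~\ref{prop:eps-immunity} by saying it is ``essentially identical'' to the proof of Proposition~\ref{prop:adversary-immunity}, which is exactly the trigger-scheduler construction you describe ($\sigma''_e$ mimics the honest-run scheduler until a player in $T$ signals by self-messaging, then switches to the deviation scheduler), invoked against the definition of $\epsilon$-$t$-immunity with the additive $\epsilon$ carried through. You even flag the one genuinely delicate point (that the post-trigger run must reproduce the $(\vec{\tau}_T,\sigma_e)$ run exactly), which the paper itself glosses over.
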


\begin{proposition}\label{prop:eps-robustness}
A strategy profile 
  $\vec{\sigma}$ is
$\epsilon$-$(k,t)$-robust (resp., strongly $\epsilon$-$(k,t)$-robust) if and only if
it is $\epsilon$-$t$-immune and 
for all $A = (
%joe34
%\vec{\tau}_K, \vec{\tau}_T, \sigma_e)$ with $|K| \le k$ and $|T| \le
\vec{\tau}_K, \vec{\tau}_T, \sigma_e)$ with $1 \le |K| \le k$ and $|T| \le
t$, 
all strategy profiles $\vec{\tau}_T'$ for players in $T$, 
all environment strategies $\sigma'_e$, and all input profiles $\vec{x}$,
we have
that $$u_i(\Gamma_d, \vec{\sigma}, A, \vec{x}) \le
u_i(\Gamma_d, (\vec{\sigma}_{-T}, \vec{\tau}_T'), \sigma'_e, \vec{x}) + \epsilon$$
for some $i \in K$ (resp., for all $i \in K$). 
\end{proposition}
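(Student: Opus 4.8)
The plan is to prove the two directions of the biconditional separately; the substance lies in showing that $\epsilon$-$(k,t)$-robustness implies the displayed inequality, which is the $\epsilon$-analogue of Proposition~\ref{prop:adversary-robustness} and is proved by the same device.

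The ``if'' direction uses no machinery. Assume $\vec{\sigma}$ is $\epsilon$-$t$-immune and satisfies the displayed inequality. Given $T$ with $|T|\le t$ and a malicious profile $\vec{\tau}_T$, instantiate the inequality with $\vec{\tau}'_T:=\vec{\tau}_T$ and $\sigma'_e:=\sigma_e$: for every disjoint $K$ with $1\le|K|\le k$, every $\vec{\tau}_K$, every scheduler $\sigma_e$ and every input profile $\vec{x}$ we get $u_i(\Gamma_d,(\vec{\sigma}_{-(K\cup T)},\vec{\tau}_K,\vec{\tau}_T),\sigma_e,\vec{x})\le u_i(\Gamma_d,(\vec{\sigma}_{-T},\vec{\tau}_T),\sigma_e,\vec{x})+\epsilon$ for some (resp.\ all) $i\in K$. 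This is precisely the statement that $(\vec{\sigma}_{-T},\vec{\tau}_T)$ is a (strongly) $\epsilon$-$k$-resilient equilibrium of $\Gamma^T_{\vec{\tau}}$, so together with $\epsilon$-$t$-immunity we obtain (strong) $\epsilon$-$(k,t)$-robustness. (The gap between ``$\le\cdots+\epsilon$'' here and the strict ``$<\cdots+\epsilon$'' in Definition~\ref{def:eps-equilibrium} is harmless bookkeeping, absorbed as in Proposition~\ref{prop:compactness-immunity}.)

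For the ``only if'' direction, assume $\vec{\sigma}$ is (strongly) $\epsilon$-$(k,t)$-robust, and fix $A=(\vec{\tau}_K,\vec{\tau}_T,\sigma_e)$, a malicious profile $\vec{\tau}'_T$, a scheduler $\sigma'_e$ and an input profile $\vec{x}$. The key tool is the observation from Section~\ref{sec:adversary} that rational players can signal both the malicious players and the scheduler by padding with empty self-messages. Using this, I would build a composite scheduler $\sigma''_e$ and a composite malicious profile $\vec{\tau}''_T$ that mimic $\sigma'_e$ and $\vec{\tau}'_T$ as long as no designated player in $K$ sends a reserved self-message, and that switch to mimicking $\sigma_e$ and $\vec{\tau}_T$ (the latter triggered by a follow-up message from $K$ to $T$, whose timely delivery $\sigma''_e$ itself arranges) once that signal is sent. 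Applying (strong) $\epsilon$-$(k,t)$-robustness with the set $T$ and malicious profile $\vec{\tau}''_T$ shows that $(\vec{\sigma}_{-T},\vec{\tau}''_T)$ is a (strongly) $\epsilon$-$k$-resilient equilibrium of $\Gamma^T_{\vec{\tau}''}$. Instantiating this for $K$ with the deviation ``send the reserved signal, then play $\vec{\tau}_K$'' under scheduler $\sigma''_e$: equilibrium play $\vec{\sigma}_K$ of $K$ sends no signal, so the baseline payoff equals $u_i(\Gamma_d,(\vec{\sigma}_{-T},\vec{\tau}'_T),\sigma'_e,\vec{x})$, while the deviation triggers both switches, so the deviation payoff equals $u_i(\Gamma_d,(\vec{\sigma}_{-(K\cup T)},\vec{\tau}_K,\vec{\tau}_T),\sigma_e,\vec{x})=u_i(\Gamma_d,\vec{\sigma},A,\vec{x})$. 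Hence $u_i(\Gamma_d,\vec{\sigma},A,\vec{x})<u_i(\Gamma_d,(\vec{\sigma}_{-T},\vec{\tau}'_T),\sigma'_e,\vec{x})+\epsilon$ for some (resp.\ all) $i\in K$, which is even stronger than required; the ``strongly'' case runs verbatim with ``for all $i\in K$'' throughout.

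The step I expect to be the main obstacle is making the composites $\sigma''_e$ and $\vec{\tau}''_T$ rigorous: the reserved signal from $K$ must be unambiguously distinguishable from any message the honest players, $\vec{\tau}'_T$ or $\vec{\tau}_T$ could emit; $\vec{\tau}''_T$ must be defined so that it never commits to an irreversible action before it could have received $K$'s follow-up, with $\sigma''_e$ scheduling so that this is possible; and one must check that honest play by $K$ in $\Gamma^T_{\vec{\tau}''}$ really does leave both composites in their default branch. This is exactly the kind of encoding handled by the self-message padding of Section~\ref{sec:adversary}, and is where finiteness of the message space is used; the remaining $\le$-versus-$<$ bookkeeping is routine, as in Proposition~\ref{prop:compactness-immunity}.
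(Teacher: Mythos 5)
Your argument follows exactly the paper's approach: the paper itself omits this proof, remarking only that it is ``essentially identical'' to the proofs of Propositions~\ref{prop:adversary-immunity} and~\ref{prop:adversary-robustness}, and those proofs use precisely the device you describe---a composite scheduler $\sigma''_e$ and composite malicious profile $\vec{\tau}''_T$ that default to one pair and switch to the other on receipt of a reserved self-message signal from $K$, with the encoding justified by the padding argument of Section~\ref{sec:adversary}. Your write-up is actually more explicit than the paper's, and correctly isolates the one genuine technicality (the signal must be distinguishable, the switch must occur before anything irreversible, and the encoding must be invisible to honest players).
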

}

}
\subsection{Proof of
%joe44: added label
  Theorem~\ref{thm:upperbound-default-no-punish}}\label{sec:thm4.1proof} 

%joe31: cut for now
%For ease of exposition in the proof, we fix the mediator's strategy
%$\sigma_d$ in the mediator game, and omit the ``$+ \sigma_d$'' in the
%notation.

%joe31
%Given $\vec{\sigma}$, by Theorem~\ref{thm:errorless-simulation}, if $n
By Theorem~\ref{thm:errorless-simulation}, if $n
> 4k+4t$, there exists a strategy 
%joe29
%$\vec{\sigma}_{\ACT}$ that $(t+k)$-bisimulates and $(t+k)$-emulates
$\vec{\sigma}_{\ACT}$ that $(k+t)$-bisimulates and $(k+t)$-emulates
%joe31
%$\vec{\sigma}$,
$\vec{\sigma}+\sigma_d$,
%ivan83:
%joe38
%in which the expected number of messages if $O(nNc)$.  It is
in which the expected number of messages is $O(nNc)$.  It is 
immediate 
%joe29
%from $(t+k)$-bisimulation
from the definition of $(k+t)$-bisimulation
that $\vec{\sigma}_{\ACT}$ implements $\vec{\sigma} + \sigma_d$. Since the probability of deadlock is 0, the action that players play
in case of deadlock are irrelevant, so this approach works equally
well for the AH approach and the default-move approach.
%joe30*: using \vec{u}' consistently for variants
It remains to show that, for each utility variant $\Gamma_d(\vec{u}')$
%joe31
%of $\Gamma_d$, if $\vec{\sigma}$ is a (strongly) $(k,t)$-robust equilibrium
of $\Gamma_d$, if $\vec{\sigma}+\sigma_d$ is a (strongly) $(k,t)$-robust equilibrium
in $\Gamma_d(\vec{u}')$, then $\vec{\sigma}_{\ACT}$ is a (strongly)
$(k,t)$-robust equilibrium in $\Gamma_{\ACT}(\vec{u}')$. We start by showing that $\vec{\sigma}_{\ACT}$ is $t$-resilient in $\Gamma_{\ACT}(\vec{u}')$.

%ivan81: replaced everything from here on
%joe30: please get rid of the $A$, it really doesn't help.  I started
%doing it
%Let $A = (\vec{\tau}_T, \sigma_e)$ be an adversary in the
%cheap-talk game with $|T| \le t$. By
Given $T$ with $|T| \le t$, $\vec{\tau}_T$, and $\sigma_e$, by
%joe28
%Theorem~\ref{thm:errorless-simulation} there exist a function $H$
%jeo44
%Theorem~\ref{thm:errorless-simulation}, there exist a function $H$
Theorem~\ref{thm:errorless-simulation}, there exists a function $H$  
%ivan81: maybe we should define a special name for this, strategies to
%strategies sounds ugly :(
%joe28: I think it's OK
from strategies to strategies and
%joe30: much simpler!
%an adversary of the form $A'  = (H(\vec{\tau}_T), \sigma'_e)$ such
%joe31
%$\sigma'_e$ such that
$\sigma'_e$ such 
%joe28*: all the \mu_i's in this proof should be u_i's, unless I'm
%missing something.  If I'm right, please correct them
%that for all inputs $\vec{x}$ it holds that $$\mu_i(\Gamma_{\ACT}(\vec{u}),
%joe29*: the use of \mu_i here is not just bad notation, it's simply
%incorrect.   We should write u_i' and replace \Gamma_{\ACT}(\vec{u})
%by \Gamma_{\ACT}(\vec{u}').  These utility functions have to match!x
%joe30: using u_i'; please check
that, for all input profiles $\vec{x}$, we have
$$u_i'(\Gamma_{\ACT}(\vec{u}'),
%joe30
%\vec{\sigma}_{\ACT}, A, \vec{x}) = u_i'(\Gamma_d(\vec{u}'),
%joe31
%(\vec{\sigma}_{\ACT})_{-T}, \tau_T, \sigma_e, \vec{x}) =
((\vec{\sigma}_{\ACT})_{-T}, \tau_T), \sigma_e, \vec{x}) =
u_i'(\Gamma_d(\vec{u}'), 
%joe28
%\vec{\sigma}, A', \vec{x})$$ for all $i \in [n]$. 
%joe30
%\vec{\sigma}, A', \vec{x})$$ for all players $i$.
%joe31
%\vec{\sigma}_{-T}, H(\vec{\tau}_T), \sigma_e', \vec{x})$$ for all players $i$. 
(\vec{\sigma}_{-T}, H(\vec{\tau}_T)), \sigma_e', \vec{x})$$ for all players $i$. 
%joe28
%Also, there exists a scheduler $\sigma''_e$ such
There also exists a scheduler of the form $\sigma''_e$ such
that $$u_i'(\Gamma_{\ACT}(\vec{u}'), \vec{\sigma}_{\ACT},
\sigma'_e, \vec{x}) = u_i'(\Gamma_d(\vec{u}'), \vec{\sigma},
%joe28
%(\emptyset, \sigma''_e), \vec{x})$$
\sigma''_e, \vec{x}).$$

%joe28
%Thus, for all $i \in \bar{T}$ $$\begin{array}{lll}
Since $\vec{\sigma}$ is $t$-immune,  for all $i \notin T$ we have
$$\begin{array}{lll}  
%joe30: \vec{x} -> \vec{x}_T here
%  & u_i'(\Gamma_{\ACT}(\vec{u}'), \vec{\sigma}_{\ACT}, A, \vec{x}) & \\
  %= & u_i'(\Gamma_d(\vec{u}'), \vec{\sigma}, A', \vec{x}_T) & \\
    & u_i'(\Gamma_{\ACT}(\vec{u}'), ((\vec{\sigma}_{\ACT})_{-T},
  \vec{\tau}_T), \sigma_e, \vec{x}_T) & \\
= & u_i'(\Gamma_d(\vec{u}'), (\vec{\sigma}_{-T}, H(\vec{\tau}_T)), \sigma_e', \vec{x}_T) & \\
\ge & u_i'(\Gamma_d(\vec{u}'), \vec{\sigma}, \sigma''_e,
%joe28
%\vec{x}) & \mbox{By proposition~\ref{prop:adversary-immunity}} \\ 
%joe29
%\vec{x}) & \mbox{[By Lemma~\ref{prop:adversary-immunity}]} \\
\vec{x}_T) & \mbox{[by Lemma~\ref{prop:adversary-immunity}]} \\ 
= & u_i'(\Gamma_{\ACT}(\vec{u}'), \vec{\sigma}_{\ACT},
%joe28
%\sigma'_e, \vec{x}) & \end{array}$$
\sigma'_e, \vec{x}_T).
\end{array}$$ 
%joe42: removed para break
%
%joe28: you have to explain what you mean by ``same framework''!
%Therefore, $\vec{\sigma}_{\ACT}$ is $t$-immune. To show (strongly)
%$(k,t)$-robustness, using the same framework as for $t$-immunity
Therefore, $\vec{\sigma}_{\ACT}$ is $t$-immune.

%joe30: added paragraph break
To show (strong) $(k,t)$-robustness, taking
%joe30
%the adversaries $A$ and $A'$ as above,
$\vec{\tau}_T$, $\sigma_e$, and $\sigma_e'$ as above,
%joe28
%consider an adversary $B = (\vec{\tau}_T,
%\vec{\tau}_K, \sigma_e)$, with $|K| \le k$ and $K \cap T =
%\emptyset$. Then by theorem~\ref{thm:errorless-simulation} we have
%that there exists an adversary $\mathcal{B}' = (K \cup T,
%joe30
%consider an adversary $B = (\vec{\tau}_{K \cup T}, 
%\sigma_e)$, with $|K| \le k$ and $K \cap T = \emptyset$.
suppose that $K$ is a set of players disjoint from $T$ such that $|K|
\le k$, and the players in $K$ play $\vec{\tau}_K$.  
Then, by Theorem~\ref{thm:errorless-simulation},
there exists
%joe30
%an adversary $B' = ( H(\vec{\tau}_K), H(\vec{\tau}_T), \sigma^*_e)$ such 
$\sigma^*_e$ such
that $$u_i'(\Gamma_{\ACT}(\vec{u}'), ((\vec{\sigma}_{\ACT})_{-(K \cup
%joe31
  %  T)}, \tau_K, \tau_T), \sigma_e,
    T)}, \vec{\tau}_K, \vec{\tau}_T), \sigma_e,
\vec{x}) = u_i'(\Gamma_{d}(\vec{u}'), (\vec{\sigma}_{-(K\cup T)},
H(\vec{\tau}_K), H(\vec{\tau}_T)), \sigma_e^*,
%joe28
%joe30
%\vec{x})$$ for all players $i$.
\vec{x}_{(K \cup T)})$$ for all players $i$.
%ivan81: maybe for some i instead? works for both depending on how we
%defined (strongly) (k,t)-robust 
%joe28*: we want both versions.  It's two separate results.  If you
%start with robustness you get robustness; if you start with strong
%robusteness, you get strong robustness.  You should rewrite this to
%make it clear.
%ivan82: this argument works for both, is there a way to just say or
%imply so?
%joe29*: the argument before wasn't quite right.  There really are two
%separate arguments.
%Then, for all $i \in K$
%joe30
%By Corollary~\ref{prop:adversary-robustness}, if $\vec{\sigma}$ is
%joe31
%By Corollary~\ref{prop:adversary-robustness}, if $\vec{\sigma}$ is
%joe44: line shaving
\arxiv{By Corollary~\ref{prop:adversary-robustness}, if $\vec{\sigma}
  + \sigma_d$ is  $(k,t)$-robust (resp., strongly $(k,t)$-robust)  in
  $\Gamma_d(\vec{u}')$, then}
\disc{By Corollary~\ref{prop:adversary-robustness}, if $\vec{\sigma}
  + \sigma_d$ is  (strongly) $(k,t)$-robust in
  $\Gamma_d(\vec{u}')$, then}
$$u_i'(\Gamma_{d}(\vec{u}'), (\vec{\sigma}_{-(K \cup T)},
%joe31
%H(\vec{\tau}_K), H(\vec{\tau}_T)), \sigma_e^*), \vec{x}) \le
%joe46
%H(\vec{\tau}_K), H(\vec{\tau}_T)), \sigma_e^*), \vec{x}_{(K \cup T)}) \le 
H(\vec{\tau}_K), H(\vec{\tau}_T)), \sigma_e^*, \vec{x}_{(K \cup T)}) \le 
u_i'(\Gamma_{d}(\vec{u}'), (\vec{\sigma}_{-T}, H(\vec{\tau}_T)),
%joe31
%\sigma_e'), \vec{x})$$ for some
%joe46
%\sigma_e'), \vec{x}_T)$$ for some
\sigma_e', \vec{x}_T)$$ for some 
(resp., all) $i \in K$.  For those $i \in K$ for which this inequality
holds, we have
$$\begin{array}{lll} 
  &u_i'(\Gamma_{\ACT}(\vec{u}'), ((\vec{\sigma}_{\ACT})_{-(K \cup
%joe30
    %    T)}, \tau_K, \tau_T), \sigma_e, \vec{x})\\
        T)}, \tau_K, \tau_T), \sigma_e, \vec{x}_{(K \cup T)})\\
  = &u_i'(\Gamma_{d}(\vec{u}'), (\vec{\sigma}_{-(K\cup T)},
  H(\vec{\tau}_K), H(\vec{\tau}_T)), \sigma_e^*, \vec{x}_{(K\cup T)})\\ 
= &u_i'(\Gamma_{d}(\vec{u}'), (\vec{\sigma}_{-T},
H(\vec{\tau}_T)), \sigma_e', \vec{x}_T)\\
= & u_i'(\Gamma_{\ACT}(\vec{u}'), ((\vec{\sigma}_{\ACT})_{-T}, \sigma_e,
\vec{x}_T). \end{array}$$ 
%joe29
%as desired.
It follows that $\vec{\sigma}_{\ACT}$ is (strongly) $(k,t)$-robust in
$\Gamma_{\ACT}(\vec{u}')$.  
%ivan103:
%joe43
%To prove Theorem~\ref{thm:upperbound-default-no-punish1} the argument
%is essentially the same as that used for 

\disc{The proof of Theorem~\ref{thm:upperbound-default-no-punish1} 
is essentially the same as that of
Theorem~\ref{thm:upperbound-default-no-punish}, except that we now use
Theorem~\ref{thm:error-simulation} instead of
Theorem~\ref{thm:errorless-simulation}.
%joe43
%Details can be found at the full paper~\ref{fullPaper_}.
Details can be found in the full paper.}
\arxiv{
\subsection{Proof of Theorem~\ref{thm:upperbound-default-no-punish1}}

The argument is essentially the same as that used for
Theorem~\ref{thm:upperbound-default-no-punish}, except that we now use
Theorem~\ref{thm:error-simulation} instead of
Theorem~\ref{thm:errorless-simulation}.
%joe29
%By Theorem~\ref{thm:error-simulation}, for any $\epsilon' \in (0,1]$
By Theorem~\ref{thm:error-simulation}, for all $\epsilon' \in (0,1]$,
  there exists a protocol $\vec{\sigma}_{\ACT}$ that
%joe29
%  $(\epsilon,t+k)$-bisimulates $\vec{\sigma}$ and $(\epsilon',
  %  t+k)$-emulates $\vec{\sigma}$ and the expected number of messages if
%joe30
  %  $\epsilon$-$(t+k)$-bisimulates $\vec{\sigma}$,
    $\epsilon$-$(k+t)$-bisimulates $\vec{\sigma}$, 
%joe29*: where did the messages come from?  This needs to be added to
%the theorem
%  k+t)$-emulates $\vec{\sigma}$,  and the  expected number of messages if
  %  $O(nNc)$. In particular, $\vec{\sigma}_{\ACT}$
%  $\epsilon'$-implements $\vec{\sigma}$ and has at most $\epsilon'$
$\epsilon$-$(k+t)$-emulates $\vec{\sigma}$,  and uses $O(nNc)$ messages in expectation.
  It follows that $\vec{\sigma}_{\ACT}$
  $\epsilon'$-implements $\vec{\sigma}$ and has at most a
  probability $\epsilon'$ of deadlock. We show next that we can make
  $\epsilon'$   sufficiently small so that it becomes irrelevant if
%joe30
  %  we're working   with the AH approach or the default-move approach.
    we work   with the AH approach or the default-move approach.

    %ivan83: compactness

%joe30
%With this we are ready to begin the proof of $\epsilon$-$(k,t)$-robustness.  
We can now prove $\epsilon$-$(k,t)$-robustness.  
%joe31
%Suppose that $\vec{\sigma}$ is an $\epsilon$-$(k,t)$-robust equilibrium in
Suppose that $\vec{\sigma}+\sigma_d$ is 
%ivan100:
%an
a (strongly) 
$\epsilon$-$(k,t)$-robust equilibrium in
the utility variant $\Gamma_d(\vec{u}')$ of $\Gamma_d$. We 
%joe29*: this doesn't make sense.  How can \vec{\sigma}_{\ACT} (which
%is a strategy in the cheap-talk game) be immune in the mediator game
%  start showing that $\vec{\sigma}_{\ACT}$ is $(\epsilon,t)$-immune in
  %$\Gamma_d(\vec{u})$. Let $A = (T, \vec{\tau}_T, \sigma_e)$ in the
%  cheap-talk game with $|T| \le t$. By
show that $\vec{\sigma}_{\ACT}$ is $\epsilon$-$t$-immune in
%joe29: we've been consistently using primes in the cheapt-talk game
%$\Gamma_{\ACT}(\vec{u})$. Let $A = (T, \vec{\tau}_T, \sigma_e)$ be a
$\Gamma_{\ACT}(\vec{u}')$.
%joe30
%Let $A' = (\vec{\tau}_T, \sigma_e')$ be a $t$-adversary in the
%cheap-talk game. By
%joe30*: Ivan, you need to slow down here!  Help the poor reader, who
%will have absolutely no clue where \epsilon' and M are coming from.
%There are also a number of technical problems here.  You are
%implicitly assuming that we can find a correspondence between
%histories such that the set of histories on which the outcomes are
%different has probability \epsilon.  This seems to follow from the
%definition of \epsilon-emulation, but it's not totally obvious and
%must be shown.  Furthermore it's not enough that M is the maximum
%payoff.  It has to be the maximum *difference* between payoffs.
%joe31*: Ivan, you still need to deal with the comment above regarding
%the correspondence 
Since $\vec{\sigma}_{\ACT}$ $\epsilon'$-$(k+t)$ emulates $\vec{\sigma}
+ \sigma_d$, 
given a set $T$ of players with $|T| \le t$, a strategy profile
$\vec{\tau}_T$ for the players in $T$, and an environment strategy
$\sigma_e'$ in the cheap-talk game, by
Theorem~\ref{thm:error-simulation}, there exists a function $H$
as in the definition of $t$-emulation. 
%joe29
%strategies to strategies and an adversary $\mathcal{A}' = (T,
%H(\vec{\tau}_T), \sigma'_e)$
%joe30*: Ivan, you need to prove this (or something like it).  It
%should follow easily, but needs to be said.
Moreover, for all inputs $\vec{x}$, 
we can associate histories in the mediator game and
histories in the cheap-talk game in such a way that the set of
histories where the outcomes differ has probability at most $\epsilon'$.
%joe31
%Assume that the maximum difference in payoffs between two outcomes in
%that the maximum difference in payoffs between two outcomes in
%the underlying game is $M$.  Thus,
Since all utilities are in the range $[-M/2,M/2]$, by assumption,
the maximum difference in utility between two outcomes in
the underlying game is $M$.  Thus,
%joe30
%an adversary $A = (H(\vec{\tau}_T), \sigma_e)$
%joe31
there exists
an environment strategy $\sigma_e'$
such that for all input profiles $\vec{x}$, we have 
%joe29: please don't leave a blank line before displayed formulas
%joe30*: Argh ... You need to add the \vec{u}' here and change u to u'
%$$u_i(\Gamma_{\ACT}, \vec{\sigma}_{\ACT}, A, \vec{x}) >
%\mu_i(\Gamma_d, \vec{\sigma}, \mathcal{A}', \vec{x}) - \epsilon'M$$
$$u_i'(\Gamma_{\ACT}(\vec{u}'), ((\vec{\sigma}_{\ACT})_{-T}, H(\vec{\tau}_T)),
\sigma_e', \vec{x}) >
u_i'(\Gamma_d(\vec{u}'), (\vec{\sigma}_{-T}, \vec{\tau}_T),
\sigma_e, \vec{x}) - \epsilon'M$$ 
%joe29
%
for all $i \not \in T$.
%joe30: $M$ redefined above
%where $M$ is the maximum payoff that an honest
%%joe29
%%player can get. Also, by Theorem~\ref{thm:error-simulation} we have
%player can get.
Theorem~\ref{thm:error-simulation} also guarantees 
%joe31
%that there exists a scheduler $\sigma''_e$ such
that there exists an environment strategy $\sigma''_e$ such
that $$u_i'(\Gamma_{\ACT}(\vec{u}'), \vec{\sigma}_{\ACT}, \sigma_e',
\vec{x}) < u_i'(\Gamma_d(\vec{u}'), \vec{\sigma}, \sigma''_e,
%joe29
%\vec{x}) + \epsilon'M$$ 
%
%Thus,
%
\vec{x}) + \epsilon'M.$$ 
%joe30: expanded
%joe31
%Since $\vec{\sigma}$ is $\epsilon$-t immune in $\Gamma_d(\vec{u}')$,
Since $\vec{\sigma}$ is $\epsilon$-$t$ immune in $\Gamma_d(\vec{u}')$,
by Proposition~\ref{prop:compactness-immunity}, 
%ivan83:
there exists a value $\epsilon_0$ with $0 < \epsilon_0 < \epsilon$ such that
$$\begin{array}{lll}
%joe31
%  &u_i'(\Gamma_{\ACT}(\vec{u}'), ((\vec{\sigma}_{\ACT})_T, \vec{\tau}'_T),
  &u_i'(\Gamma_{\ACT}(\vec{u}'), ((\vec{\sigma}_{\ACT})_{-T}, \vec{\tau}'_T),
  \sigma_e', \vec{x}_T) & \\ 
%joe31
  %  > &u_i'(\Gamma_d(\vec{u}'), (\vec{\sigma})_T, (H(\vec{\tau}_T)),
    > &u_i'(\Gamma_d(\vec{u}'), (\vec{\sigma}_{-T}, H(\vec{\tau}_T)),
\sigma_e, \vec{x}_T) - \epsilon'M & \\ 
> & u_i'(\Gamma_d(\vec{u}'), \vec{\sigma}, \sigma''_e, \vec{x}_T) -
%ivan83
\epsilon_0 - \epsilon'M&
%%joe30: not needed; said it above
%\mbox{[by Proposition~\ref{prop:compactness-immunity}]}
\\ 
> & u_i'(\Gamma_{\ACT}(\vec{u}'), \vec{\sigma}_{\ACT}, \sigma_e',
\vec{x}_T) - \epsilon_0 - 2\epsilon'M.& \end{array}$$ 
%joe29*: Ivan, there is an issue here.  For immunity, you have to show
%the inequality for all adversaries and inputs.  There are only
%finitely many inputs, but infinitely many advssaries.  I think it's
%OK, because the space of adversaries is compact, so the difference
%assumes between the utilities assumes a minimum and you can find an
%\epsilon' sufficiently small.  But, unless I'm missing something,
%this requires a more careful argument.
%For a sufficiently small value of $\epsilon'$, this shows that
%ivan83:
%For $\epsilon'$ sufficiently small
%Picking $\epsilon' = (\epsilon - \epsilon_0)/2M$, this shows that
%$\vec{\sigma}_{\ACT}$ is $(\epsilon,t)$-immune as well with both the
If we take $\epsilon' = (\epsilon - \epsilon_0)/2M$, this shows that
$\vec{\sigma}_{\ACT}$ is $(\epsilon,t)$-immune with both the
AH approach and the default-move approach.

%joe30: added paragraph break
To show
%ivan100
% $(k,t)$-robustness,
%joe38: it's ``strongly robust equilibrium)'' but ``strong
%robustness''.  (Technically, ``strongly'' is an adverb, while
%``strong'' is an adjective.)
%(strongly) $\epsilon$-$(k,t)$-robustness,
(strong) $\epsilon$-$(k,t)$-robustness,
%joe30
%let $B$ be an adversary of the form $B = (\vec{\tau}_K,
%\vec{\tau}_T, \sigma_e)$. Then, there exists an adversary of the form
%$B' = (H(\vec{\tau}_K), H(\vec{\tau}_T), \sigma^*_e)$
keeping $T$, $\vec{\tau}_T$, $H$, $\sigma_e$, and $\sigma_e'$ as above,
%joe34
%for all sets $K$ of players disjoint from $T$ with $|K| < T$ and
for all sets $K$ of players disjoint from $T$ with 
%ivan100:
%$1 \le |K| < T$ and 
$1 \le |K| < k$ and 
strategy profiles $\vec{\tau}_K$, 
there exists an environment strategy $\sigma^*_e$
%ivan83:
%joe30
%and a value $\epsilon_0 < \epsilon$
and a value $\epsilon_0$ with $0 < \epsilon_0 < \epsilon$ 
 such
that for all input profiles $\vec{x}$,
%joe29: no need for line break before displayed formula
%
$$\begin{array}{lll}
%joe30: replaced u_i by u_i' throughout, and added \vec{U}', and got
%rid of B and B'
  %  &  u_i(\Gamma_{\ACT}, \vec{\sigma}_{\ACT}, B, \vec{x}) & \\
  &  u_i'(\Gamma_{\ACT}(\vec{u}'), ((\vec{\sigma}_{\ACT})_{-(K \cup
    T)}), \vec{\tau}_K,\vec{\tau}_T), \sigma_e', \vec{x}_{(K \cup T)}) & \\
< & u_i'(\Gamma_{d}(\vec{u}'), (\vec{\sigma}_{-(K \cup T)},
H(\vec{\tau}_K), H(\vec{\tau}_T)), \sigma_e, \vec{x}_{(K \cup T)}) + \epsilon'M & \\
%joe31
%< & u_i'(\Gamma_{d}(\vec{u}'), \vec{\sigma}_T, H(\vec{\tau}_T),
< & u_i'(\Gamma_{d}(\vec{u}'), (\vec{\sigma}_{-T}, H(\vec{\tau}_T)),
\sigma_e'', \vec{x}_T) + \epsilon_0 + \epsilon'M
%joe29
%& \mbox{By Corollary~\ref{prop:adversary-robustness}} \\
%ivan83:
%& \mbox{[by Corollary~\ref{prop:adversary-robustness}]} \\ 
& \mbox{[by Proposition~\ref{prop:compactness-robustness}]} \\ 
< & u_i'(\Gamma_{\ACT}(\vec{u}'), (\vec{\sigma}_{\ACT})_{-T},
\vec{\tau}), \sigma_e, \vec{x}_T) + \epsilon_0 +
2\epsilon'M & 
%joe29
%\mbox{By  Theorem~\ref{thm:error-simulation}.} \end{array}$$
\mbox{[by  Theorem~\ref{thm:error-simulation}].} \end{array}$$ 

%joe29*: Again this requires more care.
%joe30
%This shows by picking
%ivan100:
for some (for all) $i \in K$.
This shows that if we take
%ivan83:
%a sufficiently small $\epsilon'$
$\epsilon' := (\epsilon - \epsilon_0)/2M$,  
%joe30
%that
then
$\vec{\sigma}_{\ACT}$ is $\epsilon$-$(k,t)$-robust. Note that this
argument works for both the AH approach and the default-move approach
%joe29
%since it does not depend in the actions played in deadlock.
since it does not depend on the actions played in deadlock. 
}
\subsection{Proof of Theorem~\ref{thm:punish}}\label{sec:punish}
%$$
%joe42: this paragraph is redundant.  It essentially gets repeated in
%the next paragraph
\commentout{
We cannot use quite the same
approach to prove Theorem~\ref{thm:punish} as that used to prove
Theorem~\ref{thm:upperbound-default-no-punish}.
The main problem is that 
%$t$-emulation can only be  guaranteed for $t < n/4$, in particular the 
Theorem~\ref{thm:errorless-simulation} guarantees $t$-emulation can
only if $t < n/4$.
%joe29: cut; this is meaningless at this point, since we've never
%talked about primitives.
%in particular, the secure computation primitives used in the construction of
%$\vec{\sigma}_{\ACT}$ are resilient only for $t < 4n$. This means that
%if $4(k+t) \ge n$ a 
%coalition of $k$ rational players and $t$ malicious players
%could disrupt the simulation.
However, given a 
$(k+t)$-punishment
strategy, we can force sufficient cooperation from the rational
players for our purposes.
}
%joe42: \end{commentout}

The proof of Theorem~\ref{thm:punish} is similar in spirit to that of
Theorem~\ref{thm:upperbound-default-no-punish}.  The main problem we 
have to deal with is that of ensuring that rational players
participate.  To force participation, we have the honest players put
the punishment strategy in their ``wills'', so that if
%joe26
%$\vec{sigma}_{\ACT}$ ends in deadlock, the rational players will be
$\vec{\sigma}_{\ACT}$ ends in deadlock, the rational players will be 
punished.   Unfortunately, a naive implementation of this approach
does not work, as the following example shows.

Consider an underlying game $\Gamma$ for $n > 3k$ players where the
set of actions is $A := \{0,1,\bot\}$.
If at least $k+1$ players play $\bot$, all players get a payoff of
1.1; if $k$ or fewer players play $\bot$ and all players play
either 0 or $\bot$, then all players get a payoff of 1; if $k$ or
fewer players play $\bot$ and all players play either $\bot$ or 1, then all
players get a payoff of 2; otherwise, all players get 0.  
Let $\Gamma_d$ be an extension of $\Gamma$ with a mediator.  Suppose that
the mediator $d$ uses the following strategy:
First, $d$ chooses a value $b \in \{0,1\}$ with equal probability.
Then $d$ chooses
  $a \in \{0,1\}$ with equal probability and sends the message
  $a+bi \pmod 2$ to player $i$ (note that the same $a$ is used in all
these messages).  
Finally, $d$ sends the message ``output $b$; STOP'' to all players (so
the strategy is in canonical form).

Let $\sigma_i$ be the strategy where 
player $i$ ignores the message $a+bi$ and plays $b$ after
receiving the message ``output $b$''.  It is easy to check that
$\vec{\sigma}$ is a $k$-resilient equilibrium in the mediator game,
and gives players an 
expected payoff of 1.5.  Moreover, playing $\bot$ is a $k$-punishment
strategy with respect to $\vec{\sigma}$, since if all but $k$ players
play $\bot$, then everyone gets a payoff of 1.1 (since at least $k+1$
players play $\bot$), which is less than 1.5.

%joe42
%But the naive approach to implementing the mediator does not work for
The naive approach to implementing the mediator does not work for
this game, at least with the punishment strategy $\bot$.  
For example, suppose that after receiving the
messages $a + bi \pmod 2$, the rational players communicate with each
other.  Moreover, suppose that the set $K$ of rational players
includes $i$ and $j$ such that $i-j$ is odd.  Then the rational
players can compute $b$.  If $b=0$, they actually prefer their payoff
with the punishment strategy to their payoff with 
$\vec{\sigma}_{\ACT}$.  Thus, they will stop sending messages.  The
simulation will not terminate, so the punishment strategy in the
players' wills will be applied, making the rational players better
off.  Thus, we cannot
simulate the mediator with this approach.
%joe42: cut paragraph break
%
Of course, there are punishment strategies in this game that would
lead to cooperation (e.g., randomizing between 0 and 1). Nevertheless,
%joe25
%this strategy shows that a punishment strategy may not suffice
%joe31
%this example shows that using a punishment strategy may not suffice
this example shows that using an arbitrary punishment strategy may not suffice
to force the rational players to cooperate.

The problem here is that the mediator tells each player $i$ what 
$a+bi$ is.   We do not want the mediator to send such
unnecessary information.  But what counts as unnecessary?  
%joe32*: this isn't quite right
%show, we can assume without loss of generality that all the mediator
%sends each player is the action to play in the underlying game.
%joe43: we don't have the space to show this
\arxiv{As we now show, }
%ivan105:
\arxiv{
for each strategy profile $\vec{\sigma} + \sigma_d$ of a mediator game
we can construct a strategy $\vec{\sigma}^m + \sigma_d^m$ that
implements $\vec{\sigma} + \sigma_d$ and leaks no information. More
precisely, there exists a function $\mifunction$ from strategy
profiles to strategy profiles such that, for all strategy profiles
$\vec{\sigma} + \sigma_d$, $\mifunction(\vec{\sigma} + \sigma_d)$
implements  $\vec{\sigma} + \sigma_d$ and essentially all 
the mediator sends each player when playing $\mifunction(\vec{\sigma}
+ \sigma_d)$ is the action to play in the underlying 
game.  (If we require only weak implementation, then this is exactly the
case; for implementation, the messages can also include a round
number.)
%joe47: reinstating from %joe45 below
%Moreover, if $\vec{\sigma} + \sigma_d$ is $(k,t)$-robust so
Moreover, if $\vec{\sigma}+\sigma_d$ is $(k,t)$-robust (resp., 
strongly $(k,t)$-robust, $\epsilon$-(k,t)-robust, strongly
$\epsilon$-(k,t)-robust), then
is $\mifunction(\vec{\sigma} + \sigma_d)$.  
%joe47
%The construction of $\mifunction$ goes as follows:
The construction of $\mifunction$ proceeds as follows:
}
%joe43
%for all strategy profiles $(\vec{\sigma} , \sigma_d)$ for a mediator
%game, there exists an \emph{minimally informative strategy}
\disc{As we show in the full paper, 
for all strategy profiles $\vec{\sigma}+\sigma_d$ for a mediator
game, there exists a 
%ivan104
%\emph{minimally informative strategy profile}
strategy profile
%joe43
%$(\vec{\sigma}^m , \sigma_d^m)$ that implements $(\vec{\sigma} ,
$\vec{\sigma}^m +\sigma_d^m$ that implements $\vec{\sigma}+
\sigma_d$
%ivan104
%joe45
%and leaks no information.
and can be viewed as \emph{minimally informative} in that it leaks no
information up to the last step.
%joe45: since we're not going to give a formaly definition, I see no
%point in using the notion
%If this is the case we say that $\vec{\sigma}^m +\sigma_d^m$ is
%\emph{minimally informative}. 
More precisely,  
%joe45: there is no ``without loss of generality'', this is what
%\sigma_d^m does.
%we can assume without loss of generality that essentially all
when using strategy $\sigma_d^m$, essentially all that 
the mediator sends each player
%joe45
%with $\sigma__d^m$, 
%joe43
%joe45
%with strategy $\sigma_d^m$
is the action to play in the underlying
game.  If we require only weak implementation, then this is exactly the
%joe45
%case
case;
%ivan104
%joe45: slowed down
%and we require only $O(n)$ messages; for implementation, the messages
%can also include a round 
%and we use only $O(n)$ messages. However,
to get full implementation, 
the messages sent by the mediator also include a round 
%joe45
%number and we require \msgbound messages.
number.
%joe45*: this is important
Moreover, if $\vec{\sigma}+\sigma_d$ is $(k,t)$-robust (resp., 
strongly $(k,t)$-robust, $\epsilon$-(k,t)-robust, strongly
$\epsilon$-(k,t)-robust), then so is $\vec{\sigma}^m+\sigma_d^m$.
To get weak implementation, $\vec{\sigma}^m+\sigma_d^m$ requires only
$O(n)$ messages, but for implementation, our construction requires
$\msgbound$ messages, which we conjecture is
necessary.
}

%ivan105:
\commentout{
%joe43 left to full paper
\arxiv{
Implementation is straightforward from its construction, the details
about robustness and number of messages are shown in the full
paper~\cite{fullPaper_}.}} 
\arxiv{
%ivan105:
\commentout{
\begin{definition}\label{def:minimally-informative}
The strategy profile $\vec{\sigma} +
\sigma_d$ in a mediator game $\Gamma_d$ is \emph{minimally informative} if the following
conditions hold:
\begin{itemize}
  \item There exists a value $R$ such that the mediator sends all the
  players exactly $R$ messages. 
\item For all $r \le R$, the $r$th message from the mediator to each
  player is sent   simultaneously. If $r < R$,
  the $r$th message sent by the
  mediator to each player is 
  the same and, without loss of generality, is just $r$.
  The $R$th message from the mediator to each player consists of the
  STOP instruction and an action in the underlying game.
\end{itemize}
\end{definition}

%joe44
%\begin{lemma}
\begin{lemma}\label{lem:minimallyinformative} 
Given a finite mediator game $\Gamma_d$ that extends an underlying
game $\Gamma$, a canonical strategy profile
$\vec{\sigma}$ for the
players, and
a strategy $\sigma_d$ for the 
mediator in $\Gamma_d$, there
exists a minimally informative canonical strategy $\vec{\sigma}'$ for
the players and 
$\sigma_d'$ for the 
mediator 
such that 
$(\vec{\sigma}', \sigma_d')$
implements (resp., weakly implements) $(\vec{\sigma}, \sigma_d)$,
and if $\vec{\sigma}+\sigma_d$ is $(k,t)$-robust (resp.,
strongly $(k,t)$-robust, $\epsilon$-(k,t)-robust, strongly $\epsilon$-(k,t)-robust), then so is
$\vec{\sigma}+\sigma_d$.
Moreover,
the number of messages sent in each history of  
$\vec{\sigma}_{\ACT}$ is 
\msgbound
(resp., $O(n)$).
\end{lemma}
}
%ivan105:
\commentout{
\begin{proof}
} 
  Let    $\mathcal{S}^{det}_{\Gamma_d,e}/\!\!\sim$ denote the set of
  $\sim$-equivalence classes.  Thus,   
  $\mathcal{S}^{det}_{\Gamma_d,e}/\!\!\sim$ is essentially the
    set of deterministic environment strategies that result in
    different outcomes in the 
    underlying game when the players and the
mediator use $\vec{\sigma} + \sigma_d$ in $\Gamma_d$.  

The intuition underlying 
%ivan105
%$\sigma_d'$
$\mifunction(\sigma_d)$
 is that the mediator chooses an
equivalence class in $\mathcal{S}^{det}_{\Gamma_d,e}/\!\!\sim$, chooses
an environment strategy $\sigma_e$ in the equivalence 
class, and simulates the outcome of $(\vec{\sigma} + \sigma_d,
\sigma_e)$.  In order to get an implementation, we must ensure that it
is possible for 
%ivan105:
%$\sigma_d'$
$\mifunction(\sigma_d)$ 
to choose all possible equivalences
classes in $\mathcal{S}^{det}_{\Gamma_d,e}/\!\!\sim$.  
To do this, let $R$ be the least integer such that 
$(Rn)! \ge
%joe35*: Unfortunately, I don't believe your proposition.
%|\mathcal{S}^{det}_{\Gamma_d,e}/\!\!\sim|$.
|\mathcal{S}^{det}_{\Gamma_d,e}/\!\!\sim\!\!|$.  We show below that we
can take $R = 2^{rn\log(n)}$.
    %joe32*: you will need to estimate $R$.
    %ivan91: I don't see why the mediator has to send any message at all! Players can just send their R messages.
    The mediator $\sigma_d$ sends $R$ messages 
    to each player.  As we
    shall see, this suffices for the mediator to choose all possible
    equivalence classes in $\mathcal{S}^{\det}_{\Gamma_d,e}/\!\!\sim$.  
%joe34*: the calculation of $R$ should go above.

%joe32: this isn't canonical; rewrote slightly
%joe47
    %    The construction of
    The strategy
%ivan105:      
      %$\sigma'_i$
      $\mifunction(\sigma_i)$
       is straightforward:
player $i$ initially sends the mediator the message $(i,0,x_i)$, where
$x_i$ is $i$'s input.
Then for 
$1 \le r < R$, after receiving a message with content $r$ from the
mediator, player $i$ sends the mediator 
$(i,r,x_i)$. When $i$ receives a message
of the form (STOP, $a_i$) from the mediator, $i$ plays $a_i$ and
halts.

The mediator 
$\mifunction(\sigma_d)$
 proceeds as follows: it sends each player $i$
$R-1$
 messages, where the $r$th message just says ``$r$''.
 It then waits until
there are at least $n-k-t$ players
from which it has received a \emph{valid} and \emph{complete} set of
messages,
where a set of messages from a player $i$ is valid and complete if
for all $r$ with $0 \le r \le R-1$,
the mediator has received exactly one message of the form $(i,r,x)$,
  where $x$ is an input value that $i$ could have,  and all the $x$
  values are the same in these $R$ messages.
%joe46
%  The next time the mediator is scheduled, it sends a STOP message to each
  The next time that the mediator is scheduled, it sends a STOP message to each
  player with an action to perform.  We next explain how the mediator
%joe46
  % calculates which actionsr the players perform.
   calculates which actions the players perform.   
   
Let $P$ be the set of players from whom the mediator has
received a valid and complete of messages when it is next scheduled.
There are two cases. 
If $P$ consists of all players, this means that the mediator has received
$Rn$ messages.  There are $(Rn)!$ orders that these messages could
have come in.
Moreover, for each possible order of messages, there 
is a deterministic scheduler $\sigma_e'$
that delivers the messages in just this order.
By choice of $R$, $(Rn)! \ge |\mathcal{S}^{det}_{\Gamma_d,e}/\!\!\sim'|$,
so there is a surjective
mapping 
$H_P$ from each message order to a scheduler $\sigma_e$
in the game $\Gamma_d$.
The mediator then simulates a computation of
$(\vec{\sigma},\sigma_d)$ with the scheduler $\sigma_e$ corresponding to
the message order it actually received (generating the
randomness for all the players and for $\sigma_d$)
using the input $\vec{x}$ that it received from the players,
and sends each
player $i$ the action that results from this simulation as its $R$th
message.  

Now suppose $P$, the set of players from whom the mediator has
received a valid and complete of messages when it is next scheduled,
does not consist of all players. Let $\sigma_P'$ be a scheduler in the
game $\Gamma_d'$ that resulted in this message order. 
Consider a fixed scheduler
$\sigma_P$ in the game $\Gamma_d$ where the messages of all players
not in $P$ are delayed until after the mediator has sent all STOP
messages.  (There must be such a scheduler, since $|P| \ge n-k-t$).
Let $\vec{x}_P$ be the profile of inputs that the mediator has
received from the players in $P$, and extend it arbitrarily to an
input profile $\vec{x}$.  The mediator 
%ivan105:
%$\sigma_d'$
$\mifunction(\sigma_d)$
 simulates a computation of
$(\vec{\sigma},\sigma_d)$ with the scheduler $\sigma_P$ 
(again, generating the 
randomness for all the players and for $\sigma_d$)
using some profile $\vec{x}$ that extends the profile $\vec{x}_P$ it
received from the players in $P$ (it doesn't matter which vector
$\vec{x}$ is used, since the mediator $\sigma_d$ does not receive
messages from players not in $P$), and sends each
player $i$ the action that results from this simulation as its $R$th
message.
%ivan105
\commentout{
 As above, on all input profiles, 
%ivan105:
%$(\vec{\sigma}' + \sigma_d',\sigma_P')$
$(\mifunction(\vec{\sigma} + \sigma_d),\sigma_P')$
and $(\vec{\sigma}+\sigma_d,\sigma_P)$ induce
  the same distribution in $\Delta(A)$.  It follows that
  $\vec{\sigma}' + \sigma_d'$ implements $\vec{\sigma} + \sigma_d$.
  }

If we require only weak implementation, we can make do with far fewer
messages.  Each player $i$ just sends the mediator 
%ivan105:
%$\sigma_d'$
$\mifunction(\sigma_d)$
an
initial message of the form $(i,x_i)$.
The mediator waits until it has these initial messages from $n-k-t$
players.  
If the mediator has received
messages from the players in $P$ when it is next scheduled, it
simulates the mediator $\sigma_P$ on some input $\vec{x}$ that extends
$\vec{P}$.  

%ivan105:
%joe47: it's not realy ``correctness'' since we have no independent
%notion of what it means to be correct.
%We show next the correctness of our construction.
We now show that this construction has the required properties.

\begin{lemma}\label{lem:correctness-of-construction}
Given a finite mediator game $\Gamma_d$ that extends an underlying
game $\Gamma$, 
a canonical strategy profile
$\vec{\sigma}$ for the
players, and
a strategy $\sigma_d$ for the 
mediator in $\Gamma_d$,
%joe47
then
$\mifunction(\vec{\sigma} + \sigma_d)$
implements (resp., weakly implements) $(\vec{\sigma}, \sigma_d)$,
and if $\vec{\sigma}+\sigma_d$ is $(k,t)$-robust (resp.,
strongly $(k,t)$-robust, $\epsilon$-(k,t)-robust, strongly $\epsilon$-(k,t)-robust), then so is
$\mifunction(\vec{\sigma}+\sigma_d)$.
Moreover,
the number of messages sent in each history of  
$\mifunction(\vec{\sigma}+\sigma_d)$ is 
\msgbound
(resp., $O(n)$).
\end{lemma}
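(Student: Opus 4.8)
The plan is to run a direct ``simulation'' argument relating the mediator game $\Gamma_d^m$ in which $\mifunction(\vec{\sigma}+\sigma_d)$ lives to the original game $\Gamma_d$. The key feature of the construction of $\mifunction$ given above is that the modified mediator is insensitive to the \emph{content} of the messages it receives until the very end: each round message depends only on how many echoes have arrived, and the final STOP message to each player is the result of an internal simulation of $(\vec{\sigma}+\sigma_d,\sigma_e^{\mathit{sim}})$ --- run with fresh randomness --- whose only inputs are the set $P$ of players that submitted a valid and complete echo set before the mediator was next scheduled, the inputs those players claimed, and (when $P$ is the set of all $n$ players) the arrival order $o$ of their echoes, through the surjection $H_P$. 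So, modulo the choice of $\sigma_e^{\mathit{sim}}$, everything a scheduler or a deviator can affect in $\Gamma_d^m$ collapses to this finite bundle of data, and the argument is a matter of matching that bundle against schedulers and deviations in $\Gamma_d$.

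First I would prove the implementation claim. In the no-deviation case the honest strategies reveal nothing about the inputs to the environment, so a (possibly randomized) scheduler $\sigma_e^m$ of $\Gamma_d^m$ induces an input-independent distribution over the triples $(P,\text{claimed inputs},o)$ --- with claimed inputs equal to true inputs --- hence over simulated schedulers $\sigma_e^{\mathit{sim}}$, each a genuine scheduler of $\Gamma_d$; since the mediator uses fresh randomness, the function $\T\to\Delta(A)$ induced by $(\mifunction(\vec{\sigma}+\sigma_d),\sigma_e^m)$ is a fixed convex combination of functions $\vec{x}\mapsto O(\vec{\sigma}+\sigma_d,\sigma_e^{\mathit{sim}},\vec{x})$, which is itself realized by a randomized scheduler of $\Gamma_d$. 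This yields the ``$\subseteq$'' inclusion, which is all that weak implementation needs. For the reverse inclusion (needed only for full implementation), every deterministic scheduler $\sigma_e$ of $\Gamma_d$ is $\sim$-equivalent to (i.e.\ induces the same outcome function as) $H_P(o)$ for $P$ the set of all players and some arrival order $o$ --- this is exactly where the requirement $(Rn)!\ge|\S^{det}_{\Gamma_d,e}/\!\!\sim|$, hence $R=\mbox{$2^{rn\log n}$}$, is used --- so letting $\sigma_e^m$ deliver echoes in order $o$ with all $n$ players submitting before the mediator is rescheduled, $(\mifunction(\vec{\sigma}+\sigma_d),\sigma_e^m)$ and $(\vec{\sigma}+\sigma_d,\sigma_e)$ agree on $\T$; convex combinations again handle randomized $\sigma_e$. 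For weak implementation the stripped-down variant (one initial message per player, no round messages) already gives the first inclusion using $O(n)$ messages per history; for full implementation the round messages bring the count to $O(Rn)=\msgbound$.

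Next I would show robustness is inherited. Using the reduction of Section~\ref{sec:adversary}, treat the at most $k+t$ deviators together with the scheduler as a single adversary. Given any adversary $A^m$ in $\Gamma_d^m$ controlling a coalition $C$, I would build an adversary $A$ in $\Gamma_d$ controlling $C$ with $O(\mifunction(\vec{\sigma}+\sigma_d),A^m,\vec{x})=O(\vec{\sigma}+\sigma_d,A,\vec{x})$ for \emph{every} input profile $\vec{x}$: a deviator that submits a valid complete echo set claiming input $y_i$ plays, in $\Gamma_d$, the honest strategy $\sigma_i$ with input $y_i$ until it gets its STOP message and then plays whatever $A^m$ has it play (a function of its true input and the action it received); a deviator that does not submit is delayed past STOP and plays its $\Gamma_d^m$ move directly; and the $\Gamma_d$-scheduler is the very $\sigma_e^{\mathit{sim}}$ the modified mediator simulated, with the residual deviator-to-environment signalling of Section~\ref{sec:adversary} replayed so that $\sigma_e^{\mathit{sim}}$ tracks the inputs exactly as $\sigma_e^m$ does. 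Since at most $k+t$ players are in $C$, at least $n-k-t$ honest players always submit valid complete sets, so $P$ (and hence $A$) is always well-defined and the modified mediator never hangs. The simulation being exact, the outcome functions induced by $A^m$ and $A$ coincide, so every player's utility under $A^m$ in $\Gamma_d^m$ equals its utility under $A$ in $\Gamma_d$. Plugging this into Proposition~\ref{prop:adversary-immunity} shows $\mifunction(\vec{\sigma}+\sigma_d)$ is $t$-immune when $\vec{\sigma}+\sigma_d$ is, and plugging it into Proposition~\ref{prop:adversary-robustness} (resp.\ Propositions~\ref{prop:eps-robustness} and~\ref{prop:compactness-robustness} for the $\epsilon$ and strong-$\epsilon$ variants) shows $\mifunction(\vec{\sigma}+\sigma_d)$ is (strongly) $(k,t)$-robust (resp.\ (strongly) $\epsilon$-$(k,t)$-robust) whenever $\vec{\sigma}+\sigma_d$ is; no error is introduced, because the mediator's simulation reproduces the $\Gamma_d$-distribution exactly.

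The main obstacle is the robustness step --- making precise the claim that the modified mediator ``leaks no information beyond the action.'' One has to argue that an arbitrary, possibly adaptive and input-dependent deviation in $\Gamma_d^m$ buys nothing over the corresponding ``honest-with-a-fake-input'' deviation in $\Gamma_d$, and the only reason this holds is that the modified mediator's internal simulation makes it non-interactive after the echo phase. Turning this into a clean adversary map requires tracking three things simultaneously: the scheduler correspondence through $H_P$ (including the surjectivity that forces $R$ to be large), the deviator-to-environment signalling needed so the $\Gamma_d$-scheduler can be as input-dependent as the $\Gamma_d^m$-one, and the requirement --- imposed by Proposition~\ref{prop:adversary-robustness} --- that the two outcome distributions agree for \emph{all} input profiles at once, not just the one actually played. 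By contrast, the decomposition of randomized schedulers into input-independent mixtures of deterministic ones, and the handling of the measure-zero schedulers that violate fairness, are routine and need only be noted.
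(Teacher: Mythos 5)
Your proposal is correct and follows essentially the same route as the paper: both arguments rest on the observation that the minimally informative mediator's behavior is a function only of the set $P$, the claimed inputs, and the arrival order, so that any deviation or scheduler choice in the new game collapses to an input change plus a scheduler change in $\Gamma_d$ (you package this as an explicit adversary map feeding into Propositions~\ref{prop:adversary-immunity} and~\ref{prop:adversary-robustness}, while the paper runs the equivalent reduction as a contradiction argument). The only piece you take on faith is the bound $R = 2^{O(N\log N)}$ itself: the paper's proof includes the counting of scheduler equivalence classes via message patterns (giving roughly $(2rn)^{(4rn)(4rn)!/(r!)^{2n}}$ classes and hence $R=(4rn)^{4rn}$) that justifies the \msgbound{} message count, and this routine but necessary computation should be supplied rather than cited from the construction.
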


\begin{proof}
%ivan105
%joe47
%  Implementation (resp. weakly implementation) follows from construction. 
  Implementation (resp. weakly implementation) follows from the construction. 
  Suppose that $\vec{\sigma}+\sigma_d$ is $\epsilon$-$(k,t)$-robust
(resp., strongly $\epsilon$-$(k,t)$-robust, $(k,t)$-robust, strongly
$(k,t)$-robust).  We must show that 
$\mifunction(\vec{\sigma} + \sigma_d)$
has the
corresponding property.  The proofs are essentially the same in all
cases; moreover, the same argument works for both constructions
(i.e., both the one the gives implementation and the one that gives
weak implementation). For definiteness, we deal with
$\epsilon$-$(k,t)$-robustness here.

We start by showing $\epsilon$-$t$-immunity.  If
$\mifunction(\vec{\sigma}+ \sigma_d)$
is not $\epsilon$-$t$-immune, then there must be 
scheduler $\sigma_e'$, a set $T$ of players with $|T| \le t$,
a strategy $\vec{\tau}_T$ for the players in $T$, and an input profile
$\vec{x}_T$ that shows this.
That is, for some $i \notin T$, we have 
%ivan105: I think there was a typo here!
$$u_i(\Gamma_d', (
%ivan105
%\vec{\sigma}_{-T}
\mifunction(\vec{\sigma}_T)
,\tau_{T}),\sigma_e',\vec{x}_T) \le
u_i(\Gamma_d', 
%ivan105
%\vec{\sigma}
\mifunction(\vec{\sigma})
,\sigma_e',\vec{x}_T) - \epsilon.$$
We can assume without loss of generality that $\sigma_e'$ is deterministic.
Moreover, we can further assume without loss of generality that, when
playing with $\vec{\tau}$, the malicious players deviate only
by sending an input other than their actual input.
All other deviations correspond to playing with a different
(deterministic) scheduler.  Specifically,
since the mediator ignores all messages that are not sent by a player
in the set $P$, the outcome is equivalent to playing with a scheduler
that delays these messages until after the mediator sends its $R$th
message.  And if a deviating player $j$ does not send a message
that it should send or sends a message late, we can just consider the
scheduler that sends messages in the order that 
$\mifunction(\sigma_d)$
 actually
received them.  
It now follows that 
there exists a scheduler 
$\sigma''_e$
%ivan91
and an input profile $\vec{x}'_T$ 
such that $$u_i(\Gamma_d', (
%ivan105
%\vec{\sigma}'_{-T}
\mifunction(\vec{\sigma})_T
,\tau_T),
\sigma'_e, \vec{x}_T) = u_i(\Gamma_d', 
%ivan105
%\vec{\sigma}'
\mifunction(\vec{\sigma})
, 
\sigma''_e, \vec{x}_T').$$
Also, by construction, 
there exists a scheduler $\sigma_e$ such that $$u_i(\Gamma_d',
%ivan105
%\vec{\sigma}'
\mifunction(\vec{\sigma})
, \sigma''_e, \vec{x}_T') = u_i(\Gamma_d,
\vec{\sigma}, \sigma_e, \vec{x}_T')$$ for some scheduler
$\sigma_e$. 
Similarly, there exists a scheduler $\sigma_e^*$ such that
$$u_i(\Gamma_d',
%ivan105:
%\vec{\sigma}'
\mifunction(\vec{\sigma})
, \sigma'_e, \vec{x}_T') = u_i(\Gamma_d,
\vec{\sigma}, \sigma_e^*, \vec{x}_T').$$
Thus, we have
$$u_i(\Gamma_d, \vec{\sigma}, \sigma_e, \vec{x}_T) 
\le u_i(\Gamma_d, \vec{\sigma}, \sigma_e^*, \vec{x}_T') 
%ivan105: typo here (important)!!
%+
-
 \epsilon.$$
By Proposition~\ref{prop:eps-immunity}, this contradicts the
assumption that $\vec{\sigma}$ is $\epsilon$-$t$-immune.

Now let $K$ be any subset disjoint from $T$ such that $1 \le |K| \le
k$. Then, using an analogous argument, there exists a scheduler
$\sigma_e'''$ and an input profile $\vec{x}_T''$ such
that $$u_i(\Gamma_d',
%ivan105: 
%\vec{\sigma}'_{-(T \cup K)}
\mifunction(\vec{\sigma})_{-(T \cup K)}
,
\sigma'_e, \vec{x}_{(T\cup K)}) = u_i(\Gamma_d', 
%ivan105
%\vec{\sigma}'
\mifunction(\vec{\sigma})
,
\sigma'''_e, \vec{x}_{(T \cup K)}''),$$
for all players $i$.
There also exists a scheduler $\sigma_e^{**}$ and an input profile
$\vec{x}'''_{T}$ such that
$$u_i(\Gamma_d, 
%ivan105:
%\vec{\sigma}'_{-T}
\mifunction(\vec{\sigma})_{-T}
, \vec{\tau}_T,
\sigma'_e, \vec{x}_{(T \cup K)}) = u_i(\Gamma_d, 
%ivan105:
%\vec{\sigma}'
\mifunction(\vec{\sigma})
,
\sigma_e^{**}, \vec{x}_{(T  \cup K)}''')$$ 
for all players $i$.
Thus,
$$\begin{array}{lll}
  & u_i(\Gamma_d',
  %ivan105
  %\vec{\sigma}'_{-(T \cup K)}
  \mifunction(\vec{\sigma})_{-(T \cup K)}
  , \sigma'_e,
  \vec{x}_{(T\cup K)}) & \\ 
= & u_i(\Gamma_d, 
%ivan105: why adding the mediator suddently? 
%\vec{\sigma}' + \sigma'_d
\mifunction(\vec{\sigma})
, \sigma'''_e, \vec{x}_{(T
\cup K)}'') & \\ 
< &u_i(\Gamma_d, 
%ivan105:
%\vec{\sigma}' + \sigma'_d
\mifunction(\vec{\sigma})
, \sigma_e^{**},
\vec{x}_{(T \cup K)}''') + \epsilon & \mbox{[by
Proposition~\ref{prop:eps-robustness}]} \\ 
= &u_i(\Gamma_d,
%ivan105: 
%\vec{\sigma}'_{-T} + \sigma'_d
\mifunction(\vec{\sigma})_{-T}
, \vec{\tau}_T,
\sigma'_e, \vec{x}_{(T \cup K)})& \end{array}$$
  which shows that 
  %ivan105:
  %$\vec{\sigma}' + \sigma'_d$
  $\mifunction(\vec{\sigma} + \sigma_d)$
   is $\epsilon$-$(k,t)$-robust.
\commentout{
 The desired result follows from a straightforward application of Proposition~\ref{prop:adversary-immunity} and Corollary~\ref{prop:adversary-robustness}.
 }

It remains to compute the bound on the number of messages sent.
Clearly, if all we need is a weak implementation, then $n$ messages
suffice.  In the case of implementation, at most $2Rn$ messages are
sent (at most $R$ by each player, and at most $Rn$ by the mediator),
where $R$ is the least integer such that 
$(Rn)! \ge
|\mathcal{S}^{det}_{\Gamma_d,e}/\!\!\sim\!\!|$.  Thus, we must compute
an estimate for the number of equivalence of deterministic schedulers. 

%joe35
%A scheduler has only information about when each player is sending a
%message and to who it is being sent. Thus, a scheduler's local history
%consists of a list of content-less messages with their correspondent
%sender and receiver and the relative time in which they have been sent
%and delivered (or if they have yet to be delivered).
A deterministic scheduler is a function from \emph{message patterns}
to a choice of message to be delivered, where a message pattern
describes which messages have been sent so far, which were delivered,
%joe46
%and the order that the messages were sent and delvered, but not the
and the order that the messages were sent and delivered, but not the
contents of the message.  For example, taking $(s,i,j,k)$ (resp., 
$(d,i,j,k)$) to denote the $k$th message sent by player $i$ to player
$j$ (resp., that the $k$th message sent by player $i$ to $j$ is
received by $j$), and taking the mediator to be player 0, then a
typical message pattern might be $(s,0,3,1)$, $(s,1,0,1)$,
$(s,0,3,2)$, $(d,0,3,2)$ is the message pattern where the mediator
first send a message to player 3, then player 1 sends a message t the
mediator, then the mediator sends a second message to player 3, and
then then mediator's second message to player 3 is delivered.  Given this
message pattern, the scheduler can choose to deliver the mediator's
first message to player 3 or the message from player 1 to the
mediator.  Since the mediator sends at most $r$ messages to each
player, and each player sends at most $r$ messages to the mediator,
message patterns have length at most $4rn$.  The messages sent by a
player to the mediator are numbered consecutively, as are the messages
sent  by the mediator to each player.  A straightforward computation
%ivan96: k is undefined. I think you mean 4n.
%then shows that there are at most $(4rn)!/(r!)^{2k}$ message patterns
%joe36: I think it should be 2n, since it's only the ``send'' messages
%that are numbered consecutively
%then shows that there are at most $(4rn)!/(r!)^{4n}$ message patterns
then shows that there are at most $(4rn)!/(r!)^{2n}$ message patterns
of length $4rn$.  It is easy to see that there are fewer message
patterns of length $k$ for $k < 4rn$, so there are clearly at most
%ivan96
%$(4rn)(4rn)!/(r!)^{2k}$ message patterns of length at most $4rn$.
%joe36: again
%$(4rn)(4rn)!/(r!)^{4n}$ message patterns of length at most $4rn$.
$(4rn)(4rn)!/(r!)^{2n}$ message patterns of length at most $4rn$.
A message pattern can have at most $2rn$ undelivered messages, so 
%ivan96
%there are at most $(2rn)^{(4rn)(4rn)!/(r!)^{2k}}$ equivalence classes
%joe36: and again
%there are at most $(2rn)^{(4rn)(4rn)!/(r!)^{4n}}$ equivalence classes
there are at most $(2rn)^{(4rn)(4rn)!/(r!)^{2n}}$ equivalence classes
of schedulers.
  %joe35*: I didn't believe the last step of this argument.  I'll point
%out hwere below.  If you agree with me, I'll leave it to you to
%add the final few sentences of the arugment.
%ivan96: I think we don't need the details here.
A straightforward application of Stirling's approximation formula ($n!
%joe36: easier to parse; made it 4rn to be safe
%\sim n^ne^{-n}\sqrt{2\pi n}$) shows that $R = (2nr)^{4nr}$ suffices
\sim (n/e)^n\sqrt{2\pi n}$) shows that $R = (4rn)^{4rn}$ suffices
for our purposes.
%joe36
Since $rn$ is a bound on the number of messages sent, we have $N=rn$,
completing the proof.
\commentout{
  Thus, a
deterministic scheduler can be seen as a function from such lists to
the next message to be delivered. With this setup we proceed to bound
$\mathcal{S}^{det}_{\Gamma_d,e}/\!\!\sim$: 
Since each player sends at most $r$ messages, the mediator replies at
most $nr$ times, and thus the scheduler's local history consists of at
most $2nr$ messages. There are $(2nr)!$ ways in which these messages
arrive and are received relatively to each other, and therefore there
are at most $((2nr)!)^2$ possible local histories for the
scheduler. Since the scheduler has to pick a message for each of these
%joe35*: it's this step that's incorrect, unless I'm missing
%something.  YOu need to count how many functions there from histories
%(actually, patterns) to choices of which message to deliver.
local histories, there are at most $2nr((2nr)!)^2$ elements in
$\mathcal{S}^{det}_{\Gamma_d,e}/\!\!\sim$. 

The number of messages in our construction is at most $(R+1)n$, in
which $R$ is the minimum number that satisfies $(Rn)! \ge
\left|\mathcal{S}^{det}_{\Gamma_d,e}/\!\!\sim\right|$. Clearly,
$(4rn)! \ge 2nr((2nr)!)^2$, which means that $R \le 4r$ and the number
of messages used in our previous construction is \msgbound. 
}
\end{proof}
%$$
}

%joe44
%This lemma shows that we can assume without loss of generality that we
%are dealing with minimally informative strategies
%ivan104:
%By Lemma~\ref{lem:minimally-informative},
Thus,
without loss of generality, we
can assume that the players and mediator use 
%ivan104:
%is
%joe45
such 
a
%ivan105: not 'defined' yet
%minimally
%informative 
strategy profile.
%ivan105:
We call $\mifunction(\vec{\sigma} + \sigma_d)$ the \emph{minimally
%joe47
%  informative} strategy for $\vec{\sigma} + \sigma_d$. More generally,
  informative} strategy corresponding to $\vec{\sigma} +
\sigma_d$. More generally, 
  we say that $\vec{\sigma}^m + \sigma_d^m$ is a minimally informative
strategy if $\vec{\sigma}^m + \sigma_d^m = \mifunction(\vec{\sigma} +
\sigma_d)$ for some strategy profile $\vec{\sigma} + \sigma_d$. 
%ivan 105

%joe44
%Note that since we are dealing with mediator games
Since we consider only mediator games
in canonical form, this guarantees termination for all honest players
regardless of the strategy of rational and malicious players, provided
that the scheduler is standard (i.e., not relaxed).  
However, once we allow relaxed schedulers, there is a possibility of deadlock.
%joe45*: this isn't quite right.  We don't define the player's payoff,
%but define the action they play in their will (or choose a default action).
%Thus, we must define the players' payoff in this case.
We assume for the purposes of the proof that we use the AH approach in
the mediator game, and have the players play the punishment punishment
strategy in their wills.
Since $\vec{\sigma}_{\ACT}$ guarantees $t$-cotermination 
for $t < n/3$, it follows that in the cheap-talk game, either all
honest players terminate or all honest players play the punishment
%joe44
%strategy. To preserve the correspondence between the outputs, we
strategy.
%joe45
%\arxiv{To preserve the correspondence between the outputs, we}
%strategy. \disc{Thus, we}
%assume that honest players also play the punishment strategy in the
%mediator game in the case of deadlock.
This guarantees that the players get the same payoff in corresponding
histories  in  the mediator game and the cheap-talk game.

%joe26: Ivan, I stopped reading carefully here.  I'll wait for a proof
%of the lemma before reading further.
%joe44
%With this background, we are ready to prove Theorem~\ref{thm:punish}.
%We begin by showing
The next step in proving Theorem~\ref{thm:punish} is to show
  that rational players playing with a relaxed
scheduler cannot get an expected payoff
that is higher than their expected payoff when they play
%ivan105:
such
a 
%ivan104 important
minimally informative
$(k,t)$-robust equilibrium strategy
with a non-relaxed scheduler.

%ivan103:
\arxiv{
%ivan84: re-did all proof. I think it is way simpler now.
\begin{proposition}\label{lemma:better-mediator}
%joe31: don't we want \epsilon > 0
  %  Let $\epsilon \ge 0$.  If $\vec{\sigma}$ is a minimally informative
%joe44
  %  If $\epsilon > 0$, $\vec{\sigma}$ is a minimally informative
  If $\epsilon > 0$, $\vec{\sigma}+\sigma_d$ is a minimally informative
%ivan82:
  %$(k,t)$-robust
  (strongly) $\epsilon$-$(k,t)$-robust
%joe38
%strategy in a mediator game $\Gamma_d$
equilibrium in a mediator game $\Gamma_d$
   %ivan82
%joe29
   %   such that a $(t+k)$-punishment strategy exists,
for which a $(k+t)$-punishment strategy exists,
%joe30
%$A  = (\vec{\tau}_{K \cup T}, \sigma_E)$,  where
$\sigma_E$ 
%joe28
 % is a relaxed scheduler, $|T| \le t$, $|K| \le k$ and $T \cap K =
%joe31
%is a relaxed scheduler, $T$ and $K$ are disjoint sets of players
%  with $|T| \le t$ and $|K| \le k$,
is a relaxed scheduler, $K$ and $T$ are disjoint sets of players
  with $1 \le |K| \le k$ and $|T| \le t$,
  %joe30
and  $\vec{\tau}_{(K \cup T)}$ is a strategy profile for the players in $K
  \cup T$, then
  %ivan84:
  there exists a value $\epsilon_0 < \epsilon$ such that
 %an enhanced scheduler
  %$\sigma_E$, disjoint subsets $T,K \subseteq [n]$ of malicious and
%rational players, respectively, with $|T| \le t$ and $|K| \le k$, 
 %strategy profiles $\vec{\tau}_T$ for players in $T$ and $\vec{\tau}_K$
%joe27
% for players in $K$, there exists a scheduler $\sigma_e$ such that 
 %for players in $K$,
 %ivan84: this is actually stronger
 \commentout{ 
 there exists a (non-relaxed) scheduler
 $\sigma_e$
  such that 
 %ivan81:
 for all input profiles $\vec{x}$
 }
%joe31: don't say ``it holds that''
% for all non-relaxed schedulers $\sigma_e$ it holds that for all input
% profiles $\vec{x}$,    
 for all non-relaxed schedulers $\sigma_e$ and all input
 profiles $\vec{x}$,  we have that
%joe28*: again, the \mu_i's here and below should be u_i's, unless I'm
%missing something.
%joe30: it's *so* much clearer to replace A; the reader can then
%easily spot the difference between the lhs and rhs
% $$u_i'(\Gamma_d, \vec{\sigma}, A, \vec{x}) \le u_i'(\Gamma_d, \vec{\sigma},
%ivan84: restating theorem
\commentout{ 
 $$u_i'(\Gamma_d, (\vec{\sigma}_{(K \cup T)}, \vec{\tau}_{(K \cup T)}), \sigma_E
   \vec{x}_{(K \cup T)}) \le u_i'(\Gamma_d, (\vec{\sigma}_{-(K \cup T)},
 %joe28
 % \sigma_e, \vec{x})$$ for all $i \in \bar{T}$.
  \vec{\tau}_{(K \cup T)}), \sigma_e, \vec{x})$$ for all $i \notin T$. 
  }
  $$u_i'(\Gamma_d, (\vec{\sigma}_{-(K \cup T)}, \vec{\tau}_{(K \cup T)}), \sigma_E, 
   \vec{x}_{(K \cup T)}) < u_i'(\Gamma_d, (\vec{\sigma}_{-T},
%joe31*: did you mean \epsilon_0 here?
   %   \vec{\tau}_{T}), \sigma_e, \vec{x}_T) + \epsilon $$ for all $i
      \vec{\tau}_{T}), \sigma_e, \vec{x}_T) + \epsilon_0 $$ for some (for all) $i
   \notin T$.  
   %$$
\end{proposition}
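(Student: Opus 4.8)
The argument splits according to whether the relaxed scheduler $\sigma_E$ lets the protocol terminate. Because $\vec{\sigma}+\sigma_d$ is minimally informative, the mediator sends its final ($\mathrm{STOP}$) messages to all players at the same step, so the restriction on relaxed schedulers forces an all-or-nothing dichotomy: either every honest player receives its $\mathrm{STOP}$ message and makes the move prescribed by the mediator, or no honest player does and every honest player deadlocks. In the deadlock case, since we use the AH approach and have put the $(k+t)$-punishment strategy $\vec{\rho}$ in the honest players' wills, every honest player plays its component of $\vec{\rho}$. Thus, writing $p$ for the probability of the deadlock event conditioned on $\vec{x}_{(K\cup T)}$, the left-hand side of the target inequality is a convex combination $(1-p)\,V^{\mathrm{term}}_i + p\,V^{\mathrm{dead}}_i$, where $V^{\mathrm{term}}_i$ and $V^{\mathrm{dead}}_i$ are $i$'s expected payoffs conditioned on the two events. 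It therefore suffices to bound $V^{\mathrm{term}}_i$ by $u_i'(\Gamma_d,(\vec{\sigma}_{-T},\vec{\tau}_T),\sigma_e,\vec{x}_T)+\epsilon_0$ for \emph{some} $i\in K$ and to bound $V^{\mathrm{dead}}_i$ by the same quantity for \emph{every} $i\in K$, using one $\epsilon_0<\epsilon$; the convex combination then yields the claim at the witness of the first bound. (In the strong case both bounds hold for every $i\in K$.)

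For the termination event, the plan is to argue that the conditional outcome distribution is realized in $\Gamma_d$ itself by a genuine non-relaxed scheduler together with some strategy profile for $K\cup T$: once the $\mathrm{STOP}$ messages have been delivered, the mediator is finished and every honest player has halted, so the messages $\sigma_E$ never delivers no longer affect the resulting action profile, and we may take a non-relaxed scheduler $\bar\sigma_e$ that delivers them arbitrarily late, paired with a coalition strategy $\vec{\tau}'$ (with $\vec{\tau}'_T=\vec{\tau}_T$) that ignores those late deliveries. Then $V^{\mathrm{term}}_i=u_i'(\Gamma_d,(\vec{\sigma}_{-(K\cup T)},\vec{\tau}'_{K\cup T}),\bar\sigma_e,\vec{x}_{(K\cup T)})$, and the improved form of $\epsilon$-$(k,t)$-robustness (Proposition~\ref{prop:compactness-robustness}) supplies an $\epsilon_0'<\epsilon$, independent of all these choices, with $V^{\mathrm{term}}_i\le u_i'(\Gamma_d,(\vec{\sigma}_{-T},\vec{\tau}_T),\sigma_e,\vec{x}_T)+\epsilon_0'$ for some (resp.\ every) $i\in K$.

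For the deadlock event, the realized action profile has the honest players playing $\vec{\rho}_{-(K\cup T)}$ and the at-most-$(k+t)$ players of $K\cup T$ playing actions determined by $\vec{\tau}_{(K\cup T)}$, $\sigma_E$, and their wills; moreover, conditioning on deadlock does not perturb the distribution of the honest players' types given $\vec{x}_{(K\cup T)}$, since a relaxed scheduler's decisions depend only on the message \emph{pattern}, which is independent of message contents and of honest inputs. Hence the defining property of a $(k+t)$-punishment strategy with respect to $\vec{\sigma}+\sigma_d$ gives, for \emph{every} $i\in K$ and every non-relaxed $\sigma_e^{\ast}$, that $V^{\mathrm{dead}}_i<u_i'(\Gamma_d,\vec{\sigma}+\sigma_d,\sigma_e^{\ast},\vec{x}_{(K\cup T)})$, using that $\Gamma$ is finite so the relevant supremum over coalition actions is a strict maximum. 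It remains to pass from this equilibrium payoff, conditioned on $\vec{x}_{(K\cup T)}$, to the target quantity: instantiating Proposition~\ref{prop:compactness-robustness} with trivial deviations for $K$ relates the equilibrium payoff conditioned on $\vec{x}_{(K\cup T)}$ to the equilibrium payoff conditioned on $\vec{x}_T$ (up to $\delta<\epsilon$), and Proposition~\ref{prop:compactness-immunity} ($\epsilon$-$t$-immunity) then relates the latter to $u_i'(\Gamma_d,(\vec{\sigma}_{-T},\vec{\tau}_T),\sigma_e,\vec{x}_T)$ (up to a further $\delta'<\epsilon$). Chaining these bounds $V^{\mathrm{dead}}_i$ by $u_i'(\Gamma_d,(\vec{\sigma}_{-T},\vec{\tau}_T),\sigma_e,\vec{x}_T)+\epsilon_0''$ for an $\epsilon_0''<\epsilon$. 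Taking $\epsilon_0=\max(\epsilon_0',\epsilon_0'')<\epsilon$ and combining convexly completes the argument.

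The main obstacle is the termination case: rigorously certifying that the outcome conditioned on termination is exactly an outcome of $\Gamma_d$ under an honest (non-relaxed) scheduler, so that $\epsilon$-$(k,t)$-robustness applies — this requires exhibiting $(\bar\sigma_e,\vec{\tau}')$ carefully enough that the ignored late deliveries provably do not change the action profile, and checking that the conditioning on the termination event can be absorbed into $\bar\sigma_e$ and $\vec{\tau}'$. A secondary difficulty is the bookkeeping of which input profile each payoff is conditioned on ($\vec{x}_{(K\cup T)}$ versus $\vec{x}_T$) and, in the weak ("for some $i$") version, forcing a single witness $i\in K$ to serve both the termination bound and the deadlock bound; I expect this to be arranged by applying $\epsilon$-$(k,t)$-robustness to suitably coordinated deviations so that the same witness recurs, the strong version being automatic because every bound there holds for all $i\in K$.
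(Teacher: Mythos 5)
Your plan and the paper's proof share the same ingredients (minimal informativeness $\Rightarrow$ all-or-none STOP delivery, punishment strategy for the deadlock branch, $\epsilon$-robustness for the termination branch, compactness for a uniform $\epsilon_0<\epsilon$), but the order in which you decompose is different, and the difference matters: the gap you yourself flag as the ``main obstacle'' is a real one in your ordering, and the paper's ordering is precisely what removes it.

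You split each run of the fixed randomized adversary $(\vec{\tau}_{(K\cup T)},\sigma_E)$ into the termination event and the deadlock event, then try to certify that the termination-conditioned outcome distribution is an outcome of $\Gamma_d$ under some genuine (non-relaxed) $(\vec{\tau}',\bar\sigma_e)$. That certification is not a matter of bookkeeping: conditioning on ``the protocol terminates'' re-weights the joint law of the adversary's internal coins, the scheduler's coins, and the mediator's/honest players' coins, and there is no reason a single $(\vec{\tau}',\bar\sigma_e)$ playing against the \emph{unconditioned} honest/mediator randomness reproduces that conditional law. Your proposed scheduler ``deliver the forgotten messages arbitrarily late, and let the coalition ignore them'' reproduces the right outcome \emph{on each terminating run}, but it does not by itself reproduce the conditional \emph{distribution} over terminating runs. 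So the reduction from $V^{\mathrm{term}}_i$ to a payoff to which Proposition~\ref{prop:compactness-robustness} applies is missing, not merely ``secondary.''

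The paper avoids this by decomposing the randomized adversary into deterministic atoms $(\vec{\tau}'_{(K\cup T)},\sigma'_E)$ \emph{before} making the termination/deadlock case split. Because the strategy profile is minimally informative, the message pattern that the scheduler (and adversary) see is a deterministic function of the adversary's strategy and $\vec{x}_{(K\cup T)}$, independent of the honest players' inputs and randomness. Hence, for each deterministic atom and each fixed $\vec{x}_{(K\cup T)}$, the dichotomy is all-or-nothing at the level of the \emph{atom}: either every run terminates or every run deadlocks (Lemma~\ref{prop:enhanced-scheduler-deadlock} makes the deadlock condition detectable from the pattern alone). In the terminating atom there is therefore nothing to condition on: one replaces $\sigma'_E$ by the non-relaxed $\sigma'_e$ that acts identically except that, upon detecting a would-be deadlock, it delivers a random remaining message; since the deadlock never triggers for that atom, $\sigma'_E$ and $\sigma'_e$ are indistinguishable to everyone, and Proposition~\ref{prop:compactness-robustness} applies directly. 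In the deadlocking atom, all honest players end in deadlock, so their wills play the punishment strategy and the bound follows from the punishment-strategy definition plus Proposition~\ref{prop:compactness-immunity}. A compactness argument over atoms and inputs then gives the uniform $\epsilon_0<\epsilon$, and averaging over the atoms recovers the statement for the randomized adversary. You should restructure your argument along these lines: decompose into deterministic atoms first, then do the case split per atom, and only then take the mixture; the conditioning you were worried about simply disappears.
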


To prove Proposition~\ref{lemma:better-mediator}, we need a
%joe31
%preliminary lemma which characterizes deadlocks on mediator games in
preliminary lemma that characterizes deadlocks in mediator games in
canonical form with relaxed schedulers. It also shows that the
scheduler can detect when such a deadlock happens.

\begin{lemma}\label{prop:enhanced-scheduler-deadlock}
%joe27*: I don't understand this.  What if some player receives a STOP
%message but other's done.  Why don't we have a deadlock then.  Also,
%use \sigma_E for enhanced schedulers, not \sigma_e^E, which is
%unnecessarily cumbersome.
%joe28
  %  Given a mediator game
A run in a mediator game in canonical form with a relaxed scheduler $\sigma_E$
%ivan81: we need canonical form, since then a player receiving a STOP
%message means that all players receive a stop message (since they are
%all sent at the same time, I'll add a proof for this because this
%might be non-trivial 
%joe28
%in canonical form  with a relaxed scheduler $\sigma_e^E$, then a run
ends in deadlock iff
%ivan81:
%joe28
%at some point no player has received a STOP message and the
%scheduler decides not to deliver any of the messages not yet
at some point in the run no player has received a STOP message and the
scheduler does not deliver any of the messages not yet
delivered. 
\end{lemma}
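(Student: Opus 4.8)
The plan is to prove the two directions of the biconditional separately; the reverse implication is essentially immediate from the definition of a canonical strategy, and I expect essentially all of the work to be in the forward direction. For the ``if'' direction I would argue as follows: suppose that at some point $\pi$ of the run no player has yet received a STOP message and that, from $\pi$ on, the scheduler delivers no further message. Since each honest player follows a canonical strategy, it moves in the underlying game $\Gamma$ only upon receiving a STOP message from the mediator; but no STOP message is delivered before $\pi$ (by hypothesis none has been received) or after $\pi$ (nothing at all is delivered), so no honest player ever moves in $\Gamma$, i.e., the run ends in deadlock.

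For the ``only if'' direction, suppose the run ends in deadlock, so some honest player $i$ never moves in $\Gamma$ and hence, being canonical, never receives a STOP message. The first and most delicate step is to show that in fact \emph{no} player (honest or not) ever receives a STOP message in the run. In a minimally informative strategy the mediator sends all of its STOP messages in a single simultaneous step, and these are the only STOP messages it ever sends; by the definition of a relaxed scheduler, such a simultaneously sent batch is delivered either in its entirety or not at all. If it were delivered in its entirety then, since a relaxed scheduler must still schedule every non-halted player at some later point, player $i$ would eventually receive its STOP message and move in $\Gamma$, contradicting deadlock. Hence the STOP batch is never delivered; and if the mediator never assembles a valid and complete set of messages from $n-k-t$ players, it never sends STOP at all. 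Either way, no player receives a STOP message in the run.

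It then remains to produce the point $\pi$. Because $\sigma_d$ is canonical, the mediator sends at most $r$ messages to each player, hence at most $rn$ messages altogether; because each honest player is canonical, it sends only an initial message to the mediator together with one reply for each non-STOP message it receives from the mediator, and it sends nothing to any other player. So the honest players and the mediator together send only finitely many messages in the run, each delivered at most once, and honest players never reply to messages from senders other than the mediator. Thus only finitely many deliveries of messages to the mediator or to an honest player ever occur; I would take $\pi$ to be a point beyond all of them. At $\pi$, no player has received a STOP message (by the previous step), and no further message to the mediator or to an honest player is delivered; deliveries of messages among the non-honest players may be ignored (those players can be regarded as a single adversary that simulates such communication internally, as in Section~\ref{sec:adversary}), and messages from non-honest players to honest players have no effect, so this is exactly the configuration claimed (and, since the environment has complete information, it can moreover recognize when this configuration has been reached). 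The main obstacle is the step in the preceding paragraph: ruling out that some player other than $i$ receives STOP while $i$ is deadlocked is precisely where the simultaneous broadcast of STOP in a minimally informative strategy, the all-or-none delivery rule for relaxed schedulers, and the requirement that a relaxed scheduler still schedule every non-halted player all have to be combined.
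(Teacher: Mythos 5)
Your proof is correct and follows essentially the same route as the paper's: the entire content of the ``only if'' direction is the observation that the mediator sends all STOP messages simultaneously and a relaxed scheduler delivers such a batch either in full or not at all, so either every player receives STOP or none does. Your additional care in explicitly constructing the point $\pi$ (via the finiteness of messages sent by canonical honest players and the mediator) goes beyond the paper's terse argument but does not change the approach.
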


\begin{proof}
%joe29
%  Clearly if no player has received a STOP messages and all the messages
%  not yet delivered are never delivered the run ends in deadlock. To
  Clearly, if no player has received a STOP messages and all the messages
  not yet delivered are never delivered, then the run ends in deadlock. To
%joe28
%show that all deadlocks must be of this form let us assume that a
%player receives eventually a STOP message. Then, since the mediator
show that all deadlocks must be of this form, assume that a
player receives eventually a STOP message. Then, since the mediator
sent all STOP messages at the same time, all other players are
%joe28
%guaranteed to receive a STOP message as well since a relaxed scheduler
%must deliver either zero or all of them.
guaranteed to receive a STOP message as well, given our assumption
that a relaxed scheduler
delivers either all or none of the messages sent at the same time.
\end{proof}

\begin{proof}[Proof of Proposition~\ref{lemma:better-mediator}]
We can view the adversary's strategy $(\vec{\tau}_{(K \cup T)}, \sigma_E)$ as a combination of (possibly infinitely many) deterministic strategies $(\vec{\tau}'_{(K \cup T)}, \sigma_E')$. Thus, it suffices to show the desired result for each of such deterministic strategies.

Let $(\vec{\tau}'_{(K \cup T)}, \sigma_E')$ be a deterministic
strategy for the adversary in the support of $(\vec{\tau}_{(K \cup
%joe48
%  T)}, \sigma_E)$. Because of the properties of our construction, the
  %fact that the adversary is playing deterministically, and the fact
    T)}, \sigma_E)$. By the properties of our construction, the
fact that the adversary is deterministic, and the fact
that a relaxed scheduler must either deliver all the STOP messages 
from the mediator or deliver
%joe48
%none, it follows that for a given input $\vec{x}_{(K\cup T)}$ either
none, it follows that, for a given input $\vec{x}_{(K\cup T)}$, either
all runs end in deadlock or all honest  
%joe48*: this doesn't make sense.  If either all runs end in deadlock
%or all runs terminate, then we don't need to say ``with probability 1''.
%players terminate with probability 1.
players terminate.

Suppose that for
some deterministic adversary $(\vec{\tau}'_{(K \cup T)}, \sigma_E')$
%joe48: again, I'm not sure what the ``with probability 1'' is doing.
%all runs terminate with probability 1. Consider a non-relaxed
and input $\vec{x}_{K \cup T}$
all honest players terminate. Consider a non-relaxed
scheduler $\sigma_e'$  
%joe48
that
acts just like $\sigma_E'$, except that  whenever it detects a deadlock (as
characterized by Lemma~\ref{prop:enhanced-scheduler-deadlock}, using
the fact that the mediator's $r$th message to each player includes
STOP), it instead delivers a message chosen at random. Then, $\sigma_E'$ and $\sigma_e'$ are indistinguishable when the players 
in $K \cup T$
play $\vec{\tau}'_{(K \cup T)}$. Therefore, there exists a value $\epsilon' < \epsilon$ such that
$$\begin{array}{lll}
& u_i'(\Gamma_d, (\vec{\sigma}_{-(K \cup T)}, \vec{\tau}'_{(K \cup T)}), \sigma_E', 
   \vec{x}_{(K \cup T)}) & \\
 = & u_i'(\Gamma_d, (\vec{\sigma}_{-(K \cup T)}, \vec{\tau}'_{(K \cup T)}), \sigma_e', 
   \vec{x}_{(K \cup T)}) &  \\
 < & u_i'(\Gamma_d,( \vec{\sigma}_{-T}, \vec{\tau}_T), \sigma_e,
 \vec{x}_T) +
\epsilon'
   & \mbox{[by
Proposition~\ref{prop:compactness-robustness}]}\end{array} $$
for some (for all) $i \in T$ and all non-relaxed schedulers
%joe48: notice here there's no ``with probabilty 1''.
$\sigma_e$. Suppose instead that all runs with adversary
$(\vec{\tau}'_{(K \cup T)}, \sigma_E')$ end in deadlock. Then, there
%joe48
%exists a value $\epsilon' < epsilon$ such that
exists a value $\epsilon' < \epsilon$ such that  
%joe48: again, please don't put in a blank line before $$
%
$$\begin{array}{lll} & u_i'(\Gamma_d, (\vec{\sigma}_{-(K \cup T)}, \vec{\tau}_{(K \cup T)}), \sigma_E',  
   \vec{x}_{(K \cup T)}) & \\
    < & u_i'(\Gamma_d, \vec{\sigma}, \sigma_e, \vec{x}_{(K \cup T)}) & \mbox{[by definition of $(t+k)$-punishment strategy}] \\
    < & u_i'(\Gamma_d, (\vec{\sigma}_{-T}, \vec{\tau}_{T}), \sigma_e,
    \vec{x}_T) + 
    \epsilon'
     & \mbox{[by
        Proposition~\ref{prop:compactness-immunity}]}\end{array} $$

Using a compactness argument analogous to that of
Proposition~\ref{prop:compactness-immunity}, there exists a value
$\epsilon_0 < \epsilon$ such that $\epsilon' \le \epsilon_0$ for all
relaxed adversaries $(\vec{\tau}'_{(K \cup T)}, \sigma_E')$ in the support of $(\vec{\tau}_{(K \cup T)}, \sigma_E)$ and all
inputs $\vec{x}_{(K \cup T)}$. So
%joe48: no need for space before and after $$ (indeed, it makes for
%too much space in the file)
%
$$u_i'(\Gamma_d, (\vec{\sigma}_{-(K \cup T)}, \vec{\tau}_{(K \cup T)}), \sigma_E, 
   \vec{x}_{(K \cup T)}) < u_i'(\Gamma_d, (\vec{\sigma}_{-T},
   \vec{\tau}_{T}), \sigma_e, \vec{x}_T) + \epsilon_0$$ for all
   inputs $\vec{x}_{(K \cup T)}$,
   all non-relaxed schedulers $\sigma_e$, all
   strategies $\vec{\tau}_{(T \cup K)}$, and for some (for all) $i 
   \notin T$.
   %joe48: moved here from below
\end{proof}
\commentout{

%ivan100: this proof is all new
%joe40
  %  Fix $\vec{\tau}_T$ and $\vec{\tau}_K$. We can write $\sigma_E$ as a
%ivan105    
    %Fix $\vec{\tau}_T$ and $\vec{\tau}_K$. 
%joe47
  %  Because of the properties of our construction
%ivan106: rewritten 
%From the properties of our construction, 
\commentout{
 and the fact that a relaxed
scheduler must either deliver all the STOP messages
from the mediator or deliver none, it follows that
we can assume without loss of generality (using the same argument as
in the proof of Lemma~\ref{lem:correctness-of-construction}) that
rational and malicious players play strategy profile $\vec{\sigma}$.
}
Using the same argument as in the proof of Lemma~\ref{lem:correctness-of-construction}, for every strategy profile $\vec{\tau}_{(K \cup T)}$ for players in $K \cup T$ and input profile $\vec{x}_{(K \cup T)}$, there exists an input profile $\vec{x}'_{(K \cup T)}$ and a relaxed scheduler $\sigma_E'$ such that 
$$u_i'(\Gamma_d, \vec{\sigma}_{-(K \cup T)}, \vec{\tau}_{(K \cup T)}, \sigma_E, 
   \vec{x}_{(K \cup T)}) = u_i'(\Gamma_d, \vec{\sigma}, \sigma_E', 
   \vec{x}'_{(K \cup T)})$$   
   %ivan106
Thus, it suffices to show that
$$u_i'(\Gamma_d, \vec{\sigma}, \sigma_E, 
   \vec{x}_{(K \cup T)}) < u_i'(\Gamma_d, (\vec{\sigma}_{-T},
      \vec{\tau}_{T}), \sigma_e, \vec{x}'_T) + \epsilon_0$$
      for all non-relaxed schedulers $\sigma_e$ and all input profiles $\vec{x}_{T}$ of players in $T$.

    We can view $\sigma_E$ as a
%joe38*: Ivan, I'm not sure that I understand this.  What does
%\sigma_E^j get as input?  The whole history?  And then does it
%determine when j is scheduled and how long j's messages take to be
%delivered?  If so, you have to say that explicitly.
%ivan101:
  %combination of deterministic relaxed schedulers $\sigma_E^j$.
%  Also,  for each relaxed scheduler $\sigma_E^j$ consider a non-relaxed
 %combination of deterministic relaxed schedulers $\sigma_E^j$, one
  %for each player $j$.
%joe40: actually, I don't think ``combination'' is the right word;
%there may be uncountably many deterministic scehdulers, right?
% combination of (possibly infinitely many) deterministic relaxed
probability distribution on (possibly infinitely many) deterministic relaxed
%joe40: it's a bad idea to use j, since we've been using j for
%players.  This is what convfused me before.  Replaced j by '
%schedulers $\sigma_E^j$. 
% For each relaxed scheduler $\sigma_E^j$, consider a non-relaxed
% scheduler $\sigma_e^j$ that
schedulers $\sigma_E'$. 
 For each relaxed scheduler $\sigma_E'$ in the support of $\sigma_E$,
 consider a non-relaxed 
 scheduler $\sigma_e'$ that  
%joe38
 % acts just like $\sigma_E^j$ but whenever it detects a deadlock (as
  acts just like $\sigma_E'$, except that  whenever it detects a deadlock (as
characterized by Lemma~\ref{prop:enhanced-scheduler-deadlock}, using
the fact that the mediator's $r$th message to each player includes
%joe38
%STOP) it instead delivers a message chosen at random.
STOP), it instead delivers a message chosen at random.

%joe38
%Because of the properties of
%moved up
\commentout{
%ivan105:
%By
%Definition~\ref{def:minimally-informative}
Because of the properties of our construction 
 and the fact that a relaxed
scheduler must either deliver all the STOP messages
%joe38
%from the mediator or deliver none it follows that, fixing
from the mediator or deliver none, it follows that
%ivan105: rewritten 
we can assume without loss of generality (using the same argument as in the proof of Lemma~\ref{lem:correctness-of-construction}) that rational and malicious players play strategy profile $\vec{\sigma}$. 
}
%jo47
%Fixing $\vec{x}_{(K\cup T)}$, with a
%ivan106:
Because of the properties of our construction 
and the fact that a relaxed
scheduler must either deliver all the STOP messages
from the mediator or deliver none it follows that 
for a given input $\vec{x}_{(K\cup T)}$, with a 
deterministic relaxed scheduler, either all runs end in deadlock or all honest
players terminate with probability 1.
%joe40: removed para brak
%
%joe40*: If \sigma_e$ is arbitrary, why should it have anything to do
%with \sigma_E?  Why should \sigma_e be indistinguishable from
%\sigma_E.  I just removed the next sentence.  If you need it, then
%you need more explanation
%Let $\sigma_e$ be an arbitrary non-relaxed scheduler.
Suppose that for
%joe38: what does it mean to run \sigma_E^j in isolation?  When are
%the other players scheduled?  Or perhaps I didn't undersatnd what
%\sigma_E^j does.
%joe40
%some $j$ all runs with scheduler $\sigma_E^j$ terminate with
some deterministic scheduler $\sigma_E'$ in the support of $\sigma_E$
all runs terminate with
%joe40
%probability 1. Then, $\sigma_E^j$ and $\sigma_e^j$ are
probability 1. Then, $\sigma_E'$ and $\sigma_e'$ are
%joe40
%indistinguishable when players play
indistinguishable when the players 
in $K \cup T$
play 
%joe31
%$(\vec{\sigma}_{K \cup T}, \vec{\tau}_{K \cup T})$ and thus there
%ivan105:
%$(\vec{\sigma}_{(K \cup T)}, \vec{\tau}_{(K \cup T)})$
$\vec{\sigma}_{(K \cup T)}$
with input $\vec{x}_{(K \cup T)}$.  Thus, there
%joe40
%exists a value $\epsilon_j < \epsilon$ such that
exists a value $\epsilon' < \epsilon$ such that
%joe37: you don't need a paragraph break before $$ (it puts in too
%much space to have one)
%
%ivan105: out all \tau
$$\begin{array}{lll}
& u_i'(\Gamma_d, \vec{\sigma}, \sigma_E', 
   \vec{x}_{(K \cup T)}) & \\
 = & u_i'(\Gamma_d, \vec{\sigma}, \sigma_e',
   \vec{x}_{(K \cup T)}) &  \\
 < & u_i'(\Gamma_d, \vec{\sigma}, \sigma_e,
%joe38: what's \peilson_0^j?  Did you mean \epsilon_0?
%joe40*: I still don't know what \epsilon_0^j is.  Did you mean
%\epsilon_j (which should now be changed to \epsilon')?
 \vec{x}_T) +
%ivan102 
%  \epsilon_0^j
\epsilon'
   & \mbox{[by
     Proposition~\ref{prop:compactness-robustness}]}\end{array} $$ 
%joe38
%
 for some (for all) $i
   \notin T$
   %ivan102
   and all non-relaxed schedulers $\sigma_e$.  
   Suppose instead that all runs with scheduler
%joe40
%   $\sigma_E'$ end in deadlock. Then, there exists a value $\epsilon'
  $\sigma_E'$ end in deadlock. Then there exists a value $\epsilon'   
   < 
\epsilon$ such that
 $$\begin{array}{lll} & u_i'(\Gamma_d, \vec{\sigma}, \sigma_E',  
   \vec{x}_{(K \cup T)}) & \\
    < & u_i'(\Gamma_d, \vec{\sigma}, \sigma_e, \vec{x}_{(K \cup T)}) & \mbox{[by definition of $(t+k)$-punishment strategy}] \\
    < & u_i'(\Gamma_d, \vec{\sigma}, \sigma_e,
    %joe38: again, what's \peilson_0^j?
    %joe40: again, what's \peilson_0^j?
    \vec{x}_T) + 
	%ivan102:
    %\epsilon_0^j
    \epsilon'
     & \mbox{[by
%joe40
%        Proposition~\ref{prop:compactness-immunity}.]}\end{array} $$
        Proposition~\ref{prop:compactness-immunity}]}\end{array} $$
        for some (for all) $i 
   \notin T$
   %ivan102
   and all non-relaxed schedulers $\sigma_e$.      
%joe38

%ivan102: rewritten
\commentout{
%   Using an analogous argument as
   %in Proposition~\ref{prop:compactness-immunity} we can pick these
      Using an argument analogous to that of
Proposition~\ref{prop:compactness-immunity}, we can pick these 
%ivan102:
%$\epsilon_j$
$\epsilon'$
values such that they hold for all inputs $\vec{x}_{(K
%joe40*: why should \sup_j \epsilon_j < \epsilon.  It could be \le'
%  \cup T)}$ and such that $\epsilon_0 := \sup_{j}\{\epsilon_j\} <
  %\epsilon$, and thus it holds that
  %ivan102: similar compactness argument as in the proposition. In fact, a slightly different argument as in the proof of the proposition works.
    \cup T)}$. Note that $\sup_{\sigma_e' \mbox{ in support
    of $\sigma_E$}}\epsilon' 
%ivan102 we can use < because of compactness    
    %\le
    < 
    \epsilon$, so
%joe38
%
}
Using a compactness argument analogous to that of
Proposition~\ref{prop:compactness-immunity}, there exists a value
$\epsilon_0 < \epsilon$ such that $\epsilon' \le \epsilon_0$ for all
relaxed schedulers $\sigma_E'$ in the support of $\sigma_E$ and all
%joe41
%inputs $\vec{x}_{(K \cup T)}$. So,
inputs $\vec{x}_{(K \cup T)}$. So
%joe47*: you can't suddenly reintroduce the tau's here!  I think we
%need to say early that u_i'( \sigma_{K \cup T}, \tau_{K \cup T},
%\sigma_e') = u_i'(\sigma, \sigma^*), and similarly for \sigma_E.
%Then you can work with \vec{\sigma}, and then add the tau's back at
%the end.  
%ivan106:
\commentout{
$$u_i'(\Gamma_d, (\vec{\sigma}_{-(K \cup T)}, \vec{\tau}_{(K \cup T)}), \sigma_E, 
   \vec{x}_{(K \cup T)}) < u_i'(\Gamma_d, (\vec{\sigma}_{-T},
%joe40: is this what you meant
%   \vec{\tau}_{T}), \sigma_e, \vec{x}_T) + \epsilon_0 $$ for all
   \vec{\tau}_{T}), \sigma_e, \vec{x}_T) + \epsilon'$$ for all
   inputs $\vec{x}_{(K \cup T)}$,
   %ivan105:
   all non-relaxed schedulers $\sigma_e$, all
   strategies $\vec{\tau}_{(T \cup K)}$, and for some (for all) $i 
   \notin T$ 
   %ivan102
   %ivan105:
   %and all non-relaxed schedulers $\sigma_e$.  
   }
   $$u_i'(\Gamma_d, \vec{\sigma}, \sigma_E, 
   \vec{x}_{(K \cup T)}) < u_i'(\Gamma_d, (\vec{\sigma}_{-T},
%joe40: is this what you meant
%   \vec{\tau}_{T}), \sigma_e, \vec{x}_T) + \epsilon_0 $$ for all
   \vec{\tau}_{T}), \sigma_e, \vec{x}_T) + \epsilon'$$
   }
%joe48: moved back
%\end{proof}

%joe31
%An analogous argument can be used for the error-less case:
An analogous argument can be used for the errorless case:
}

\begin{proposition}\label{lemma:better-mediator-errorless}
%joe31*: I assume that there should e no \epsilon, because this is the
%errorless case, right?
%  Let $\epsilon \ge 0$. 
%joe44
  %  If $\vec{\sigma}$ is a minimally informative
%joe45*: Since we haven't formally defined ``minimally informative'',
%the next lemma isn't well defined.  It's OK for the DISC submission,
%but for the arxiv version, we have to make it more formal.  Perhaps
%we can say that \vec{\sigma}+\sigma_d is minimally inforamtive if it
%is the output of the procedure that we provide, starting with an
%arbitrary input. 
  If $\vec{\sigma}+\sigma_d$ is a minimally informative
  $(k,t)$-robust
%joe45*: see the next %joe45 comment
  %  strategy in a mediator game $\Gamma_d$
    strategy in a mediator game $\Gamma_d(u_i')$
for which a $(k+t)$-punishment strategy exists,
$\sigma_E$ 
  is a relaxed scheduler, $T$ and $K$ are disjoint sets of players
  with $|T| \le t$ and $1 \le |K| \le k$,
and  $\vec{\tau}_{(K \cup T)}$ is a strategy profile for the players in $K
  \cup T$, then
 there exists a (non-relaxed) scheduler
 $\sigma_e$
  such that 
%joe44
  %  for all input profiles $\vec{x}$
for all input profiles $\vec{x}$ and all $i \notin T$,
%joe44: what's u_i' here.  You have to explain it.  I tried to do
%that above.
$$u_i'(\Gamma_d, (\vec{\sigma}_{-(K \cup T)}, \vec{\tau}_{(K \cup
%joe44
%  T)}), \sigma_E, \vec{x}_{(K \cup T)}) \le u_i'(\Gamma_d, (\vec{\sigma}_{-T},
  T)}), \sigma_d, \sigma_E, \vec{x}_{(K \cup T)}) \le u_i'(\Gamma_d, (\vec{\sigma}_{-T},
  %joe44: line shaving
   %   \vec{\tau}_{T}), \sigma_e, \vec{x}_T)$$ for all $i \notin T$.
      \vec{\tau}_{T}), \sigma_d, \sigma_e, \vec{x}_T).$$ 
\end{proposition}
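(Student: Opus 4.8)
The plan is to transcribe the proof of Proposition~\ref{lemma:better-mediator}, simply dropping the error term: wherever that argument invokes the compactness estimates of Propositions~\ref{prop:compactness-immunity} and~\ref{prop:compactness-robustness}, we instead invoke their exact counterparts, Propositions~\ref{prop:adversary-immunity} and~\ref{prop:adversary-robustness}, which now apply because $\vec{\sigma}+\sigma_d$ is an exact $(k,t)$-robust equilibrium rather than only an $\epsilon$-robust one. One extra simplification is available here: by Corollary~\ref{lemma:scheduler-proof} the quantity $u_i'(\Gamma_d,(\vec{\sigma}_{-T},\vec{\tau}_T),\sigma_d,\sigma_e,\vec{x}_T)$ on the right-hand side is the same for every non-relaxed scheduler $\sigma_e$ (for $i\notin T$), so it suffices to fix one arbitrary non-relaxed $\sigma_e$ and show that the left-hand side is at most this common value, for every input profile.

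First I would decompose the adversary $(\vec{\tau}_{(K\cup T)},\sigma_E)$ into deterministic pairs $(\vec{\tau}'_{(K\cup T)},\sigma_E')$ and treat one at a time, recombining at the end. For a fixed deterministic adversary and input, the reasoning used in the proof of Lemma~\ref{lem:correctness-of-construction} lets me assume the players in $K\cup T$ deviate only by misreporting their inputs: every other deviation (omitting a message, sending it late, sending a malformed message) is absorbed into the relaxed scheduler $\sigma_E'$, because $\mifunction(\sigma_d)$ ignores any message that is not part of a valid, complete set from a player in its chosen set $P$. Since $\vec{\sigma}+\sigma_d$ is minimally informative, hence in canonical form, and the mediator sends all its STOP messages simultaneously, a relaxed scheduler delivers all of them or none; so Lemma~\ref{prop:enhanced-scheduler-deadlock} gives a clean dichotomy for this deterministic adversary and input: either every honest player terminates, or the run deadlocks at a point where no STOP message has been delivered and none ever is.

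In the termination branch I would replace $\sigma_E'$ by a non-relaxed scheduler agreeing with it until all honest players halt and then delivering the remaining messages arbitrarily; nothing delivered after termination affects the action profile, so this reproduces the outcome exactly (which is where errorlessness comes from). The resulting payoff is that of a non-relaxed deviation by $K\cup T$ with some reported inputs, and Proposition~\ref{prop:adversary-robustness} bounds it by $u_i'(\Gamma_d,(\vec{\sigma}_{-T},\vec{\tau}_T),\sigma_d,\sigma_e,\vec{x}_T)$ for the relevant $i\in K$ (some $i$ in the non-strong case, every $i$ in the strong case). In the deadlock branch, the honest players --- using the AH approach in the mediator game with the punishment strategy $\vec{\rho}$ in their wills --- all play $\vec{\rho}$ while the at most $k+t$ players in $K\cup T$ play arbitrary actions; the definition of a $(k+t)$-punishment strategy with respect to $\vec{\sigma}+\sigma_d$ then forces their payoff strictly below $u_i'(\Gamma_d,\vec{\sigma},\sigma_d,\sigma_e,\vec{x}_{(K\cup T)})$ for every $i\in K\cup T$, which Proposition~\ref{prop:adversary-immunity} in turn bounds by $u_i'(\Gamma_d,(\vec{\sigma}_{-T},\vec{\tau}_T),\sigma_d,\sigma_e,\vec{x}_T)$ for every $i\notin T$. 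Recombining the termination-branch components into a single randomized non-relaxed adversary before applying Proposition~\ref{prop:adversary-robustness} picks out one $i_0\in K$ that works for that contribution; since the deadlock bound holds for every player in $K\cup T$, it holds for $i_0$ as well, and averaging over the mixture gives the claim --- with the fixed $\sigma_e$ independent of $\vec{x}$, as required.

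The step I expect to be the main obstacle is the dichotomy together with the reduction to input-misreporting: one must check that the relaxed scheduler's freedom to withhold messages is exactly what makes ``not sending'' or ``sending late'' indistinguishable from a scheduling choice in the minimally informative construction, that the all-or-nothing delivery of the simultaneous STOP messages genuinely rules out partial deadlocks, and that the non-relaxed scheduler built in the termination branch reproduces the outcome distribution on the nose, so that the conclusion is errorless rather than $\epsilon$-approximate. Everything past that is a routine reprise of the proof of Proposition~\ref{lemma:better-mediator}.
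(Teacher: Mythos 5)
Your proposal is essentially the paper's argument: the paper proves the $\epsilon$-version (Proposition~\ref{lemma:better-mediator}) in full and then states that ``an analogous argument can be used for the errorless case,'' expecting exactly the adaptation you describe --- decomposing into deterministic adversaries, handling the termination branch via a non-relaxed scheduler that mimics the relaxed one until the honest players halt and then applying Proposition~\ref{prop:adversary-robustness}, and handling the deadlock branch via the $(k+t)$-punishment strategy and Proposition~\ref{prop:adversary-immunity}. Your two extra observations --- fixing a single non-relaxed $\sigma_e$ up front by Corollary~\ref{lemma:scheduler-proof}, and making explicit the reduction to input-misreporting using the minimally informative construction --- are sound and merely spell out steps that the paper's $\epsilon$-proof gestures at as ``the properties of our construction.''
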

%ivan103:
%joe44*: there should be a corresponding argument in the scope of
%\arxiv here
\disc{Intuitively, since no information is leaked, the expected outcome for
%joe44*: is this what you meant?
%rational players is invariant throughout any run before receiving
%their first STOP message. Thus, it is not in their interest that the
%computation ever halts. A complete proof is given at the full
%paper~\cite{fullPaper_}. 
rational players conditional on their history does not change before
receiving a STOP message; thus, they have no incentive to deadlock the
computation.  A complete proof is given in the full paper.}

%joe28
%With this lemma, we are able to show $(k,t)$-robustness. Let
%$\sigma'_e$ be any scheduler in the cheap talk game, and let $T$ and
%ivan82: added utility variants
We can now prove 
%ivan100
%joe38: 
%(strongly)
(strong)
$(k,t)$-robustness. Let
%ivan82:
%joe29*: the utility variants are games, not utility functions, it
%should be \Gamma_d, and we want to start with an equilibrium in \Gamma_d
%$\vec{u}$ be any utility variant of $\Gamma$ such that
%$\vec{\sigma}_{\ACT}$ is a $(k,t)$-robust equilibrium, let 
$\Gamma_d(\vec{u}')$ be any utility variant of $\Gamma_d$ such that
%joe31
%$\vec{\sigma}_d$ is a $(k,t)$-robust equilibrium of $\Gamma_d(\vec{u}')$, let 
$\vec{\sigma}+\vec{\sigma}_d$ is a $(k,t)$-robust equilibrium of $\Gamma_d(\vec{u}')$, let 
$\sigma'_e$ be any scheduler in $\Gamma_{\ACT}(\vec{u})$, and let $K$ and
%joe28
%$K$ be disjoint subsets of $[n]$ of sizes $t$ and $k$ respectively
%joe31
%$T$ be disjoint subsets of sizes $k$ and $t$, respectively,
$T$ be disjoint subsets of players with $1 \le |K| \le k$ and $|T| \le t$,
respectively, 
%joe31
%such that $4t+3k < n$. Also, let $\vec{\tau}_K$ and $\vec{\tau}_T$
such that $3k+4t < n$. Let $\vec{\tau}_K$ and $\vec{\tau}_T$
be strategy profiles for players in $K$ and $T$, respectively.
%joe28*: you need to refer to a theorem here.
%joe29*: you need to refer to a theorem here.
%By relaxed $t$-emulation we have that there exists a function $H$ and
By 
%ivan83
%relaxed $t$-emulation, 
Theorem~\ref{thm:errorless-simulation}, 
there exists a function $H$ and
a relaxed scheduler $\sigma_E$ in the cheap-talk game such
that $$u_i'(\Gamma_{\ACT}(\vec{u}'), (\vec{\sigma}_{\ACT})_{-(K \cup T)},
\vec{\tau}_K, \vec{\tau}_T), \sigma'_e, \vec{x}_{(K \cup T)}) = u_i'(\Gamma_{d}(\vec{u}'),
%joe44
%(\vec{\sigma}_{-(K \cup T)}, H(\vec{\tau}_K), H(\vec{\tau}_T)),
(\vec{\sigma}_{-(K \cup T)}, H(\vec{\tau}_K), H(\vec{\tau}_T)), \sigma_d,
\sigma_E, \vec{x}_{(K \cup T)})$$ for 
%ivan100
%all
some (for all)
 $i \notin T$ and all input profiles $\vec{x}$. 
%ivan84: changed lemma -> this changes
%ivan103:
We can assume without loss of generality that $(\vec{\sigma},
%joe48
%\sigma_d)$ are minimally informative. Thus,
\sigma_d)$ is minimally informative. Thus, 
%By
by Proposition~\ref{lemma:better-mediator-errorless}, there exists a
non-relaxed 
scheduler $\sigma_e$ such that
$$u_i'(\Gamma_{d}(\vec{u}'), (\vec{\sigma}_{-(K \cup T)},
%joe44
%H(\vec{\tau}_T), H(\vec{\tau}_K)), \sigma_E, \vec{x}_{(K \cup T)}) \le
H(\vec{\tau}_T), H(\vec{\tau}_K)),\sigma_d, \sigma_E, \vec{x}_{(K \cup T)}) \le
u_i'(\Gamma_{d}(\vec{u}'), (\vec{\sigma}_{-T}, H(\vec{\tau}_T)),
%joe38
%\sigma_e, \vec{x}_{(K \cup T)})$$
%joe44
%\sigma_e, \vec{x}_{(K \cup T)}).$$
\sigma_d,\sigma_e, \vec{x}_{(K \cup T)}).$$ 
%joe28: again, you have to refer to refer to a theorem here
%Finally, relaxed $t$-emulation gives that there exists a non-relaxed
%joe29: wrong theorem
%Finally, by Theorem~\ref{thm:error-simulation} there exists a non-relaxed
Finally, by Theorem~\ref{thm:errorless-simulation}, there exists a non-relaxed
scheduler $\sigma_e''$ such that $$\begin{array}{lll} &
  u_i'(\Gamma_{\ACT}(\vec{u}'), ((\vec{\sigma}_{\ACT})_{-T}, \vec{\tau}_T),
  \sigma'_e, \vec{x}_T) & \\ 
= &  u_i'(\Gamma_{d}(\vec{u}'), (\vec{\sigma}_{-T}, H(\vec{\tau}_T)),
%joe44
\sigma_d,
\sigma_e'', \vec{x}_T) & \mbox{[by Theorem~\ref{thm:errorless-simulation}]} \\ 
= & u_i'(\Gamma_{d}(\vec{u}'), (\vec{\sigma}_{-T}, H(\vec{\tau}_T)),
%joe44
%\sigma_e, \vec{x}_T) &
\sigma_d,\sigma_e, \vec{x}_T) & 
%joe42
%\mbox{[by Lemma~\ref{lemma:scheduler-proof}.]} \end{array}$$
\mbox{[by Corollary~\ref{lemma:scheduler-proof}].} \end{array}$$ 
Therefore, $$u_i'(\Gamma_{\ACT}(\vec{u}'),
((\vec{\sigma}_{\ACT})_{-(K \cup T)}, \vec{\tau}_K, \vec{\tau}_T),
\sigma'_e, \vec{x}_{(K \cup T)}) \le u_i'(\Gamma_{\ACT}(\vec{u}'), ((\vec{\sigma}_{\ACT})_{-T},
%joe28
\vec{\tau}_T), \sigma'_e, \vec{x}_T),$$ as desired. 
%joe51*: is the following paragraph correct.  Please check carefully!
\wbox

We remark that, with a little more effort, we can show that the
minimally informative strategy $f(\vec{\sigma}+\sigma_d)$ that
implements $\vec{\sigma}+\sigma_d$ is actually a $t$-bisimulation and
a $t$-emulation of of $\vec{\sigma}+\sigma_d$.  Moreover, the cheap-talk
strategy that implements $f(\vec{\sigma}+\sigma_d)$ preserves these
properties.  Thus, under the conditions of Theorem~\ref{thm:punish},
we can get a cheap-talk strategy that $t$-bisimulates and $t$-emulates
a strategy in the mediator game.

%ivan103:
%joe43: put in \disc environment
%joe44
%\disc{To prove Theorem~\ref{thm:punish-eps}, we use an analogous strategy 
\disc{To prove Theorem~\ref{thm:punish-eps}, we use an argument analogous
to that used to prove Theorem~\ref{thm:punish}, using
Theorem~\ref{thm:error-simulation} instead of
Theorem~\ref{thm:errorless-simulation}.
%joe43
%Details can be found at the full paper~\cite{fullPaper_}.
We leave details to the full paper.}

%ivan103
\arxiv{
\subsection{Proof of Theorem~\ref{thm:punish-eps}}

To prove Theorem~\ref{thm:punish-eps}, we use an analogous strategy 
%ivan35:
%joe29
%to that of Theorem~\ref{thm:punish}, but using
%Theorem~\ref{thm:error-simulation} instead. The same argument as in
%the proof  Theorem~\ref{thm:upperbound-default-no-punish1} gives that
%  with $O(nNc)$ expected number of messages and $\vec{\sigma}_{\ACT}$
%  is $(\epsilon, t)$-immune. 
to that used to prove Theorem~\ref{thm:punish}, using
Theorem~\ref{thm:error-simulation} instead of
Theorem~\ref{thm:errorless-simulation}. The same argument as that used in
the proof  Theorem~\ref{thm:upperbound-default-no-punish1} shows that
for all $\epsilon' \in (0,1]$ there exists a protocol
%joe31
%  $\vec{\sigma}_{\ACT}$ that $\epsilon'$-implements $\vec{\sigma}$
  $\vec{\sigma}_{\ACT}$ that $\epsilon'$-implements $\vec{\sigma}+\sigma_d$
  %ivan83: This might not be true and it does not appear in the theorem
 % using $O(nNc)$ messages in expectation
   and that $\vec{\sigma}_{\ACT}$
  is $(\epsilon, t)$-immune. 

%joe29
  %  To prove $\epsilon$-$(k,t)$-robustness fix an adversary $A =
    To prove 
    %ivan83
%joe38
    %    (strongly)
        (strong) 
    $\epsilon$-$(k,t)$-robustness, fix an adversary $A =
(\vec{\tau}_K, \vec{\tau}_T, \sigma_e)$ for subsets $K,T$ such that
$1 \le |K| \le k$, $|T| \le t$ and $K \cap T = \emptyset$. By
%joe29
%Theorem~\ref{thm:error-simulation} there exists a function $H$ from
Theorem~\ref{thm:error-simulation}, there exists a function $H$ from
    strategies to strategies and a relaxed scheduler $\sigma_E$ such
%joe38
    %    that for all input profiles $\vec{x}$ $$u_i(\Gamma_{\ACT},
        that, for all input profiles $\vec{x}$, $$u_i(\Gamma_{\ACT},
((\vec{\sigma}_{\ACT})_{-(K \cup T)}, \vec{\tau}_K, 
\vec{\tau}_T), \sigma_e, \vec{x}_{(K \cup T)}) < u_i(\Gamma_d, (\vec{\sigma}_{-(K \cup T)},
%joe29
%H(\vec{\tau}_K), H(\vec{\tau}_T), \sigma_E, \vec{x}_{(K \cup T)}) + \epsilon'M$$ 
%
%By Proposition~\ref{lemma:better-mediator} there exists a non-relaxed
%joe44
%H(\vec{\tau}_K), H(\vec{\tau}_T)), \sigma_E, \vec{x}_{(K \cup T)}) +
H(\vec{\tau}_K), H(\vec{\tau}_T)),  \sigma_d, \sigma_E,\vec{x}_{(K \cup T)}) +
%joe38
%\epsilon'M.$$
\epsilon'M$$  
%ivan83:
%joe31
%for all (resp. for some) $i \in K$, where $M$ is the maximum payoff
%that an honest player can get.
%ivan100 isn't this the other way around?
%joe38: yes, it's correct now
%for all (resp. for some)
for some (resp. for all)
$i \in K$.

%joe38
%By Theorem~\ref{thm:error-simulation} there exists a non-relaxed
By Theorem~\ref{thm:error-simulation}, there exists a non-relaxed
scheduler $\sigma'_e$ such that  
$$u_i(\Gamma_{\ACT}, ((\vec{\sigma}_{\ACT})_{-T}, \vec{\tau}_T), \sigma_e, \vec{x}_T)
%joe29
%> u_i(\Gamma_d, \vec{\sigma}_d, H(\vec{\tau}_T), \sigma''_e) - \epsilon'M$$
%joe44
%> u_i(\Gamma_d, (\vec{\sigma}_{-T}, H(\vec{\tau}_T)), \sigma'_e,
> u_i(\Gamma_d, (\vec{\sigma}_{-T}, H(\vec{\tau}_T)), \sigma_d, \sigma'_e,
\vec{x}_T) - \epsilon'M.$$ 
Thus, 
%ivan83
we have that
$$\begin{array}{lll}
& u_i(\Gamma_{\ACT}, ((\vec{\sigma}_{\ACT})_{-(K \cup T)}, \vec{\tau}_K, \vec{\tau}_T), \sigma_e, \vec{x}_{(K \cup T)}) & \\
%joe44
  %  < & u_i(\Gamma_d, (\vec{\sigma}_{-T}, H(\vec{\tau}_T)), \sigma_e',
    < & u_i(\Gamma_d, (\vec{\sigma}_{-T}, H(\vec{\tau}_T)), \sigma_d, \sigma_e',
\vec{x}_T) + \epsilon_0 + \epsilon'M&   \mbox{[by
    Proposition~\ref{lemma:better-mediator}]}\\ 
< & u_i(\Gamma_{\ACT}, ((\vec{\sigma}_{\ACT}){-T}, \vec{\tau}_T), \sigma_e, \vec{x}_T) +
%joe29
%\epsilon + 2\epsilon'M &
\epsilon_0 + 2\epsilon'M. & 
\end{array}$$
for some $\epsilon_0 < \epsilon$. Therefore, 
%ivan83P
%picking a sufficiently small value of $\epsilon'$
%joe38
%picking $\epsilon' := (\epsilon - \epsilon_0)/2M$
taking $\epsilon' := (\epsilon - \epsilon_0)/2M$,
 we have
that $\vec{\sigma}_{\ACT}$ is a 
%ivan83
(strongly)
$\epsilon$-$(k,t)$-robust equilibrium
for $\Gamma_{\ACT}$. 

%joe51
Again, as was the case for Theorem~\ref{thm:punish}, with a little
more effort we can show that under the conditions of
Theorem~\ref{thm:punish-eps}, we can get a cheap-talk strategy that
$t$-bisimulates and $t$-emulates a strategy in the mediator game.
}

  %**************************************************
%joe7*: added
%\appendix
%joe29
%\section{Extending Nash Equilibrium to $(k,t)$ Robust Equilibrium in
%  Asynchronous Games}\label{sec:def-more}

\bibliography{game1,joe}

%%%%% ENDS HERE

%%
%% Bibliography
%%

%% Either use bibtex (recommended), 
%ivan18
%\bibliography{lipics-v2016-sample-article}

%% .. or use the thebibliography environment explicitely

\end{document}